\documentclass[11pt,a4paper]{article}
\usepackage[T1]{fontenc}
\usepackage[margin=1in]{geometry}
\usepackage{doi}
\usepackage{amsthm}
\theoremstyle{plain}
\newtheorem{theorem}{Theorem}[section]
\newtheorem{proposition}[theorem]{Proposition}
\newtheorem{lemma}[theorem]{Lemma}
\newtheorem{corollary}[theorem]{Corollary}
\newtheorem{assumption}[theorem]{Assumption}
\theoremstyle{definition}
\newtheorem{definition}[theorem]{Definition}
\newtheorem{example}[theorem]{Example}
\theoremstyle{remark}
\newtheorem{remark}[theorem]{Remark}

\usepackage{graphicx}
\usepackage{csquotes}
\usepackage{listings}
\usepackage[svgnames]{xcolor}
\usepackage{colortbl}
\usepackage{comment}
\usepackage{appendix}

\usepackage{amsmath}
\usepackage{amssymb}
\usepackage{physics}
\usepackage{braket}
\usepackage{mathrsfs}
\allowdisplaybreaks

\usepackage{booktabs}
\usepackage{multirow}
\usepackage{makecell}

\usepackage{algorithm}
\usepackage{algpseudocode}
\AddToHook{cmd/appendix/before}{}

\usepackage{tikz}
\usetikzlibrary{decorations.pathreplacing, positioning, matrix, calc, fit}
\tikzset{
    matrix/.style n args={2}{
        minimum height={#1 cm},
        minimum width={#2 cm},
        draw, fill=.!10,
        anchor=north west,
    },
    every label/.style={font=\Large},
    eqn/.style={font=\Large\boldmath, right}, 
}
\usetikzlibrary{backgrounds, arrows.meta}

\definecolor{darkgray}{rgb}{0.31,0.31,0.33}
\definecolor[named]{lipicsGray}{rgb}{0.31,0.31,0.33}
\definecolor[named]{lipicsBulletGray}{rgb}{0.60,0.60,0.61}
\definecolor[named]{lipicsLineGray}{rgb}{0.51,0.50,0.52}
\definecolor[named]{lipicsLightGray}{rgb}{0.85,0.85,0.86}
\definecolor[named]{lipicsYellow}{rgb}{0.99,0.78,0.07}

\algnewcommand{\IfThenElse}[3]{%
   \algorithmicif\ #1\ \algorithmicthen\ #2\ \algorithmicelse\ #3}
\algnewcommand{\GetsIfThenElse}[3]{%
   #1\ \algorithmicif\ #2 \algorithmicelse\ #3}
\algnewcommand{\IfThen}[2]{%
   \algorithmicif\ #1\ \algorithmicthen\ #2}

\newcommand{\F}{\mathbb{F}}

\newcommand{\used}{YES}
\newcommand{\notused}{NO}
\newcommand{\slwe}{\textsf{Search}-LWE}
\newcommand{\dlwe}{\textsf{Decision}-LWE}
\newcommand{\belwe}{\textsf{BinaryError}-LWE}
\newcommand{\bslwe}{\textsf{BinarySecret}-LWE}

\newcommand{\LT}{\ensuremath{\mathrm{LT}}}
\newcommand{\LM}{\ensuremath{\mathrm{LM}}}

\usepackage{hyperref} 
\usepackage{cleveref}
\usepackage{orcidlink}

\begin{document}
 
\title{A New Algebraic Algorithm for LWE}
\author{
  Luca Campa\orcidlink{0009-0004-6916-1918}, \quad Massimo Fumiani\orcidlink{0009-0002-4605-4314}, \quad Arnab Roy\orcidlink{0000-0002-3284-7076} \\[0.5em]
  \small University of Innsbruck, Innsbruck 6020, Austria \\
  \small \texttt{\{Luca.Campa,Massimo.Fumiani,Arnab.Roy\}@uibk.ac.at}
}
\date{}
\maketitle

\begin{abstract}
The Learning With Errors (LWE) problem, introduced by Regev in 2005, is central to modern cryptography and post-quantum security. The algorithms to solve the search version of the problem, \slwe\, can be broadly categorised into algebraic, combinatorial and lattice-based. 

In this work we propose a new algebraic algorithm for the \slwe\ problem. At a high level, the algorithm combines linear-algebraic techniques with S-polynomial-based methods from Gr{\"o}bner basis computation. We provide a direct complexity analysis of our algorithm, avoiding semi-regularity assumptions and complexity bounds derived from the degree of regularity. Our algorithm achieves a polynomial improvement in complexity over prior results that use Gr{\"o}bner basis methods to solve \slwe.
\end{abstract}

\section{Introduction}\label{sec:introduction}
The Learning With Errors (LWE) problem, introduced by Oded Regev~\cite{STOC:Regev05}, is a natural generalization of the learning parity with noise problem over finite fields.

\begin{definition}[LWE~\cite{EPRINT:AlbPlaSco15,STOC:Regev05}]\label{def:LWE}
    Let $n$, $q$ be positive integers, $\chi$ be a probability distribution on $\mathbb{Z}$ and $\mathbf{s}$ be a secret vector in $\mathbb{Z}_q^n$.
    We denote by $L_{\mathbf{s},\chi}$ the probability distribution on $\mathbb{Z}_q^n\times\mathbb{Z}_q$ obtained by choosing $\mathbf{a}\in\mathbb{Z}_q^n$ uniformly at random, choosing $e\in\mathbb{Z}$ according to $\chi$ and considering it in $\mathbb{Z}_q$, and returning $(\mathbf{a},b) = (\mathbf{a}, \langle \mathbf{a}, \mathbf{s} \rangle + e)\in\mathbb{Z}_q^n\times\mathbb{Z}_q$.
    \begin{itemize}
        \item \dlwe\ is the problem of deciding whether pairs $(\mathbf{a},b)\in\mathbb{Z}_q^n\times\mathbb{Z}_q$ are sampled according to $L_{\mathbf{s},\chi}$ or the uniform distribution on $\mathbb{Z}_q^n\times\mathbb{Z}_q$.
        \item \slwe\ is the problem of recovering $\mathbf{s}$ from $(\mathbf{a},b) = (\mathbf{a}, \langle \mathbf{a}, \mathbf{s} \rangle + e)\in\mathbb{Z}_q^n\times\mathbb{Z}_q$ sampled according to $L_{\mathbf{s},\chi}$.
    \end{itemize}
  \end{definition}

Since its introduction, the LWE problem has been investigated in the algorithms and cryptography literature. In particular, the \emph{Bounded Errors} case was explored in~\cite{EPRINT:ACFP14,ICALP:AroGe11,EC:NMSU25,EPRINT:Steiner24b}. Introduced by Arora and Ge~\cite{ICALP:AroGe11}, an \emph{algebraic algorithm} generates a system of error-free multivariate polynomial equations from the LWE samples and solves the system using a linearization technique. In~\cite{EPRINT:ACFP14}, the complexity of the Arora-Ge technique was improved by using Gr{\"o}bner basis methods to solve the polynomial system of equations. Recent analyses~\cite{EC:NMSU25,EPRINT:Steiner24b} have further improved the complexity of this approach.

The \emph{Lattice-based algorithms}, using primal and dual approaches that reduce LWE to BDD/uSVP or SIS and rely on lattice reduction techniques such as BKZ and hybrid methods~\cite{EC:Albrecht17,EPRINT:AlbPlaSco15,RSA:LinPei11}, are the most efficient algorithms for solving LWE.\@
\emph{Combinatorial algorithms}, notably BKW-based~\cite{DCC:ACFFP15} methods, reduce the problem dimension at the cost of increased noise and are mainly effective for instances with small secrets or large sample sizes. In~\cite{EPRINT:AlbPlaSco15}, a summary of a wide range of algorithms for the LWE problem is provided. 

\paragraph*{Hardness and Cryptographic Application.}
As established in the literature, the worst-case approximate GapSVP reduces to the average-case LWE \cite{STOC:BLPRS13,DBLP:conf/stoc/Peikert09,STOC:Regev05,DBLP:journals/jacm/Regev09}.
Such results position LWE as a central hardness assumption in lattice-based cryptography, leading to numerous cryptographic constructions, including post-quantum cryptographic standards.
Examples of such constructions include secure public-key encryption schemes resilient to both chosen-plaintext~\cite{RSA:LinPei11,DBLP:conf/crypto/PeikertVW08,STOC:Regev05} and chosen-ciphertext attacks~\cite{DBLP:conf/eurocrypt/MicciancioP12,DBLP:conf/stoc/Peikert09,DBLP:conf/stoc/PeikertW08}, oblivious transfer protocols~\cite{DBLP:conf/crypto/PeikertVW08}, identity-based encryption~\cite{DBLP:conf/eurocrypt/AgrawalBB10,DBLP:conf/crypto/AgrawalBB10,DBLP:conf/eurocrypt/CashHKP10,DBLP:conf/stoc/GentryPV08}, leakage-resilient cryptography~\cite{DBLP:conf/tcc/AkaviaGV09,DBLP:conf/crypto/ApplebaumCPS09,DBLP:conf/innovations/GoldwasserKPV10}, fully homomorphic encryption~\cite{DBLP:conf/crypto/Brakerski12,DBLP:conf/innovations/BrakerskiGV12,DBLP:conf/focs/BrakerskiV11}, among many others.
Structured variants, including Ring-LWE~\cite{EC:LyuPeiReg10} and Module-LWE~\cite{EPRINT:BraGenVai11}, improve efficiency while retaining worst-case hardness, with MLWE underpinning practical schemes such as CRYSTALS-Kyber~\cite{NISTPQC:CRYSTALS-KYBER22}.

\subsection{Our Result}
We propose a new and simple algebraic attack on \slwe\ with \emph{bounded errors}, where following standard practice in the literature~\cite{EPRINT:AlbPlaSco15}, errors are sampled from a discrete Gaussian distribution.

Our method differs fundamentally from existing approaches. Instead of directly applying Gr{\"o}bner basis methods, we employ \emph{a combination of diagonalization and S-polynomial generation} techniques. We carry out a direct complexity analysis of the proposed algorithm.

Unlike previous Gr{\"o}bner basis based analyses~\cite{EPRINT:ACFP14,EPRINT:Steiner24b}, our method does not rely on semi-regularity assumptions or the estimation of the degree of regularity for deriving the complexity bounds. Thus, our algorithm and its complexity analysis are not affected by the non-existence of semi-regular systems or the difficulty of precisely estimating the degree of regularity.

Despite the absence of the above-mentioned assumptions, the complexity of our algorithm shows a polynomial improvement (\Cref{tab:comparison}) over existing algebraic algorithms based on Gr{\"o}bner basis methods. Note that the requirement of a large number of samples for our algorithm remains the same as in existing algebraic algorithms. A detailed comparison with prior algebraic attacks is provided in Table~\ref{tab:comparison}.
Our main result is summarized in \Cref{thm:result}:

\begin{theorem}[Main Result]\label{thm:result}
Let $q$ be a prime. Consider a \slwe\ instance over $\mathbb{F}_q$ in dimension $n$, where the errors are drawn from a known support set of size $d < n,q$. Given $m = \binom{n+d-1}{d}$ samples, our algorithm recovers the secret vector $\mathbf{s}$ in
\[
    \mathcal{O}\!\left(d\cdot\frac{n^{d\omega}}{{(d!)}^\omega}\right)
\]
field operations over $\mathbb{F}_q$, with high probability $p_{\mathrm{succ}}\simeq 1-\mathcal{O}(d/q)$.
\end{theorem}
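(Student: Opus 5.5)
Following Arora--Ge, I would first turn each sample into an error-free polynomial equation. Let $E=\{e_1,\dots,e_d\}\subset\mathbb{F}_q$ be the known error support. For each sample $(\mathbf{a}_i,b_i)$ set
\[
f_i(\mathbf{x})=\prod_{j=1}^{d}\bigl(b_i-\langle\mathbf{a}_i,\mathbf{x}\rangle-e_j\bigr)\in\mathbb{F}_q[x_1,\dots,x_n].
\]
This polynomial has degree $d$ and vanishes at $\mathbf{x}=\mathbf{s}$. Write $P(\mathbf{x})=\prod_j(y-e_j)$ with $y=b_i-\langle\mathbf{a}_i,\mathbf{x}\rangle$. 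The top-degree homogeneous component of $f_i$ is $(-1)^d\langle\mathbf{a}_i,\mathbf{x}\rangle^d$, a random $d$-th power of a linear form. The number of degree-$d$ monomials in $n$ variables is exactly $m=\binom{n+d-1}{d}$.

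\textbf{Step 1 (diagonalization).} Form the $m\times m$ matrix whose rows are the coefficient vectors of the top components $\langle\mathbf{a}_i,\mathbf{x}\rangle^d$ over the degree-$d$ monomials. Since $d<q$, the powers of linear forms span the space of degree-$d$ forms. For uniformly random $\mathbf{a}_i$, the determinant of this matrix is a nonzero polynomial in the entries of the $\mathbf{a}_i$. I would bound its failure probability by a Schwartz--Zippel-type argument, done degree-layer-wise to obtain the $\mathcal{O}(d/q)$ rate, or by a sequential argument that each new random $d$-th power escapes the current span except with probability about $d/q$ per coordinate.

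Under invertibility, I would Gaussian-eliminate the full system $\{f_i\}$ with respect to a graded order. This costs $\mathcal{O}(m^\omega)$ operations. The top-degree block becomes the identity, so we obtain, for every degree-$d$ monomial $\mathbf{x}^\alpha$, a polynomial $g_\alpha=\mathbf{x}^\alpha+r_\alpha$ with $\deg r_\alpha\le d-1$. Since $m\approx n^d/d!$, this step costs $\mathcal{O}(n^{d\omega}/(d!)^\omega)$.

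\textbf{Step 2 (S-polynomials and descent).} The $g_\alpha$ have leading monomials covering all monomials of degree $d$. I would then form S-polynomials $x_k g_\alpha - x_\ell g_\beta$ for pairs with $x_k\mathbf{x}^\alpha=x_\ell\mathbf{x}^\beta$. The degree-$(d+1)$ parts cancel. Reducing the degree-$d$ parts by the $g$'s yields polynomials of degree at most $d-1$ in the ideal.

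The goal is to collect enough of these to get a full set $\{h_\beta\}$ with leading monomials covering all degree-$(d-1)$ monomials, equivalently an invertible system in $\binom{n+d-2}{d-1}$ unknowns. Then I would repeat the procedure. Each round $k$ is a linear-algebra step of size at most $\binom{n+k-1}{k}$, so the total cost is $\sum_{k\le d}\mathcal{O}(m^\omega)=\mathcal{O}(d\cdot n^{d\omega}/(d!)^\omega)$. Since the round sizes decrease geometrically, the factor $d$ is a crude bound. Reaching degree $1$ gives $n$ independent linear equations, which determine $\mathbf{s}$. The ideal contains $\mathbf{s}$ as a zero, so the solution is correct once the system is consistent and full rank.

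\textbf{Main obstacle.} The hard step is proving that each descent round really produces a full-rank set of lower-degree polynomials. The S-polynomials are structured, not random, so the genericity argument of Step 1 does not apply directly. My first attempt would be to show that the degree-$(k-1)$ parts of the reduced S-polynomials are determined by the coefficients of the $r_\alpha$, which are rational functions of the sample data. Their rank would then be certified by exhibiting one specialization of $(\mathbf{a}_i,b_i)$ where it is full, for instance an instance whose reduced system is $\{x_j(x_j-c_j)\cdots\}$-like. Schwartz--Zippel would then transfer this to random samples with failure probability $\mathcal{O}(d/q)$ per round.

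A union bound over the $d$ rounds, or a single combined determinant argument to avoid an extra factor of $d$, would give $p_{\mathrm{succ}}\simeq 1-\mathcal{O}(d/q)$. If the structural argument fails, the fallback is to use an ideal-membership view. The ideal $\langle f_i\rangle$ should equal $\langle x_j-s_j\rangle$ generically, which forces the reduced Gröbner basis to be linear, and the descent would then be shown to reach it within degree $d$.
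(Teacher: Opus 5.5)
Your algorithm, cost accounting and correctness argument are essentially the paper's. You diagonalize the degree-$d$ block, form S-polynomials $x_k g_\alpha-x_\ell g_\beta$ from leading monomials whose gcd has degree one less, reduce them against the diagonal system by a Schur complement $C-BA$, re-diagonalize, and descend, with $\mathbf s$ preserved under linear combinations. Two details must be pinned down for the cost bound to hold.
\begin{itemize}
\item You must fix which S-polynomials you take, and how many. The paper takes exactly $\binom{n+i-1}{i}$ of them, from pairwise disjoint admissible pairs read off a Gray-code Hamiltonian path on the degree-$(i+1)$ monomials (\Cref{lemma:spoly_number}, which is where $d<n$ is needed). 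Admissible pairs correspond bijectively to a degree-$i$ gcd together with an unordered pair of distinct variables, so taking all of them would multiply the row count by $\binom n2$.
\item The reduction is a rectangular product with inner dimension $\binom{n+i}{i+1}$, not a step ``of size at most $\binom{n+i-1}{i}$''. It costs $\mathcal{O}\bigl(n\binom{n+i-1}{i}^\omega\bigr)$, which stays within $\mathcal{O}\bigl(d\,n^{d\omega}/(d!)^\omega\bigr)$ only because $d<n$.
\end{itemize}

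The genuine gap is the success probability: the tools you propose cannot give $1-\mathcal{O}(d/q)$. The determinant of the $N\times N$ matrix with rows $\nu_d(\mathbf a_1),\dots,\nu_d(\mathbf a_N)$ is homogeneous of degree $d$ in each of the $N$ blocks $\mathbf a_i$. So Schwartz--Zippel, applied globally or block by block, only certifies failure at most $dN/q$. This is precisely \cite[Lem.~5]{EC:NMSU25}, which is meaningful only when $q>Nd$ and is therefore vacuous for practical parameters. Your sequential variant has the same defect unless the per-step bound decays with $N-\ell$. Unconditionally one only gets $p_\ell\le d/q$, from any single nonzero degree-$d$ form vanishing at $\mathbf a_1,\dots,\mathbf a_\ell$, and summing over $N$ steps again gives $Nd/q$.

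Obtaining that decay is the missing idea. The paper identifies $\{b:\nu_d(b)\in V_\ell\}$ with the affine cone over the base locus of the space $W_\ell$ of degree-$d$ forms vanishing at $\mathbf a_1,\dots,\mathbf a_\ell$. It then counts $\mathbb{F}_q$-points using B\'ezout and the bound $\delta(q^t+\dots+1)$ for a variety of degree $\delta$ and dimension $t$, obtaining $p_\ell\le (d/q)^{\min(n-1,N-\ell)}$. This needs \Cref{ass:generic} (a regular sequence of $\min(n-1,N-\ell)$ forms in $W_\ell$). Summing needs $N\le 3^{n-3}$, hence $n\ge d^2+1$, together with $q\ge 3d$. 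Only the final step $\ell=N-1$, which contributes the leading $d/q$, is unconditional. For $d=2$ the paper instead counts quadratic forms by rank, unconditionally, for $n\ge 13$.

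For the descent rounds, your specialization-plus-Schwartz--Zippel plan fails for the same reason, and more so. The entries of the degree-$i$ block pass through the inverse of the $N\times N$ Veronese matrix, so after clearing denominators the certifying polynomial has degree growing with $N$. Moreover, no specialization is actually exhibited. The ideal-theoretic fallback does not close this either: $\langle f_i\rangle=\langle x_j-s_j\rangle$ does not imply that your particular square selection of S-polynomials has a full-rank degree-$i$ block. The paper does not prove this step rigorously either. It models those matrices as having i.i.d.\ uniform rows, explicitly as a heuristic, and then takes a union bound over the iterations. So the claimed $p_{\mathrm{succ}}\simeq 1-\mathcal{O}(d/q)$ rests on \Cref{ass:generic}, that heuristic, and the extra conditions $n\ge d^2+1$ and $q\ge 3d$. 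Your proposal neither invokes these nor provides an unconditional substitute.
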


In particular, with $m = \binom{n+d-1}{d}$ samples and an error support of size $d$, our complexity bound improves upon existing algebraic algorithms~\cite{EPRINT:ACFP14,EC:NMSU25,EPRINT:Steiner24b,ACISP:SunTibAbe20} for LWE.\@ More specifically, comparing with~\cite{EC:NMSU25}, the most efficient existing method that shares the same data complexity $m$, our algorithm improves the computational complexity by a factor of \(\sim n\left(d!\right)\). Note that, when $d = 2$, our algorithm also yields a polynomial speed-up by a factor of \(\sim n^{8 + \omega}\) w.r.t.~\cite{EPRINT:Steiner24b}.

Like the algorithm of~\cite{EC:NMSU25}, our algorithm is probabilistic and succeeds with high probability (see~\Cref{sec:probabilities2,sec:probabilitiesD}). The success guarantee of our algorithm, however, imposes a very different requirement on the field size, namely, the success probability of~\cite{EC:NMSU25} requires $q > md + 2$~\cite[Thm.~4]{EC:NMSU25}, a condition that is not met for practical parameters, e.g.~\cite{NISTPQC:CRYSTALS-KYBER22,FRODOKEM}. A (high) success probability of our algorithm requires only $q \ge 3d$. Our algorithm therefore improves not only the time complexity, but also the range of applicability, since it applies to practical instances.

\begin{table}[b!]
    \centering
    \renewcommand{\arraystretch}{1.5} 
    \begin{tabular}{
    >{\centering\arraybackslash}m{1.6cm}
    >{\centering\arraybackslash}m{1cm}
    >{\centering\arraybackslash}m{2.5cm}
    >{\centering\arraybackslash}m{3.7cm}
    >{\centering\arraybackslash}m{2.8cm}
    >{\centering\arraybackslash}m{1.3cm}
    }
    \hline\hline
Work & Error Size & No. Samples & Time Complexity & Semi-regular assumption? & Using $d_{reg}$? \\ 
\hline\hline
~\cite{EPRINT:ACFP14} & 2 & $\mathcal{O}(n\log\log n)$ & $\mathcal{O}\left(n^2\cdot 2^{\frac{\omega n \log\log\log n}{8\log\log n}}\right)$ & \used& \used\\
\hline
~\cite{EPRINT:ACFP14} & 2 & $c\cdot n$ & $\mathcal{O}\Big(n^2\cdot 2^{\omega n(1+\beta)H_2(\beta/(1+\beta))}\Big)$ & \used& \used\\
\hline
~\cite{ACISP:SunTibAbe20} & 2 & $c\cdot n^2$ & $n^{\mathcal{O}(1/c)}$ & \used& \used\\
\hline
~\cite{ACISP:SunTibAbe20} & 2 & $n^{1+\alpha}$ & $2^{\widetilde{\mathcal{O}}(m^{1-\alpha})}$ & \used& \used\\
\hline
~\cite{EPRINT:Steiner24b} & 2 & $\mathcal{O}({n^2})$ & $\mathcal{O}\left( n^2\cdot \binom{n+3}{3}^{\omega +2} \right)$ & \used& \used\\
\hline
This Work & $2$ & $\frac{n(n+1)}{2}$ & $\mathcal{O}\left(\frac{n^{2\omega}}{2^{\omega-1}}\right) $ &\cellcolor{lipicsLightGray} \notused&\cellcolor{lipicsLightGray} \notused\\
\hline\hline
~\cite{ICALP:AroGe11} & $d$ & $\mathcal{O}(\log(q)\cdot q\cdot n^d)$ & $\mathcal{O}\left(\log(q)\cdot q\cdot n^{\omega d}\right)$ &\cellcolor{lipicsLightGray} \notused&\cellcolor{lipicsLightGray} \notused\\ 
\hline
~\cite{EPRINT:Steiner24b} & $d$ & $m>n$ & $\mathcal{O}\Big( m\cdot (d-1)\cdot n\cdot 2^{\omega\cdot{(8d\ln(4)-1)}^{1/\ln(4)}\cdot n} \Big)$ & \cellcolor{lipicsLightGray}\notused& \used\\
\hline
~\cite{EPRINT:Steiner24b} & $d$ & $\mathcal{O}\Big( \binom{n+d-1}{d} \Big)$ & $\mathcal{O}\left(d^3\cdot c_d^{{(n-1)}^{1-1/\ln(4)}}\right)$ & \used& \used\\
\hline
~\cite{EC:NMSU25} & $d$ & $\binom{n+d-1}{d}$ & $\mathcal{O}\left(dn^{1+d\omega}\right)$ & \cellcolor{lipicsLightGray}\notused& \used\\
\hline
This Work & $d$ & $\binom{n+d-1}{d}$ & $\mathcal{O}\left(d\cdot \frac{n^{d\omega}}{{(d!)}^\omega}\right) $ &\cellcolor{lipicsLightGray} \notused&\cellcolor{lipicsLightGray}\notused\\
\hline\hline\\
    \end{tabular}
    \caption{Overview of the complexities (in field operations) of algebraic attacks on LWE with a secret $\mathbf{s}$ of length $n$, $m$ samples, and error entries drawn from a set of size $d$. $H_2(x) = -x \log x - (1-x) \log(1-x)$ is the binary entropy function, $\ln$ is the natural logarithm, $\omega$ is the linear algebra constant ($2 \le \omega \le 3$), $c_d = 2^{(\omega+3)\cdot 2^{1/\ln(2)}\cdot {(2d-1)}^{1/\ln(4)}}$, and $\beta = c-0.5-\sqrt{c(c-1)}$. This table is an extended version of Table 2 in~\cite{EC:NMSU25}, with adapted notation.}\label{tab:comparison}
\end{table}

\subsection{Overview of Previous Algebraic Algorithms}

\paragraph*{Linearization Technique~\cite{ICALP:AroGe11}.}\label{subsection:linearization}
Each sample $(\mathbf{a}_i, b_i) = (\mathbf{a}_i, \langle\mathbf{a}_i, \mathbf{s}\rangle + e_i) \in \mathbb{F}_q^n \times \mathbb{F}_q$, for $1 \leq i \leq m$, yields a non-linear equation of degree $d$ in the $n$ components of the secret $\mathbf{s}$. 
Given that a common choice for the error distribution is the discrete Gaussian with mean $0$ and standard deviation $\sigma$, denoted as $\mathcal{N}(0,\sigma)$, we have
\[
\mathbb{P}\big[ e \overset{\scriptscriptstyle\$}{\gets} \chi : |e| > t \cdot \sigma \big] \leq \frac{2}{t\sqrt{2\pi}} e^{-t^2/2}\in e^{\mathcal{O}({-t^2})}, \quad \textrm{for all } t>0.
\]
When representing elements of $\mathbb{F}_q$ as integers in $[-\lfloor q/2 \rfloor, \dots, \lfloor q/2 \rfloor]$, a sample from a Gaussian distribution takes values in the interval $[-t\cdot\sigma,\dots,t\cdot\sigma]$ of $\mathbb{F}_q$ with probability $1 - e^{\mathcal{O}({-t^2})}$.
Consequently, if $e \overset{\scriptscriptstyle \$}{\gets} \mathcal{N}(0,\sigma)$, then $f(e)=0$ for $f(x) = x \cdot \prod_{i=1}^{t\cdot\sigma} (x+i)(x-i)$ with probability at least $1 - e^{\mathcal{O}({-t^2})}$.
The degree of $f$ is $d := 2 \cdot t \sigma + 1 < q$.
The resulting polynomial system $\mathcal{F}_{\mathrm{LWE}}$ is solved using the linearization method, which involves replacing each monomial with a new linear variable and then solving the resulting linear system of equations.
\belwe\ is the variant of LWE introduced in~\cite{MicPei13}, in which $\mathbf{e} \in {\{0,1\}}^m$.

\paragraph*{Gr{\"o}bner Basis and Degree of Regularity.}
In~\cite{EPRINT:ACFP14}, the Gr{\"o}bner basis algorithm (a classical algebraic geometry method for solving systems of polynomial equations) was applied to the error-free polynomial system, achieving a significant improvement in runtime over the Arora-Ge algorithm. The complexity analysis of the Gr{\"o}bner basis algorithm is non-trivial and typically requires computing the \emph{degree of regularity} ($d_{reg}$, see \Cref{sec:dreg}) under the semi-regularity assumption. While computing $d_{reg}$ for a polynomial system is difficult, for a semi-regular system, one can estimate it. Assuming that the polynomial system (from LWE) is semi-regular, the complexity analysis (of the Gr{\"o}bner basis algorithm) in~\cite{EPRINT:ACFP14} provides estimates of the degree of regularity. Under the same assumption,~\cite{ACISP:SunTibAbe20} estimates $d_{reg}$ using the Hilbert polynomial. While most random systems of polynomials are semi-regular, Fr\"oberg showed~\cite{DBLP:journals/jsc/FrobergH94} that there are polynomial systems that can never be semi-regular, and~\cite{HODGES2017519} showed that (for certain parameter choices) semi-regular sequences can never exist. These limitations motivate complexity analyses that avoid explicit semi-regularity assumptions.
Recently,~\cite{EPRINT:Steiner24b} provided a refined complexity analysis using the solving degree and assuming the degree of regularity to be the same as the degree of the polynomial. While this result does not use the semi-regular assumption, it still relies on an assumption of the degree of regularity.
In~\cite{EC:NMSU25}, the $d_{reg}$ was shown to be $d$ with high probability~\cite[Lems.~5,6]{EC:NMSU25}.

We emphasize that our approach does not aim to improve existing bounds on the degree of regularity; instead, we avoid this notion altogether by providing a fine-grained complexity directly in terms of the LWE parameters.

\section{Preliminaries}\label{sec:background}
For $\alpha = (\alpha_1,\dots,\alpha_n) \in \mathbb{Z}_{\ge 0}^n$ we define $\mathbf{x}^{\alpha} := x_1^{\alpha_1}\cdots x_n^{\alpha_n}$. The total degree of $\mathbf{x}^{\alpha}$ is $|\alpha|$, where $|\alpha| := \sum_{i=1}^n \alpha_i$. For a polynomial $f \in \mathbb{F}_q[x_1,\dots,x_n]$ of total degree $d$, we adopt the following degree decomposition:
\[
f = f^{(d)} + f^{(d-1)} + \cdots + f^{(1)} + f^{(0)},
\]
where $f^{(k)}$ denotes the sum of all monomials in $f$ of total degree exactly $k$.
For a fixed monomial order $\prec$ on $\mathbb{F}_q[x_1,\dots,x_n]$, \LM$(f)$ and \LT$(f)$ denote, respectively, the leading monomial and leading term of $f$ with respect to $\prec$. The S-polynomial for $f, g \in \mathbb{F}_q[x_1,\dots,x_n]$, which is fundamental to the Gr\"obner basis method, is defined as
\begin{equation}\label{eq:spoly-expanded}
    S_{fg} := \frac{\mathbf{x}^\gamma}{\mathrm{LT}(f)} \cdot f
              - \frac{\mathbf{x}^\gamma}{\mathrm{LT}(g)} \cdot g,
\end{equation}
where $\mathbf{x}^\gamma$ denotes the least common multiple of $\mathrm{LM}(f)$ and $\mathrm{LM}(g)$. The Gr\"obner basis method is a well-known technique for solving systems of polynomials. For a complete introduction to the topic of Gr\"obner bases, we refer the reader to~\cite{Cox2015Ideals}.

\section{An Overview of Our Algorithm}\label{sec:idea}
\begin{figure}[b!]
\centering
\resizebox{0.95\linewidth}{!}{
\begin{tabular}{cc}
  \resizebox{0.60\textwidth}{!}{\begin{tikzpicture}[
    every label/.append style={anchor=base, yshift=3pt},
    every left delimiter/.style={lipicsBulletGray},
    every right delimiter/.style={lipicsBulletGray},
    hugeblock/.style={
        inner sep=6pt,
        left delimiter={(},
        right delimiter={)}
    }
]
\node[lipicsGray, matrix={2.5}{2.5}, fill=none] (d) {};
\draw[lipicsLineGray, line width=0.70mm] (d.north west) -- (d.south east) -- cycle;
\node[font=\LARGE, fill=white, draw=lipicsGray, line width=0.4pt, inner sep=2pt] at (d.center) {deg $d$};
\node[lipicsBulletGray, matrix={2.5}{10}, fill=lipicsLightGray, right=2mm of d] (ld) {};
\node[font=\LARGE] at (ld.center) {degree $<d$};
\node[draw, dotted, thick, rounded corners, inner sep=4pt, fit=(d)(ld)] (boxd) {};
\node[lipicsGray, matrix={2.5}{2.5}, fill=none, below=4mm of ld.south west, anchor=north west] (d1) {};
\draw[lipicsLineGray, line width=0.70mm] (d1.north west) -- (d1.south east) -- cycle;
\node[font=\Large, fill=white, draw=lipicsGray, line width=0.4pt, inner sep=2pt] at (d1.center) {deg $d{-}1$};
\node[lipicsBulletGray, matrix={2.5}{7.3}, fill=lipicsLightGray, right=2mm of d1] (ld1) {};
\node[font=\Large] at (ld1.center) {degree $<d{-}1$};
\node[draw, dotted, thick, rounded corners, inner sep=4pt, fit=(d1)(ld1)] (boxd1) {};
\node[lipicsGray, matrix={2}{2}, fill=none, below=4mm of ld1.south west, anchor=north west] (d2) {};
\draw[lipicsLineGray, line width=0.70mm] (d2.north west) -- (d2.south east) -- cycle;
\node[font=\normalsize, fill=white, draw=lipicsGray, line width=0.4pt, inner sep=2pt] at (d2.center) {deg $d{-}2$};
\node[lipicsBulletGray, matrix={2}{5.1}, fill=lipicsLightGray, right=2mm of d2] (ld2) {};
\node[font=\normalsize] at (ld2.center) {degree $<d{-}2$};
\node[draw, dotted, thick, rounded corners, inner sep=4pt, fit=(d2)(ld2)] (boxd2) {};
\node[inner sep=0pt, draw=lipicsBulletGray, outer sep=0pt, matrix={0.8}{5.1}, fill=none, below=2mm of ld2, label=center:$\ddots$] (dots) {};
\node[lipicsGray, matrix={1}{1}, fill=none, below=2mm of dots.south east, xshift=-13mm] (d1block) {};
\draw[lipicsLineGray, line width=0.70mm] (d1block.north west) -- (d1block.south east) -- cycle;
\node[font=\footnotesize\bfseries, fill=white, draw=lipicsGray, line width=0.4pt, inner sep=2pt] at (d1block.center) {$1$};
\node[lipicsBulletGray, matrix={1}{0.6}, fill=lipicsLightGray, right=2mm of d1block] (smalld1) {};
\node[font=\footnotesize] at (smalld1.center) {\shortstack{deg \\ $0$}};
\node[draw, dotted, thick, rounded corners, inner sep=4pt, fit=(d1block)(smalld1)] (box1) {};
\node[hugeblock, fit=(boxd)(boxd1)(boxd2)(dots)(box1)] (big) {};
\draw[->, very thick, line cap=round, line join=round, overlay]
([xshift=3pt]boxd.west) .. controls +(-24mm,-12mm) and +(-25mm,2mm) .. node[below=6mm, xshift=1mm] {iteration-$(d-1)$} ([yshift=2mm]boxd1.west);
\draw[->, very thick, line cap=round, line join=round, overlay]
([xshift=3pt]boxd1.west) .. controls +(-24mm,-12mm) and +(-25mm,2mm) .. node[below=5mm] {iteration-$(d-2)$} ([yshift=2mm]boxd2.west);
\draw[->, very thick, line cap=round, line join=round, overlay]
([xshift=3pt]boxd2.west) .. controls +(-15mm,-12mm) and +(-20mm,0mm) .. (dots.west);
\end{tikzpicture}} &
  \resizebox{0.33\textwidth}{!}{\begin{tikzpicture}[
  matparens/.style={
    inner sep=3pt,
    left delimiter={(},
    right delimiter={)}
  },
  block/.style={draw=gray, thick, inner sep=2pt, minimum height=12mm, minimum width=12mm, font=\large\bfseries},
  bullet/.style={draw=gray, fill=gray!30, minimum width=12mm, minimum height=12mm},
  arrow/.style={->, thick, line cap=round},
  node distance=2mm
]
\node[matparens] (A1) {
    \begin{tikzpicture}
        \node[lipicsGray, matrix={1}{1}, fill=lipicsBulletGray] (R) {};
        \draw[lipicsGray, line width=0.70mm] (R.north west) -- (R.south east) -- cycle;
        \node[font=\footnotesize, fill=white, draw=lipicsGray, line width=0.4pt, inner sep=1pt] at (R.center) {$i+1$};
        \node[lipicsBulletGray, matrix={1}{3}, fill=lipicsLightGray, right=1mm of R] (T) {};
        \node[black, align=center, text width=21mm, anchor=center] at (T.center) {degree $<i+1$};
        \node[draw, dotted, thick, rounded corners, inner sep=3pt, fit=(R)(T)] (boxd) {};
    \end{tikzpicture}
  };
\node[matparens, below=of A1] (B2) {
    \begin{tikzpicture}
        \node[lipicsGray, matrix={1}{1}, fill=white] (P) {};
        \draw[lipicsGray, line width=0.70mm] (P.north west) -- (P.south east) -- cycle;
        \node[font=\footnotesize, fill=white, draw=lipicsGray, line width=0.4pt, inner sep=1pt, anchor=center] at (P.center) {$i+1$};
        \node[lipicsBulletGray, matrix={1}{3}, fill=lipicsLightGray, right=1mm of P] (L) {};
        \node[black, align=center, text width=21mm, anchor=center] at (L.center) {degree $<i+1$};
        \node[draw, dotted, thick, rounded corners, inner sep=3pt, fit=(P)(L)] (boxd) {};
    \end{tikzpicture}
};

\draw[arrow]
  (A1.south) -- (B2.north)
  node[midway, right=1mm, font=\scriptsize, align=center] {\texttt{Diag}};

\node[matparens, below=of B2] (A1b) {
\begin{tikzpicture}
    \node[lipicsGray, matrix={0.7}{1}, fill=lipicsBulletGray] (K) {};
    \node[font=\footnotesize, fill=white, draw=lipicsGray, line width=0.4pt, inner sep=1pt, anchor=center] at (K.center) {$i+1$};
    \node[lipicsGray, matrix={0.7}{0.7}, fill=lipicsBulletGray, right=1mm of K] (R) {};
    \draw[lipicsGray, line width=0.70mm] (R.north west) -- (R.south east) -- cycle;
    \node[font=\footnotesize, fill=white, draw=lipicsGray, line width=0.4pt, inner sep=1pt, anchor=center] at (R.center) {$i$};
    \node[lipicsBulletGray, matrix={0.7}{2.2}, fill=lipicsLightGray, right=1mm of R] (T) {};
    \node[black, align=center, text width=16mm, anchor=center, font=\footnotesize] at (T.center) {degree $<i$};
    \node[draw, dotted, thick, rounded corners, inner sep=3pt, fit=(K)(R)(T)] (boxd) {};
\end{tikzpicture}
};

\draw[arrow] (B2.south) -- (A1b.north) node[midway, right=1mm, font=\scriptsize, align=center] {\texttt{Sgen}};

\node[matparens, below=of A1b] (B3) {
\begin{tikzpicture}
    \node[matrix={0.7}{1}, draw=none, fill=none] (K) {};
    \node[lipicsGray, matrix={0.7}{0.7}, fill=lipicsBulletGray, right=1mm of K] (R) {};
    \draw[lipicsGray, line width=0.70mm] (R.north west) -- (R.south east) -- cycle;
    \node[font=\footnotesize, fill=white, draw=lipicsGray, line width=0.4pt, inner sep=1pt, anchor=center] at (R.center) {$i$};
    \node[lipicsBulletGray, matrix={0.7}{2.2}, fill=lipicsLightGray, right=1mm of R] (T) {};
    \node[black, align=center, text width=16mm, anchor=center, font=\footnotesize] at (T.center) {degree $<i$};
    \node[draw, dotted, thick, rounded corners, inner sep=3pt, fit=(R)(T)] (boxd) {};
\end{tikzpicture}
};

\draw[arrow] (A1b.south) -- (B3.north) node[midway, right=1mm, font=\scriptsize, align=center] {\texttt{Sred}};

\node[matparens, below=of B3] (AA1) {
    \begin{tikzpicture}
        \node[matrix={0.7}{1}, draw=none, fill=none] (K) {};
        \node[lipicsGray, matrix={0.7}{0.7}, fill=white, right=1mm of K] (P) {};
        \draw[lipicsGray, line width=0.70mm] (P.north west) -- (P.south east) -- cycle;
        \node[font=\footnotesize, fill=white, draw=lipicsGray, line width=0.4pt, inner sep=1pt, anchor=center] at (P.center) {$i$};
        \node[lipicsBulletGray, matrix={0.7}{2.2}, fill=lipicsLightGray, right=1mm of P] (L) {};
        \node[font=\footnotesize, black, align=center, text width=16mm, anchor=center] at (L.center) {degree $<i$};
        \node[draw, dotted, thick, rounded corners, inner sep=3pt, fit=(P)(L)] (boxd) {};
    \end{tikzpicture}
};

\draw[arrow]
  (B3.south) -- (AA1.north)
  node[midway, right=1mm, font=\scriptsize, align=center] {\texttt{Diag}};
\end{tikzpicture}}
\end{tabular}
}
\caption{\textbf{{Left:}} Block-wise representation of the algorithm. Each row block (enclosed by dashed lines) corresponds to polynomials of total degree $i$, from $d$ down to $1$. The left part of each rectangular matrix shows the diagonalized leading submatrix of degree-$i$ monomials, while the right part contains lower-degree terms. \textbf{{Right:}} Step-wise illustration of one iteration of the algorithm, showing \texttt{Sgen}, \texttt{Sred}, and \texttt{Diag}. Gray backgrounds represent dense matrices, while white backgrounds indicate zero entries.}\label{fig:bigmatrix}
\end{figure}

In the polynomial system $\mathcal{F}_{\mathrm{LWE}}$ (stemming from \slwe), derived from $m = \binom{n+d-1}{d}$ samples, we identify each polynomial with its coefficient vector. The terms of each polynomial are ordered according to a fixed \emph{graded} monomial ordering by total degree. 
This yields a matrix $\mathbf{M} \in \mathbb{F}_q^{m \times u}$ whose rows correspond to the input polynomials, where $u = \binom{n+d}{d}$.
With high probability, these rows are linearly independent and moreover the leading principal $m \times m$ submatrix of $\mathbf{M}$ is invertible (see \Cref{sec:probabilities2,sec:probabilitiesD}); diagonalizing it produces a set of polynomials with monic leading terms of degree $d$ (outlined by the first square structure in \Cref{fig:bigmatrix}).
The remaining columns form a dense block containing lower-degree monomials. Each polynomial in this representation has a unique term of degree $d$ corresponding to the leading term, and additional terms of lower degree. The leading terms are all distinct and, due to the linear independence of the rows, they cover all monomials of degree $d$.

From this system, we generate a selected set of S-polynomials, each having total degree $ \leq d$. Next, we reduce this set of S-polynomials by the current diagonalized system. Note that this process \emph{does not require division}, as it consists of linear combinations of the diagonalized polynomials. Each reduced S-polynomial has total degree exactly $d-1$.

The coefficients of these reduced S-polynomials are then assembled into a new coefficient matrix.
Provided that a sufficient number of such polynomials are generated, the leading principal submatrix of this matrix is invertible with overwhelming probability (see \Cref{sec:probabilities2,sec:probabilitiesD}) and can be diagonalized in the same manner as before.
This yields a new diagonalized system whose leading terms are monic and cover all monomials of degree $d-1$. Then, we compute a set of S-polynomials and their reduction as described above. 

This procedure is applied recursively, decreasing the degree at each iteration. Finally, the process terminates at degree $1$, where the resulting system consists of linear polynomials of the form $(x_j - s_j)$ for $1\le j\le n$, directly revealing the secret vector $\mathbf{s}$.

\subsection{Iterations (formal)}
The main algorithm (\Cref{alg:alg}) proceeds in $d-1$ iterations. At iteration $i$ (running from $i = d-1$ down to $1$), the algorithm processes polynomials of total degree $i+1$ by performing the following three operations:  

\begin{enumerate}
    \item \textbf{S-polynomial generation (\texttt{Sgen}).}
    We generate $\binom{n+i-1}{i}$ S-polynomials, ensuring that the set contains exactly as many equations as there are monomials of degree $i$.
    An S-polynomial between two polynomials $f$ and $g$ is computed if the greatest common divisor of their (distinct) leading terms has degree $i$.
    We recall that, at this stage, \LT$(f)$ and \LT$(g)$ have degree $i+1$.
    The resulting S-polynomial has the form  
    \begin{equation}\label{eq:spoly}
        S_{fg} = x_k\cdot f - x_\ell \cdot g
    \end{equation}
    for suitable indices $k,\ell \in \{1,\dots,n\}$.
    Consequently, $S_{fg}$ consists solely of monomials of degree at most $i+1$.
    
    \item \textbf{S-polynomial reduction (\texttt{Sred}).}
    Exploiting the diagonal structure of the starting set, the S-polynomials are reduced strictly via scalar linear combinations, without explicit polynomial division. This cancels all monomials of degree $i+1$, leaving terms of degree at most $i$.

    \item \textbf{Diagonalization (\texttt{Diag}).}
    We compute the Reduced Row Echelon Form (RREF) of the coefficient matrix formed strictly by these reduced S-\hspace{0pt}polynomials, explicitly diagonalizing its leading principal submatrix of size $\binom{n+i-1}{i}$.
\end{enumerate}

\begin{algorithm}[t!]
\caption{(Main Algorithm)}\label{alg:alg}
\begin{algorithmic}[1]
\Statex{\hspace*{-\algorithmicindent} \textbf{Input:} $\mathcal{P}_d$: initial set of samples, $n \geq 1$: secret's dimension, $d$: maximum error size}
\Statex{\hspace*{-\algorithmicindent} \textbf{Output:} $\mathcal{D}_1$: set of $n$ linear polynomials $\{ x_j - s_j: 1 \le j \le n \}$.}
\State{\texttt{SET = Diag($\mathcal{P}_d, n, d$)}} \Comment{Initial diagonalization of the degree-$d$ samples: $\mathcal{D}_d$}
\For{$i = d-1, d-2, \dots, 1$}
    \State{\texttt{STEP1 = Sgen(SET,$n, i$)}} \Comment{S-polynomial generation step: $\mathcal{P}_i$}
    \State{\texttt{STEP2 = Sred(STEP1,SET,$n, i$)}} \Comment{S-polynomial reduction step: $\mathcal{Q}_i$}
    \State{\texttt{SET = Diag(STEP2,$n, i$)}} \Comment{Diagonalization step: $\mathcal{D}_i$}
\EndFor{}
\State{\Return{\texttt{SET}}} \Comment{Return the reduced set of linear polynomials: $\mathcal{D}_1$}
\end{algorithmic}
\end{algorithm}

\section{Algorithm and Correctness}\label{sec:code}
Let $\mathcal{Q}_d$ denote the initial sample set of polynomials, and let $\mathcal{D}_d$ denote its corresponding diagonalized form, which serves as the starting point of the algorithm. 
For each subsequent iteration $i$ (from $d-1$ down to $1$), we define the following sets: let $\mathcal{P}_i$ denote the set of S-polynomials generated from $\mathcal{D}_{i+1}$, let $\mathcal{Q}_i$ denote the intermediate set obtained after reducing $\mathcal{P}_i$ (\texttt{Sred}), and let $\mathcal{D}_i$ denote the final set obtained after diagonalization. At each iteration $i$, these sets share the exact same cardinality:
$|\mathcal{P}_i| = |\mathcal{Q}_i| = |\mathcal{D}_i| = \binom{n+i-1}{i}$.

\Cref{alg:Sgen} details the generation procedure (\texttt{Sgen}), explicitly constructing the set of S-polynomials $\mathcal{P}_i$ from the diagonalized set $\mathcal{D}_{i+1}$. This process specifies the selection of admissible pairs $(f,g)$ (\Cref{def:admissible-pair}) to compute the polynomials $S_{fg}$ of degree $i+1$.

Subsequently, \Cref{alg:Sred} details the reduction step (\texttt{Sred}), which reduces the newly generated system $\mathcal{P}_i$ with respect to the diagonal set $\mathcal{D}_{i+1}$. By exploiting the pre-existing diagonal structure of $\mathcal{D}_{i+1}$, this reduction is achieved entirely via linear combinations, eliminating all monomials of degree $i+1$ to yield the reduced set $\mathcal{Q}_i$.

Finally, the diagonalization step (\texttt{Diag}) transforms the polynomial set $\mathcal{Q}_i$ into its required canonical diagonal form $\mathcal{D}_i$. Because this transformation corresponds to computing the RREF of the coefficient matrix associated with $\mathcal{Q}_i$, we omit a dedicated algorithm for this standard linear algebra procedure.

\subsection{S-polynomial generation}\label{sec:onspoly}
\begin{definition}[Admissible Pair]\label{def:admissible-pair}
    Given two tuples of integers \(\alpha, \beta \in \mathbb{Z}_{\ge 0}^n\) such that \(|\alpha|=|\beta|=t\), the pair \((\alpha, \beta)\) is called an \emph{admissible pair} if and only if there exist two distinct standard basis vectors \(e_j, e_k \in \mathbb{Z}_{\ge 0}^n\) with \(j \neq k\) such that \(\alpha_j > 0\) and \(\beta = \alpha - e_j + e_k\). Equivalently, we say that two polynomials \(f\) and \(g\) with \(\LM(f) = \mathbf{x}^\alpha\) and \(\LM(g) = \mathbf{x}^\beta\) form an \emph{admissible pair} of polynomials if and only if the pair \((\alpha, \beta)\) is admissible. In other words, if \(|\alpha|=|\beta|=t\), then \(\deg(\gcd(\LM(f),\LM(g))) = t-1\).
\end{definition}

\begin{algorithm}[b]
\caption{(\texttt{NextAdmissiblePair}) --- Next admissible pair generation}\label{alg:next-admissible-pair}
\begin{algorithmic}[1]
\Statex{\hspace*{-\algorithmicindent} \textbf{Input:} $\ell_j$: $n$-tuple, last component of previous pair, $n \ge 1$: dimension, $2 \le i\le d$: current degree}
\Statex{\hspace*{-\algorithmicindent} \textbf{Output:} $(\ell_{j+1}, \ell_{j+2})$: next admissible pair}
\Statex{\hspace*{-\algorithmicindent} \textbf{Uses:} \textsc{Successor}: get next combination in Gray code order}\Comment{See Appendix~\ref{sec:successor}}
    \State{$\ell_{j+1} \gets$ \textsc{Successor}($\ell_j$, $n$, $i$)} 
\State{$\ell_{j+2} \gets$ \textsc{Successor}($\ell_{j+1}$, $n$, $i$)} 
\State{\Return{$(\ell_{j+1}, \ell_{j+2})$}}
\end{algorithmic}
\end{algorithm}

\begin{algorithm}[t]
\caption{(\texttt{Sgen}) --- S-polynomial generation}\label{alg:Sgen}
\begin{algorithmic}[1]
\Statex{\hspace*{-\algorithmicindent} \textbf{Input:} $\mathcal{D}_{i+1}$: diagonal polynomial set, $n \ge 1$: dimension, $1\le i\le d-1$: current degree}
\Statex{\hspace*{-\algorithmicindent} \textbf{Output:} $\mathcal{P}_i$: set of reduced S-polynomials of degree $i$}

\State{$\mathcal{P}_i \gets \emptyset$}\Comment{Initialize the set of S-polynomials}
\State{PrevCombination $\gets \varnothing$}
\For{$k\gets1$ to $\binom{n+i-1}{i}$ by $1$}
    \State{$(\ell_{2k-1}, \ell_{2k}) \gets$ \texttt{NextAdmissiblePair}(PrevCombination, $n$, $i+1$)} 
    \State{PrevCombination $\gets \ell_{2k}$}
    \State{Select $(f,g) \in \mathcal{D}_{i+1}$ such that \(\LM(f) = \mathbf{x}^{{\ell}_{2k-1}}\) and \(\LM(g) = \mathbf{x}^{\ell_{2k}}\)} 
    \State{Compute $S_{fg}$} \Comment{See \Cref{eq:spoly}}
    \State{$\mathcal{P}_i \gets \mathcal{P}_i \cup \{S_{fg}\}$}
\EndFor{}
\State{\Return{$\mathcal{P}_i$}}
\end{algorithmic}
\end{algorithm}

Consider the iteration $i$. Note that, given a monomial $m$ of degree \(i+1\), there exists a unique polynomial $f \in \mathcal{D}_{i+1}$ such that \(\LM(f) = m\), due to the diagonal form of $\mathcal{D}_{i+1}$. Hence, given two monomials $m_1$ and $m_2$ that form an admissible pair, the corresponding admissible pair of polynomials can be uniquely determined. Let $f$ and $g$ be those two polynomials; the resulting S-polynomial \(S_{fg}\) admits the simplified form given in \Cref{eq:spoly}.

Since they form an admissible pair, the least common multiple of their leading terms has degree $i+2$.
Consequently, recalling that $\deg(\mathrm{LM}(f))=\deg(\mathrm{LM}(g))=i+1$, both $\frac{\mathbf{x}^\gamma}{\mathrm{LT}(f)}$ and $\frac{\mathbf{x}^\gamma}{\mathrm{LT}(g)}$ (in \Cref{eq:spoly-expanded}) are monomials of degree one. Using the notation of Section~\ref{sec:background}, we decompose
\[
    f = f^{(i+1)} + f^{(i)} + \cdots + f^{(1)} + f^{(0)}.
\] 
Each polynomial in $\mathcal{D}_{i+1}$ contains a unique monomial of degree $i+1$, namely $\mathrm{LM}(f)$.
Multiplication by a variable $x_k$ therefore yields
\[
    x_k \cdot f = f^{(i+2)} + f^{(i+1)} + \cdots + f^{(1)},
\]
where $f^{(i+2)}$ consists of a single monomial, namely $x_k \cdot \mathrm{LM}(f)$.
As a result, when forming the S-polynomial of two such polynomials, the degree-$(i+2)$ terms are identical by construction and therefore cancel out upon subtraction, and we obtain
\[
    S_{fg} = {S_{fg}}^{(i+1)} + {S_{fg}}^{(i)} + \cdots + {S_{fg}}^{(2)} + {S_{fg}}^{(1)}.
\]

The top-degree component, ${S_{fg}}^{(i+1)}$, can then be systematically eliminated by reduction with respect to $\mathcal{D}_{i+1}$. As detailed in Section~\ref{sec:onspolyred}, this reduction is performed exclusively via linear operations, yielding the fully reduced S-polynomial of the form:
\[
    S_{fg} = {S_{fg}}^{(i)} + {S_{fg}}^{(i-1)} + \cdots + {S_{fg}}^{(0)}.
\]

\paragraph*{Existence and construction of admissible pairs.}
\Cref{lemma:spoly_number} guarantees the existence of a sufficient number of admissible pairs to generate all required S-polynomials, while preventing the formation of redundant pairs.

\begin{lemma}[Feasibility condition]\label{lemma:spoly_number}
Let $1 \le i < d-1$, and let $\mathcal{D}_{i+1}$ be the set of diagonal polynomials of degree $i+1$, with $|\mathcal{D}_{i+1}| = \binom{n+i}{i+1}$.
Then there exists a collection of at least $\binom{n+i-1}{i}$ pairwise disjoint admissible pairs of elements of $\mathcal{D}_{i+1}$.
\end{lemma}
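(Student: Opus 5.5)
The plan is to reduce the statement to a purely combinatorial matching problem on monomials, and then to solve it with a Gray-code (Hamiltonian path) argument.

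\emph{Reduction.} Because $\mathcal{D}_{i+1}$ is diagonal, its elements are in bijection with the monomials of degree $t:=i+1$ in $n$ variables, via $f\mapsto \LM(f)$. Consider the graph $G_{n,t}$ defined as follows:
\begin{itemize}
\item its vertices are the exponent vectors $\alpha\in\mathbb{Z}_{\ge0}^n$ with $|\alpha|=t$;
\item $\alpha$ and $\beta$ are adjacent exactly when $(\alpha,\beta)$ is admissible in the sense of \Cref{def:admissible-pair}, i.e.\ $\beta=\alpha-e_j+e_k$ with $j\neq k$.
\end{itemize}
This relation is symmetric, so $G_{n,t}$ is an undirected graph. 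A collection of pairwise disjoint admissible pairs of elements of $\mathcal{D}_{i+1}$ is then exactly a matching in $G_{n,t}$. It therefore suffices to exhibit a matching of size at least $\binom{n+i-1}{i}$.

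\emph{Key step: a Hamiltonian path.} I will show that $G_{n,t}$ has a Hamiltonian path. This is precisely what the Gray-code order used by \textsc{Successor} (Appendix~\ref{sec:successor}) provides: consecutive elements differ by moving one unit of exponent from one coordinate to another.

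To make this self-contained I would prove it by induction on $n$, using the classical reflected construction.
\begin{itemize}
\item For $n=1$ the graph is a single vertex.
\item For $n\ge2$, split the vertices according to the last exponent $\alpha_n=c$, for $c=0,1,\dots,t$. Each layer is a copy of $G_{n-1,t-c}$, which has a Hamiltonian path by induction.
\item Traverse the layers in order $c=0,1,\dots,t$, alternating the direction of the path in successive layers.
\item Check that the endpoint of each layer's path is adjacent to the starting point of the next layer. Moving one unit from some coordinate $x_j$ (with $j<n$) to $x_n$ is an admissible step.
\end{itemize}
The main obstacle is this endpoint-matching. To handle it, I would strengthen the induction hypothesis: the Hamiltonian path of $G_{n-1,s}$ starts at $s\,e_1$ and ends at $s\,e_{n-1}$ (or at another specified extreme vertex). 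Then consecutive layers can be glued by a single unit move into $x_n$, and the alternation of directions keeps the endpoints aligned. If this bookkeeping turns out awkward, the fallback is to cite the known existence of adjacent-transposition Gray codes for weak compositions.

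\emph{Counting.} Pair the 1st and 2nd vertices of the path, the 3rd and 4th, and so on. Each pair is adjacent, hence admissible, and the pairs are pairwise disjoint. This yields $\lfloor N/2\rfloor$ pairs, where $N=\binom{n+i}{i+1}$. It remains to check that $\lfloor N/2\rfloor\ge M:=\binom{n+i-1}{i}$, i.e.\ $N\ge 2M$. Since
\[
\frac{N}{M}=\frac{n+i}{i+1},
\]
this is equivalent to $n\ge i+2$. That follows from the hypotheses: $i<d-1$ gives $i+2\le d$, and $d<n$ then gives $i+2<n$.
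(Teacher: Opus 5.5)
Your proposal is correct and follows essentially the paper's argument: a Hamiltonian path in the admissibility graph on degree-$(i+1)$ exponent vectors (the paper simply cites the Ruskey--Savage Gray code for multiset combinations), pairing consecutive vertices to obtain $\lfloor N/2\rfloor$ disjoint admissible pairs, and the count $N/M=(n+i)/(i+1)\ge 2 \iff n\ge i+2$, which, like the paper, relies on the standing assumption $d<n$. Your self-contained reflected induction also goes through as stated. With endpoints $s\,e_1$ and $s\,e_{n-1}$ for $G_{n-1,s}$, alternating layers are glued by moving a unit from $x_{n-1}$ or from $x_1$ into $x_n$, and the resulting path runs from $t\,e_1$ to $t\,e_n$, so the fallback citation is not needed.
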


\begin{proof}
In order to prove this result, we first establish (1.) a necessary condition for the existence of a sufficient number of admissible pairs, and then show (2.) that this condition is also sufficient by providing an explicit construction of such pairs.

\begin{enumerate}
    \item To generate at least \(\binom{n+i-1}{i}\) \emph{pairwise disjoint} pairs, at least \(2\binom{n+i-1}{i}\) polynomials are required. 
    Hence, a necessary condition is
    \[
    \begin{split}
       \binom{n + i}{i+1} &\ge 2\binom{n+i-1}{i} \implies \frac{(n+i)!}{(n-1)!(i+1)!} \ge 2\frac{(n+i-1)!}{(n-1)!(i)!}\\
       \implies \frac{n+i}{i+1} &\ge 2 \implies n+i \ge 2(i+1) \implies  n \ge i+2 \implies n > i+1,
    \end{split}
    \] where, in our setting, \(i+1\) is at most \(d\). As a result, if \(n > d\), we can generate a sufficient number of distinct pairs. 

    \item Let $1 \le i < n-1$ and let $\mathcal{D}_{i+1}$ be the set of diagonal polynomials of degree $i+1$.
    We define $\mathcal{A} := \{ \alpha \in \mathbb{Z}_{\ge 0}^n \colon \exists f \in \mathcal{D}_{i+1} \text{ with } \LM(f) = \mathbf{x}^\alpha \}$, i.e., the set of all multi-indices corresponding to the leading monomials of polynomials in $\mathcal{D}_{i+1}$.
    Following~\cite{DBLP:journals/ejc/RuskeyS96}, we observe that $\mathcal{A}$ can be viewed as the set of all combinations of $n$ elements chosen $i+1$ at a time, or as the placement of $i+1$ identical balls into $n$ distinct boxes, with each box containing at most $i+1$ balls. From this perspective, the elements of $\mathcal{A}$ can be ordered so that consecutive elements form admissible pairs. Such an ordering is known as a \emph{Gray code for combinations of a multiset}~\cite{DBLP:journals/ejc/RuskeyS96}.  
    This Gray code can be interpreted as a Hamiltonian path in the graph whose vertices are the elements of $\mathcal{A}$, with edges connecting pairs of vertices if and only if they form an admissible pair. By definition, a Hamiltonian path covers all vertices of the graph, which allows us to generate exactly $\left\lfloor {\binom{n+i}{i+1}}/{2} \right\rfloor$ pairwise disjoint admissible pairs. Denoting the elements along the path by $\ell_j$, $1 \le j \le \binom{n+i}{i+1}$, we may form the pairs $(\ell_{2k-1}, \ell_{2k})$ for $1 \le k \le \left\lfloor {\binom{n+i}{i+1}}/{2} \right\rfloor$.
    This ensures that each polynomial appears in at most one pair and that all pairs are admissible.
    Hence, as shown in (1.), when $n>i+1$ we can generate at least $\binom{n+i-1}{i}$ pairwise disjoint \emph{admissible} pairs.
\end{enumerate}
\end{proof} 

``\emph{Gray codes for combinations of a multiset}''~\cite{DBLP:journals/ejc/RuskeyS96} admit a constant-time successor computation algorithm~\cite{DBLP:journals/cj/Takaoka99,DBLP:journals/corr/Takaoka15a}. This successor procedure can be used to efficiently compute the Hamiltonian path in the adjacency graph described in the proof of \Cref{lemma:spoly_number}. Consequently, admissible pairs can be generated incrementally and on-demand during the S-polynomial generation phase, without increasing the asymptotic complexity of~\Cref{alg:Sgen}.
For the sake of completeness, the pseudocode for the \textsc{Successor} procedure is provided in \Cref{alg:successor} (\Cref{sec:successor}).

\subsection{S-polynomial reduction}\label{sec:onspolyred}

\begin{algorithm}[b]
\caption{(\texttt{Sred}) --- S-polynomial reduction}\label{alg:Sred}
\begin{algorithmic}[1]
\Statex{\hspace*{-\algorithmicindent} \textbf{Input:} $\mathcal{P}_i$: set of S-polynomials, $\mathcal{D}_{i+1}$: diagonal polynomial set, $n \ge 1$: dimension, $1 \le i \le d-1$: current degree}
\Statex{\hspace*{-\algorithmicindent} \textbf{Output:} $\mathcal{Q}_i$: set of reduced polynomials w.r.t.\ $\mathcal{D}_{i+1}$}

\State{Let $(I \ A)$ be the coefficient matrix of $\mathcal{D}_{i+1}$} \Comment{$I$ corresponds to deg $i+1$}
\State{Let $(B \ C)$ be the coefficient matrix of $\mathcal{P}_i$} \Comment{Columns aligned with $(I \ A)$}
\State{$M_{\mathcal{Q}} \gets C - B \times A$} \Comment{Annihilate submatrix $B$ via Schur complement}
\State{Let $\mathcal{Q}_i$ be the polynomial set derived from $M_{\mathcal{Q}}$} \Comment{Map rows back to polynomials}
\State{\Return{$\mathcal{Q}_i$}}
\end{algorithmic}
\end{algorithm}

Once the set of S-polynomials $\mathcal{P}_i$ is generated, it must be reduced with respect to $\mathcal{D}_{i+1}$ to eliminate all monomials of degree $i+1$. Because $\mathcal{D}_{i+1}$ is strictly diagonalized, this step can entirely bypass the polynomial division typically used in standard Gr{\"o}bner basis techniques and be executed \emph{exclusively via linear operations}.

Let the block coefficient matrix associated with the joint system of $\mathcal{D}_{i+1}$ and $\mathcal{P}_i$ be defined as:
\[
M = \begin{pmatrix} I & A \\ B & C \end{pmatrix},
\]
where the top row-block $(I \ A)$ corresponds to $\mathcal{D}_{i+1}$ (with the identity matrix $I$ representing the isolated leading monomials of degree $i+1$) and the bottom row-block $(B \ C)$ corresponds to the newly generated $\mathcal{P}_i$. The matrices $A$ and $C$ encode the coefficients of the lower-degree terms. 

Reducing $\mathcal{P}_i$ with respect to $\mathcal{D}_{i+1}$ is algebraically equivalent to performing block Gaussian elimination to annihilate the submatrix $B$. This operation naturally yields the Schur complement~\cite{golub13} $C - BA$, which corresponds exactly to the coefficient matrix of the reduced polynomial set $\mathcal{Q}_i$. \Cref{alg:Sred} formalizes this matrix-algebraic reduction.

\subsection{Correctness}\label{sec:correct}
Throughout this section, we assume that the error terms satisfy the prescribed bound. Thus, the secret $\mathbf{s}$ is a common zero of all polynomials in the initial set of samples $\mathcal{P}_d$.

\begin{proposition}[Preservation of the LWE secret]\label{prop:preservation}
At each iteration $i$ of \Cref{alg:alg}, where $1 \le i \le d-1$, the secret $\mathbf{s}$ is a common zero of all polynomials in $\mathcal{D}_i$.
\end{proposition}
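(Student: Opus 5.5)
We argue by downward induction on the degree, showing that every set in the chain $\mathcal{P}_d=\mathcal{Q}_d \to \mathcal{D}_d \to \mathcal{P}_{d-1} \to \mathcal{Q}_{d-1} \to \mathcal{D}_{d-1} \to \cdots \to \mathcal{D}_1$ is contained in the ideal $I=\langle \mathcal{P}_d\rangle \subseteq \mathbb{F}_q[x_1,\dots,x_n]$. Since $\mathbf{s}$ is a common zero of $\mathcal{P}_d$ (the error terms satisfy the prescribed bound), every element of $I$ vanishes at $\mathbf{s}$. The statement for $\mathcal{D}_i$ then follows at once. It also holds for the intermediate sets $\mathcal{P}_i$ and $\mathcal{Q}_i$. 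In fact it suffices to prove the weaker invariant that each set vanishes at $\mathbf{s}$, by checking that each of the three operations preserves it.

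\textbf{Base step.} $\mathcal{D}_d$ is obtained from $\mathcal{P}_d$ by computing the RREF of its coefficient matrix. Each row of the RREF is an $\mathbb{F}_q$-linear combination of the original rows. Because polynomial evaluation at $\mathbf{s}$ is $\mathbb{F}_q$-linear in the coefficient vector, every polynomial in $\mathcal{D}_d$ is a linear combination $\sum_j c_j p_j$ with $p_j\in\mathcal{P}_d$. Hence it vanishes at $\mathbf{s}$.

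\textbf{Inductive step.} Assume every $f\in\mathcal{D}_{i+1}$ satisfies $f(\mathbf{s})=0$. We check the three operations in turn.
\begin{enumerate}
  \item \texttt{Sgen}. Each element of $\mathcal{P}_i$ has the form $S_{fg}=x_k f - x_\ell g$ by \Cref{eq:spoly}, with $f,g\in\mathcal{D}_{i+1}$. Evaluating gives $S_{fg}(\mathbf{s})=s_k f(\mathbf{s})-s_\ell g(\mathbf{s})=0$. Polynomial multiples and sums of members of the ideal stay in the ideal; this is exactly where we use that $S$-polynomials have polynomial, not merely scalar, coefficients.
  \item \texttt{Sred}. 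By \Cref{alg:Sred}, the coefficient matrix of $\mathcal{Q}_i$ is $C-BA$. Row $r$ of $C-BA$ is row $r$ of $(B\ C)$ minus $\sum_j B_{rj}$ times row $j$ of $(I\ A)$, restricted to the lower-degree columns. The degree-$(i+1)$ block of this combination is $B-BI=0$, so row $r$ of $C-BA$ is exactly the coefficient vector of $p_r-\sum_j B_{rj} f_j$, with $p_r\in\mathcal{P}_i$ and $f_j\in\mathcal{D}_{i+1}$. This polynomial vanishes at $\mathbf{s}$.
  \item \texttt{Diag}. This is again an RREF, hence consists of invertible row operations, i.e.\ scalar linear combinations of elements of $\mathcal{Q}_i$. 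By linearity of evaluation, every element of $\mathcal{D}_i$ vanishes at $\mathbf{s}$.
\end{enumerate}

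The only point requiring care, and the expected main obstacle, is the \texttt{Sred} step. There one must justify that the Schur complement $C-BA$, which discards the annihilated degree-$(i+1)$ columns, really represents honest polynomials $p_r-\sum_j B_{rj}f_j$ rather than a truncation. We handle this via the block identity
\[
\begin{pmatrix} I & 0\\ -B & I\end{pmatrix}\begin{pmatrix} I & A\\ B & C\end{pmatrix}=\begin{pmatrix} I & A\\ 0 & C-BA\end{pmatrix},
\]
together with the alignment of columns between $(I\ A)$ and $(B\ C)$ stated in \Cref{alg:Sred}. One should also remark that the argument does not depend on the probabilistic invertibility claims: even if some pivot fails, the rows produced are still linear combinations of valid rows, so preservation holds unconditionally. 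Chaining the three steps from $i=d-1$ down to $1$ completes the induction.
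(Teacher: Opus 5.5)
Your proof is correct and follows essentially the same route as the paper. The paper also argues by downward induction: the base case holds because the RREF rows are $\mathbb{F}_q$-linear combinations of $\mathcal{P}_d$; then $S_{fg}(\mathbf{s})=s_k f(\mathbf{s})-s_\ell g(\mathbf{s})=0$, and \texttt{Sred} and \texttt{Diag} only form linear combinations. Your block-elimination check that the rows of $C-BA$ are the honest polynomials $p_r-\sum_j B_{rj}f_j$ makes explicit what the paper asserts in one line, and your remark that preservation holds even if a pivot fails is a correct addition.
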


\begin{proof}
\textbf{{Base case.}}
$\mathcal{D}_d$ is obtained from $\mathcal{P}_d$ via diagonalization (\texttt{Diag}), which consists of linear combinations of polynomials in $\mathcal{P}_d$.  
Since $\mathbf{s}$ is a common zero of all polynomials in $\mathcal{P}_d$, it is also a zero of all polynomials in $\mathcal{D}_d$.

We proceed by induction on $i$, decreasing from $i=d-1$ to $i=1$.

\noindent
\textbf{{Inductive step.}}  
Assume that, at iteration $i$, $\mathbf{s}$ is a common zero of all polynomials in $\mathcal{D}_{i+1}$.
The following steps are performed:
\begin{enumerate}
    \item \emph{S-polynomial generation.} Let $f,g\in\mathcal{D}_{i+1}$. By the inductive hypothesis, they vanish at $\mathbf{s}$. By definition (\Cref{eq:spoly-expanded}), their S-polynomial also vanishes at $\mathbf{s}$.

    \item \emph{S-polynomial reduction, and diagonalization.} We showed in \Cref{sec:onspoly} that reductions do not involve polynomial divisions.
    Therefore, these steps only consist of linear combinations of previously
    generated polynomials. Since all such polynomials vanish at
    $\mathbf{s}$, the resulting polynomials also vanish at $\mathbf{s}$.
\end{enumerate}
Therefore, $\mathbf{s}$ is a common zero of all polynomials in $\mathcal{D}_i$, for all $1 \le i \le d$.
\end{proof}

By \Cref{prop:preservation}, $\mathbf{s}$ remains a common zero of the polynomials throughout all iterations.
In particular, at the last iteration, all polynomials in $\mathcal{D}_1$ are linear, monic, and of degree $1$.
Each equation, due to the diagonalization process, is of the form $x_i - s_i$ for $i \in \{1,\dots,n\}$, and the secret is uniquely determined as $\mathbf{s} = (s_1, \dots, s_n)$.

\begin{proposition}[Termination]\label{prop:termination}
\Cref{alg:alg} terminates after a finite number of iterations.
\end{proposition}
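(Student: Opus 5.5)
The plan is to show that \Cref{alg:alg} performs a fixed, finite number of elementary steps, independent of any convergence property. Termination is therefore a counting argument, not an algebraic one. The outer \texttt{for}-loop runs over $i = d-1, d-2, \dots, 1$, so it has exactly $d-1$ iterations. Each iteration makes exactly one call to \texttt{Sgen}, one to \texttt{Sred} and one to \texttt{Diag}, and together with the initial call to \texttt{Diag} on $\mathcal{P}_d$ this gives $3(d-1)+1$ subroutine calls. It then suffices to show that each subroutine call terminates.

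\textbf{\texttt{Sgen}.} \Cref{alg:Sgen} consists of a loop with exactly $\binom{n+i-1}{i}$ passes. Each pass does three things:
\begin{itemize}
\item it makes two calls to \textsc{Successor}, each of which runs in constant time by~\cite{DBLP:journals/cj/Takaoka99,DBLP:journals/corr/Takaoka15a};
\item it looks up $f,g$ in the finite set $\mathcal{D}_{i+1}$, which is unique by diagonality;
\item it computes $S_{fg}=x_kf-x_\ell g$ with finitely many coefficient operations.
\end{itemize}
The loop is well-defined, and in particular never runs out of admissible pairs, because \Cref{lemma:spoly_number} guarantees at least $\binom{n+i-1}{i}$ pairwise disjoint admissible pairs whenever $n>i+1$. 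This holds since $i+1\le d<n$. The boundary case $i=d-1$ is not literally covered by the lemma's range $i<d-1$, but its proof only uses $n>i+1$, so I would invoke that proof directly.

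\textbf{\texttt{Sred}.} \Cref{alg:Sred} computes one matrix product and one subtraction, $C-BA$, on finite matrices over $\mathbb{F}_q$, so it consists of finitely many field operations.

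\textbf{\texttt{Diag}.} This step is an RREF computation on a finite matrix. Gaussian elimination terminates after at most polynomially many field operations in the matrix dimensions, whether or not the leading principal submatrix turns out to be invertible.

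The main subtlety is to separate termination from correctness. If a leading principal submatrix is singular, the output may fail to be of the form $\{x_j-s_j\}$, but the algorithm still halts. So I would state explicitly that termination holds unconditionally, and that the probabilistic invertibility results (\Cref{sec:probabilities2,sec:probabilitiesD}) are needed only for the success guarantee. Combining the finite iteration count with the finiteness of each step gives the claim, with the total cost made precise later in the complexity analysis.
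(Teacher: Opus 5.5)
Your proposal is correct and rests on the same observation as the paper's proof: the loop over $i=d-1,\dots,1$ (equivalently, the degree dropping by one per iteration) runs exactly $d-1$ times. You additionally check that each call to \texttt{Sgen}, \texttt{Sred} and \texttt{Diag} is a finite computation and flag the $i=d-1$ boundary case of the feasibility lemma, both of which the paper's one-line argument leaves implicit; one caveat on your unconditional-halting remark is that if a diagonalization fails, \texttt{Sgen} may find no $f\in\mathcal{D}_{i+1}$ with the prescribed leading monomial and \texttt{Sred} no longer sees the $(I\ A)$ form, so you need the harmless convention that the algorithm then aborts rather than relying on uniqueness ``by diagonality''.
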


\begin{proof}
At each iteration of the algorithm, the total degree is decreased by $1$. Thus, the algorithm stops after $d-1$ iterations, yielding linear equations. 
\end{proof}

\section{Computational Complexity}\label{sec:complexity}
We now establish the overall computational complexity of \Cref{alg:alg} by analyzing the individual costs of the $\texttt{Sgen}$, $\texttt{Sred}$, and $\texttt{Diag}$ procedures. Our analysis strictly counts the number of field multiplications. Field additions are of the same asymptotic order and thus do not affect the overall complexity. 

Before proceeding with the detailed analysis, we prove the following algebraic relation, which will be used in the subsequent proofs to easily estimate and isolate the dominant term in our complexity calculations.

\begin{lemma}\label{lem:complexity_bound}
Let $n, h$ be positive integers such that $h < n$, and let $c \ge 1$ be a constant exponent. The following asymptotic bound holds:
\[
\sum_{i=1}^{h} \mathcal{O}\left( \binom{n+i-1}{i}^c \right) = \mathcal{O}\left( h \cdot \frac{n^{hc}}{{(h!)}^c} \right).
\]
\end{lemma}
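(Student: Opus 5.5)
We bound each summand by the last one and then count the summands. Write $T_i := \binom{n+i-1}{i} = \frac{n(n+1)\cdots(n+i-1)}{i!}$. The plan has three steps.

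\textbf{Step 1 (monotonicity).} The ratio of consecutive terms is
\[
\frac{T_{i+1}}{T_i} = \frac{n+i}{i+1} \ge 1 \quad \text{for all } i \ge 0,
\]
since $n \ge 1$. Hence $T_i \le T_h$ for every $1 \le i \le h$. As $c \ge 1$, the map $x \mapsto x^c$ is increasing, so $T_i^c \le T_h^c$ as well. This gives
\[
\sum_{i=1}^{h} \mathcal{O}\left(T_i^c\right) = \mathcal{O}\left(h \cdot T_h^c\right).
\]

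\textbf{Step 2 (estimating $T_h$).} We write
\[
T_h = \frac{n^h}{h!} \prod_{j=0}^{h-1}\left(1 + \frac{j}{n}\right).
\]
The factor $\prod_{j<h}(1+j/n)$ is at most $\exp\!\left(h(h-1)/(2n)\right)$. When $h$ is treated as a constant, or more generally when $h^2 = \mathcal{O}(n)$, this factor is $\mathcal{O}(1)$. Raising to the power $c$ then gives $T_h^c = \mathcal{O}\!\left(n^{hc}/(h!)^c\right)$, and combining with Step 1 yields the claim.

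\textbf{Step 3 (the main obstacle).} The difficulty is that Step 2 depends on this asymptotic regime, and the hypothesis $h < n$ alone does not give it. For $h$ close to $n$, the factor $\prod_{j<h}(1+j/n)$ can grow exponentially in $h$, so the bound cannot hold as an unconditional statement uniform in $h$. I will therefore state explicitly the convention the paper uses for its parameters: $d$ (and hence $h \le d$) is fixed while $n \to \infty$, with the $\mathcal{O}$-constant allowed to depend on $h$ and $c$. Under this convention the product tends to $1$, and the argument above is complete.

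I would also note the alternative bound $T_h \le n^h/h! \cdot (1+h/n)^h$ as a crude but explicit estimate. It shows the hidden constant is at most $e^{hc\,h/n}$, which is $\mathcal{O}(1)$ precisely when $h^2 = \mathcal{O}(n)$. This makes clear when the lemma is meaningful.
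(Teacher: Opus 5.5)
Your argument is correct, and it is more careful than the paper's. The paper works in the opposite order. It first replaces each $\binom{n+i-1}{i}$ by $n^i/i!$, asserting that the numerator factors are bounded by $n$ ``up to a small constant factor''. It then factors $n^{hc}/(h!)^c$ out of $\sum_i n^{ic}/(i!)^c$, bounds every ratio $\bigl(h(h-1)\cdots(h-k+1)/n^k\bigr)^c$ by $1$, and counts $h$ terms. That last part is the same ``largest term times number of terms'' estimate as your Step~1; the geometric-series phrasing adds nothing, since each term is simply bounded by $1$. Your monotonicity argument via $T_{i+1}/T_i=(n+i)/(i+1)\ge 1$ applies this estimate to the exact binomials instead. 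The only approximation left is then for $T_h$, with the explicit error factor $\prod_{j<h}(1+j/n)\le e^{h(h-1)/(2n)}$. This is exactly the point the paper glosses over: a constant loss per numerator factor compounds to $\prod_{j<i}(1+j/n)$, which is only bounded by $2^{i}$ and is genuinely exponential when $i$ is comparable to $n$. Your Step~3 objection is therefore a real defect of the statement, not a matter of convention. For $h=n-1$, Stirling gives $T_h=\binom{2n-2}{n-1}\approx\frac{\sqrt2}{4}\,(4/e)^{n}\,n^{h}/h!$, so the left side exceeds the right side by a factor of order $(4/e)^{nc}/n$, and the lemma fails uniformly in $h<n$. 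With $h$ fixed, the factors $h$ and $(h!)^c$ are themselves absorbed into the constant, so the bound has real content only in your regime $h^2=\mathcal{O}(n)$. Conveniently, the paper's success-probability analysis later imposes $n\ge d^2+1$. Under that condition your hidden constant is at most $e^{c/2}$ uniformly in $d$, so the complexity theorem, which assumes only $d<n$, is justified when read with that restriction.
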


\begin{proof}
For a given $i \le h$, we approximate the binomial coefficient. Expanding the factorial yields $\binom{n+i-1}{i} = \frac{(n+i-1)(n+i-2)\cdots(n)}{i!}$. Since $i \le h < n$, the terms in the numerator can be upper-bounded by $n$, up to a small constant factor. Thus, the dominant term of the binomial coefficient expands to $n^i/(i!)$, yielding:
\[
\sum_{i=1}^{h} \mathcal{O}\left( \binom{n+i-1}{i}^c \right) = \sum_{i=1}^{h} \mathcal{O}\left( \frac{n^{ic}}{{(i!)}^c} \right).
\]
To rigorously bound this finite sum, we factor out the maximal term at $i=h$:
\[
\sum_{i=1}^{h} \frac{n^{ic}}{{(i!)}^c} = \frac{n^{hc}}{{(h!)}^c} \left( 1 + {\left(\frac{h}{n}\right)}^c + {\left(\frac{h(h-1)}{n^2}\right)}^c + \cdots + {\left(\frac{h!}{n^{h-1}}\right)}^c \right).
\]

Observe that the $k$-th term inside the parentheses (for $1 \le k \le h-1$) takes the form ${\left( \frac{h(h-1)\cdots(h-k+1)}{n^k} \right)}^c$, where the numerator product is strictly bounded by $h^k$. By defining the ratio parameter $\alpha = {(h/n)}^c$, we can upper-bound the entire parenthetical expression with a finite geometric series:
\[
\sum_{k=0}^{h-1} {\left(\frac{h^k}{n^k}\right)}^c = \sum_{k=0}^{h-1} \alpha^k.
\]
By hypothesis, $h < n$ and consequently $h/n < 1$. Since the exponent $c \ge 1$, it follows that $\alpha < 1$. Because $\alpha$ is strictly positive and bounded by $1$, every term $\alpha^k$ in the series is strictly less than or equal to $1$. Since the series consists of exactly $h$ terms, the entire sum evaluates to a value strictly bounded by $h \cdot 1 = h$. Therefore, the sum of all lower-degree terms introduces at most a multiplicative factor of $h$, yielding the final asymptotic bound of $\mathcal{O}\left( h \cdot \frac{n^{hc}}{{(h!)}^c} \right)$.
\end{proof}

\begin{theorem}\label{thm:iterationCost}
    In \Cref{alg:alg}, the computational complexities of the \textup{\texttt{Sgen}}, \textup{\texttt{Sred}}, and \textup{\texttt{Diag}} procedures across all their respective iterations are bounded as follows:
    \begin{enumerate}
        \item $\mathscr{C}_{\textup{\texttt{Sgen}}} = \mathcal{O}\left((d-1)\cdot \dfrac{n^{2(d-1)}}{{((d-1)!)}^2} \right)$,
        \item $\mathscr{C}_{\textup{\texttt{Sred}}} = \mathcal{O}\left((d-1)\cdot \dfrac{n^{1 + \omega(d-1)}}{{((d-1)!)}^\omega} \right)$,
        \item $\mathscr{C}_{\textup{\texttt{Diag}}} = \mathcal{O}\left(d\cdot \dfrac{n^{d\omega}}{{(d!)}^\omega} \right)$,
    \end{enumerate}
    where we recall that $d<n$, as in \Cref{thm:result}.
\end{theorem}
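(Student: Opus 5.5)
The plan is to bound the cost of each procedure at a single iteration $i$ in terms of the sizes $N_i := \binom{n+i-1}{i}$ (the number of monomials of degree exactly $i$, which equals $|\mathcal{P}_i| = |\mathcal{Q}_i| = |\mathcal{D}_i|$). We also use the number of monomials of degree at most $i$, which is $\binom{n+i}{i} = \mathcal{O}(N_i)$ for fixed $i<n$. The per-iteration costs are then summed over $i$ using \Cref{lem:complexity_bound} with the appropriate exponent $c$ and range $h$. Throughout we count field multiplications, and we use that $N_{i+1} = N_i\cdot\frac{n+i}{i+1} = \mathcal{O}(n N_i / (i+1))$, which lets us move between consecutive degrees.

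\textbf{\texttt{Sgen}.} At iteration $i$ we form $N_i$ S-polynomials of the shape $x_k f - x_\ell g$ (\Cref{eq:spoly}), with $f,g\in\mathcal{D}_{i+1}$. Each multiplication by a variable is a reindexing of the coefficient vector, and the subtraction costs $\mathcal{O}(\binom{n+i+1}{i+1})$ operations. The successor computation for admissible pairs (\Cref{lemma:spoly_number}) takes constant time per pair. The cost of this iteration is therefore $\mathcal{O}(N_i\cdot N_{i+1})$, which we bound by $\mathcal{O}(N_{i}^2 \cdot n/(i+1))$, or more crudely by a power of a single binomial. To match the stated bound $\mathcal{O}((d-1) n^{2(d-1)}/((d-1)!)^2)$, we aim for a per-iteration bound of $\mathcal{O}(N_i^2)$, i.e.\ $c=2$, and then apply \Cref{lem:complexity_bound} with $h=d-1$. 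To justify $\mathcal{O}(N_i^2)$, note that each $S_{fg}$ has only $\mathcal{O}(\binom{n+i}{i})$ nonzero columns below degree $i+2$. If needed, one can argue that generating an S-polynomial needs only linear work in the row length, which is dominated by the matrix size $N_i\times \binom{n+i}{i}$.

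\textbf{\texttt{Sred}.} The reduction is the Schur complement $C - BA$ (\Cref{alg:Sred}). Here $B$ is $N_i\times N_{i+1}$ and $A$ is $N_{i+1}\times \binom{n+i}{i}$. The product is computed by splitting it into square blocks of side $N_i$. Since $N_{i+1}/N_i = \mathcal{O}(n)$, the cost is $\mathcal{O}\big(\tfrac{N_{i+1}}{N_i}\, N_i^{\omega}\big) = \mathcal{O}(n N_i^\omega)$. Summing over $i=1,\dots,d-1$ with \Cref{lem:complexity_bound} ($c=\omega$, $h=d-1$) gives $\mathcal{O}\big((d-1)\, n\, n^{\omega(d-1)}/((d-1)!)^\omega\big)$.

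\textbf{\texttt{Diag}.} This computes the RREF of an $N_i\times\binom{n+i}{i}$ matrix, for $i=d,\dots,1$ (including the initial diagonalization of $\mathcal{P}_d$). Because the number of columns is $\mathcal{O}(N_i)$, standard fast elimination costs $\mathcal{O}(N_i^\omega)$. Summing with \Cref{lem:complexity_bound} ($h=d<n$, $c=\omega$) yields $\mathcal{O}(d\, n^{d\omega}/(d!)^\omega)$.

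The main obstacle is making the ``up to constant factor'' estimates uniform. Two points need care: the column count $\binom{n+i}{i}$ versus the row count $N_i$, and the \texttt{Sgen} estimate, where the natural count $N_iN_{i+1}$ carries an extra factor $n/(i+1)$ relative to $N_i^2$. I would handle the column count via $\binom{n+i}{i}=\frac{n+i}{n}N_i\le 2N_i$. For \texttt{Sgen}, I would first try to absorb the extra factor by noting that it is dominated by the \texttt{Sred} term anyway, and otherwise state the bound with $N_i N_{i+1}$ and rewrite it in the stated form.
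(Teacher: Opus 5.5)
Your treatment of \texttt{Sred} and \texttt{Diag} is the same as the paper's proof. For \texttt{Sred}, the product $BA$ with $B\in\mathbb{F}_q^{N_i\times N_{i+1}}$ and $A\in\mathbb{F}_q^{N_{i+1}\times\binom{n+i}{i}}$, split into square blocks of side $N_i$, costs $\mathcal{O}(N_i^{\omega-2}N_{i+1}\binom{n+i}{i})=\mathcal{O}(nN_i^{\omega})$. For \texttt{Diag}, the RREF costs $\mathcal{O}(N_i^{\omega-1}\binom{n+i}{i})=\mathcal{O}(N_i^\omega)$, using $\binom{n+i}{i}=\frac{n+i}{n}N_i\le 2N_i$. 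In both cases \Cref{lem:complexity_bound} with $c=\omega$ finishes the argument.

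The weak point is \texttt{Sgen}, and as your last paragraph commits to it, it does not prove item~1. The count you actually derive, $\mathcal{O}(N_iN_{i+1})$, exceeds $N_i^2$ by the factor $N_{i+1}/N_i=\frac{n+i}{i+1}$. That factor is unbounded: it is already $\frac{n+1}{2}$ at $i=1$. So it cannot be ``rewritten in the stated form''. Absorbing it into $\mathscr{C}_{\texttt{Sred}}$ only yields the total bound of \Cref{thm:result}, not the separate bound claimed for $\mathscr{C}_{\texttt{Sgen}}$. Both fallbacks therefore fail.

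What you need, and what the paper uses, is the sparsity coming from the diagonal form of $\mathcal{D}_{i+1}$. Each $f\in\mathcal{D}_{i+1}$ equals $\LM(f)$ with coefficient $1$, plus at most $\binom{n+i}{i}$ terms of degree $\le i$ (its row of the block $A$). For an admissible pair, $x_k\LM(f)=x_\ell\LM(g)$, so the top-degree terms cancel with no arithmetic at all. Hence
$S_{fg}=x_k\bigl(f-\LM(f)\bigr)-x_\ell\bigl(g-\LM(g)\bigr)$
is obtained by re-indexing two coefficient vectors with at most $\binom{n+i}{i}$ entries each, followed by $\mathcal{O}(\binom{n+i}{i})$ field additions. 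This gives $N_i\binom{n+i}{i}=\frac{n+i}{n}N_i^2\le 2N_i^2$ per iteration, and \Cref{lem:complexity_bound} with $c=2$, $h=d-1$ gives item~1.

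Your phrase ``linear work in the row length'' is only correct if the row length means this sparse $A$-part. If you use the dense row length $\binom{n+i+1}{i+1}$ of an element of $\mathcal{D}_{i+1}$, you are back to $N_iN_{i+1}$. Likewise, the sparsity of the output $S_{fg}$ is not what controls the cost; the sparsity of the inputs $f,g$, guaranteed by the identity block, is. So commit to that argument and drop the fallbacks.
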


\begin{proof}[Proof of \Cref{thm:iterationCost} {(1.)}]
\textbf{Complexity of \texttt{Sgen}.}
We determine the overall complexity of \texttt{Sgen}, denoted by $\mathscr{C}_\texttt{Sgen}$, by first bounding the cost of a single call at a given degree $i$, and then summing over all $1 \le i \le d-1$.

Consider the diagonalized set $\mathcal{D}_{i+1}$. Each polynomial in $\mathcal{D}_{i+1}$ contains at most $\binom{n+i}{i}+1$ monomials. Due to the S-polynomial structure in \Cref{eq:spoly}, computing one such polynomial requires at most $\binom{n+i}{i}$ field additions. Since we generate $\binom{n+i-1}{i}$ distinct S-polynomials at iteration $i$, the number of operations performed is exactly $\binom{n+i}{i}\binom{n+i-1}{i}$.

Using the combinatorial identity $\binom{n+i}{i} = \frac{n+i}{n}\binom{n+i-1}{i}$, the total complexity over all $d-1$ iterations is:
\[
\mathscr{C}_\texttt{Sgen} = \sum_{i=1}^{d-1} \frac{n+i}{n} \binom{n+i-1}{i}^2.
\]
Considering that $d < n$, the fractional multiplier $\frac{n+i}{n} = 1 + \frac{i}{n}$ is strictly bounded by $2$. Thus, it acts as an $\mathcal{O}(1)$ constant, simplifying the complexity sum to:
\[
\mathscr{C}_\texttt{Sgen} = \sum_{i=1}^{d-1} \mathcal{O}\left( \binom{n+i-1}{i}^2 \right) = \mathcal{O}\left( (d-1)\cdot \frac{n^{2(d-1)}}{{((d-1)!)}^2} \right),
\]
where the last equality follows directly from \Cref{lem:complexity_bound} by setting the upper limit $h = d-1$ and the exponent $c=2$. 
\end{proof}

\begin{proof}[Proof of \Cref{thm:iterationCost} {(2.)}]
\textbf{Complexity of \texttt{Sred}.}
We determine the overall complexity of \texttt{Sred}, denoted by $\mathscr{C}_\texttt{Sred}$, by first bounding the cost of a single call at a given degree $i$, and then summing over all $1 \le i \le d-1$.

As established in \Cref{sec:onspolyred} and \Cref{alg:Sred}, the reduction of the generated S-polynomials is algebraically equivalent to computing the Schur complement $C - BA$. Since the top-left block of the joint system is the identity matrix, no matrix inversion is required. Consequently, the computational cost of this reduction step is strictly dominated by the rectangular matrix multiplication $B \times A$, as the subsequent matrix subtraction is asymptotically negligible. 

Let $B \in \mathbb{F}_q^{\ell \times k}$ and $A \in \mathbb{F}_q^{k \times r}$. By partitioning the matrices into square sub-blocks of dimension $u = \min(\ell, k, r)$, the complexity of this product using fast matrix multiplication is strictly bounded by $\mathcal{O}(\ell\cdot k\cdot r \cdot u^{\omega-3})$~\cite{DBLP:journals/jc/HuangP98}.

In our case, at iteration $i$, the block dimensions are given by:
\[
k = \binom{n+i}{i+1}, \quad \ell = \binom{n+i-1}{i}, \quad r = \binom{n+i}{i}.
\]
Recalling that $d < n$, the minimum among these dimensions is $u = \ell$. Substituting these variables into the complexity bound, the term $u^{\omega-3}$ multiplies with the existing $\ell$ to produce an $\ell^{\omega-2}$ factor. 

Thus, the total complexity of the S-polynomial reduction step across all iterations evaluates to:
\begin{align*}
    \mathscr{C}_\texttt{Sred} &= \sum_{i=1}^{d-1} \mathcal{O}\left(\binom{n+i-1}{i}^{\omega-2} \cdot \binom{n+i}{i} \cdot \binom{n+i}{i+1}\right)\\
    &= \sum_{i=1}^{d-1} \mathcal{O}\left(\frac{{(n+i)}^2}{n(i+1)}\cdot\binom{n+i-1}{i}^{\omega}\right).
\end{align*}
Considering again that $d < n$, we can bound the fractional multiplier, observing that $\frac{n+i}{n} < 2$, and $\frac{n+i}{i+1} \le n+i < 2n$. Thus, the entire fraction $\frac{{(n+i)}^2}{n(i+1)}$ is strictly bounded by $\mathcal{O}(n)$. Extracting this multiplicative factor, the complexity sum simplifies to:
\[
\mathscr{C}_\texttt{Sred} = \mathcal{O}\left(n\right) \cdot \sum_{i=1}^{d-1} \mathcal{O}\left(\binom{n+i-1}{i}^{\omega}\right) = \mathcal{O}\left((d-1)\cdot \frac{n^{1 + \omega(d-1)}}{{((d-1)!)}^\omega} \right),
\]
where the last equality follows directly from \Cref{lem:complexity_bound} by setting the upper limit $h = d-1$ and the exponent $c=\omega$.
\end{proof}

\begin{figure}[t]
    \centering
    \resizebox{0.8\linewidth}{!}{%
        \begin{tikzpicture}[
    every label/.append style={anchor=base, yshift=3pt},
    matparens/.style={
      inner sep=3pt,
      left delimiter={(},
      right delimiter={)}
    }
  ]
  \node[matparens] (block1) {
    \begin{tikzpicture}
      \node[lipicsGray, matrix={2}{2}, fill=lipicsBulletGray] (R) {};
      \draw[lipicsGray, line width=0.70mm] (R.north west) -- (R.south east) -- cycle;
      \node[font=\LARGE\bfseries, fill=white, draw=lipicsGray, line width=0.4pt, inner sep=2pt] at (R.center) {$i$};
      \node[lipicsBulletGray, matrix={2}{3}, fill=lipicsLightGray, right=2mm of R] (T) {};
      \node[black] at (T.center) {\LARGE$<i$};
    \end{tikzpicture}
  };
  \node[eqn, right=8mm of block1] (arrow) {$\implies$};
  \node[matparens, right=8mm of arrow] (block3) {
    \begin{tikzpicture}
      \node[lipicsGray, matrix={2}{2}, fill=white] (P) {};
      \draw[lipicsGray, line width=0.70mm] (P.north west) -- (P.south east) -- cycle;
      \node[font=\LARGE\bfseries, fill=white, draw=lipicsGray, line width=0.4pt, inner sep=2pt] at (P.center) [xshift=-2mm] {\LARGE$i$};
      \node[lipicsBulletGray, matrix={2}{3}, fill=lipicsLightGray, right=2mm of P] (L) {};
      \node[black] at (L.center) [xshift=-6mm] {\LARGE$<i$};
    \end{tikzpicture}
  };
\end{tikzpicture}%
    }
    \caption{\texttt{Diag}: Reduction to RREF, diagonal with high probability (\Cref{sec:probabilities2,sec:probabilitiesD}).}\label{fig:diagonalization}
\end{figure}\label{fig:step2a}

\begin{proof}[Proof of \Cref{thm:iterationCost} {(3.)}]
\textbf{Complexity of \texttt{Diag}.}
We determine the overall complexity of \texttt{Diag}, denoted by $\mathscr{C}_\texttt{Diag}$, by first bounding the cost of a single call at a given degree $i$, and then summing over all $1 \le i \le d$.

Let $1\le i \le d$ and consider the polynomial set $\mathcal{P}_{i}\subset \mathbb{F}_q[x_1,\dots,x_n]$. The corresponding coefficient matrix has $r = \binom{n+i-1}{i}$ rows and $c = \binom{n+i}{i}$ columns. By employing fast matrix multiplication techniques, a rectangular $r \times c$ matrix (with $r \le c$) can be reduced to RREF in $\mathcal{O}(r^{\omega-1} c)$ field operations~\cite{DBLP:journals/jal/IbarraMH82}, where $2\le\omega <3$ is the exponent of matrix multiplication.

Using the combinatorial identity $c = \frac{n+i}{n}\cdot r$, the computational cost for a single reduction step simplifies to:
\[
\mathcal{O}\left(r^{\omega-1} c\right) = \mathcal{O}\left( \frac{n+i}{n}\cdot r^\omega \right) = \mathcal{O}\left( \frac{n+i}{n}\cdot \binom{n+i-1}{i}^\omega \right).
\]

The total complexity of the RREF reduction is obtained by summing the contribution of the $d$ iterations over all degrees $1\leq i \leq d$:
\begin{align*}
    \mathscr{C}_\texttt{Diag}
    = \sum_{i=1}^{d}  \mathcal{O}\left( \frac{n+i}{n}\cdot\binom{n+i-1}{i}^\omega \right).
\end{align*}

Considering that $d < n$, the fractional multiplier $\frac{n+i}{n}$ is strictly bounded by $2$. Thus, it acts as an $\mathcal{O}(1)$ constant, simplifying the complexity sum to:
\[
\mathscr{C}_\texttt{Diag} = \sum_{i=1}^{d} \mathcal{O}\left( \binom{n+i-1}{i}^\omega \right) = \mathcal{O}\left( d\cdot\frac{n^{d\omega}}{{(d!)}^\omega} \right),
\]
where the last equality follows directly from \Cref{lem:complexity_bound} by setting the upper limit $h = d$ and the exponent $c=\omega$.
\end{proof}

\begin{proof}[Proof of \Cref{thm:result}]
    The total computational complexity of \Cref{alg:alg}, denoted by $\mathscr{C}_{tot}$, follows directly from summing the asymptotic costs of the three main procedures established in \Cref{thm:iterationCost}. Thus, we obtain:
    \begin{equation*}
        \mathscr{C}_{tot} = \mathscr{C}_{\texttt{Sgen}} + \mathscr{C}_{\texttt{Sred}} + \mathscr{C}_{\texttt{Diag}} = \mathcal{O}\left( d\cdot \dfrac{n^{d\omega}}{{(d!)}^\omega} \right).
    \end{equation*}
    The correctness of \Cref{alg:alg} is proven in \Cref{sec:correct}.
\end{proof}

\paragraph*{\belwe.}
This case follows directly from our general framework by setting $d = 2$. Consequently, \Cref{alg:alg} requires $n(n+1)/2$ samples and has a computational complexity of $\mathcal{O}\left({2\cdot n^{2\omega}}/{{(2!)}^\omega}\right) = \mathcal{O}\left(n^{2\omega}\right)$.

\paragraph*{\bslwe.}
By appending the constraint set $\{ x_j^2 - x_j = 0: 1 \le j \le n\}$, we restrict the system exclusively to square-free monomials. Under the condition that $n > 3d - 1$, this setting requires $\binom{n}{d}$ samples, yielding a computational complexity of $\mathcal{O}\left( d\cdot{n^{d\omega}}/{{(d!)}^\omega} \right)$. A more detailed discussion is given in~\Cref{sec:binarysecret}.

\section{Success Probability for \belwe}\label{sec:probabilities2}
We analyse the success probability of \Cref{alg:alg} separately for the two cases $d=2$ and $d>2$. Here we treat the case $d=2$, corresponding to the \belwe\ problem, where \Cref{alg:alg} performs a single iteration; the case $d>2$ is treated in \Cref{sec:probabilitiesD}. Throughout both sections the number of samples is $m=\binom{n+d-1}{d}$, and we will denote this quantity by $N$.

For $d=2$, \Cref{alg:alg} diagonalizes the homogeneous degree-$2$ part of the LWE polynomial system, which succeeds exactly when the coefficient matrix $\mathbf C\in\mathbb F_q^{N\times N}$ of that part has full rank, where each row of $\mathbf C$ is determined by the corresponding LWE sample vector $\mathbf a_i$ alone. The success probability is therefore
\[
p_{\mathrm{succ}}=\Pr\bigl[\operatorname{rank}\mathbf C=N\bigr],
\]
taken over the uniform samples $\mathbf a_1,\dots,\mathbf a_N\in\mathbb F_q^{n}$.

\paragraph*{Proof idea.}
If $\mathbf C$ does not have full rank, there is a nonzero vector $\boldsymbol\phi\in\mathbb F_q^{N}$ in its kernel. Since $\mathbf C$ is the homogeneous degree-$2$ part of the system, its entries depend only on the samples $\mathbf a_i$. The coefficient vector $\boldsymbol\phi$ can be seen as the coefficients of a quadratic form $Q_{\boldsymbol\phi}$, that is, a homogeneous polynomial of degree $2$, and $\mathbf C\boldsymbol\phi=\mathbf 0$ implies that $Q_{\boldsymbol\phi}$ vanishes at each $\mathbf a_i$. Hence $p_{\mathrm{succ}}$ is at least $1$ minus the probability that quadratic form $Q_{\boldsymbol\phi}$ vanishes at all the samples, which is the quantity we bound.

Given that the samples are uniform, $Q_{\boldsymbol\phi}$ vanishes at all of them with probability, say, $\rho_{\boldsymbol\phi}^{\,N}$, i.e. $\rho_{\boldsymbol\phi}$ is the fraction of $\F_q^n$ on which $Q_{\boldsymbol\phi}$ vanishes. It remains to sum this over all the forms that could occur, and the key point is that $\rho_{\boldsymbol\phi}$ depends only on their rank, so they can be grouped accordingly.

We therefore bound $\rho_{\boldsymbol\phi}$ in terms of the rank in \Cref{lemma:vanish}. Combined with the number of quadratic forms of each rank, this gives the general bound of \Cref{lemma:fullrank}. Estimating the resulting sum yields the explicit statement of \Cref{cor:easy}.

\paragraph*{Proof of success probability for $d=2$.}
We begin by computing the fraction of $\F_q^n$ on which a quadratic form $Q_{\boldsymbol\phi}$ vanishes, which depends only on its rank.

\begin{lemma}[Vanishing fraction of a quadratic form]\label{lemma:vanish}
Let $q$ be odd and let $Q\in\mathbb{F}_q[x_1,\dots,x_n]$ be a homogeneous quadratic form of rank $1\le r\le n$. Let $\rho_r$ denote the fraction of $\mathbb{F}_q^{n}$ on which $Q$ vanishes. Then
\[
\rho_r=\frac{1}{q}\quad(r\text{ odd}),\qquad \rho_r\le \frac{1}{q}+(q-1)\,q^{-r/2-1}\quad(r\text{ even}).
\]
\end{lemma}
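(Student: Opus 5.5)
Since $q$ is odd, I will first diagonalize $Q$. Every quadratic form over $\mathbb{F}_q$ with $q$ odd is equivalent, under an invertible linear change of variables, to a diagonal form $c_1y_1^2+\cdots+c_ry_r^2$ with all $c_j\neq 0$, where $r$ is the rank. An invertible linear map is a bijection of $\mathbb{F}_q^n$, so the number of zeros is unchanged. The variables $y_{r+1},\dots,y_n$ are free and contribute a factor $q^{n-r}$. It therefore suffices to count the zeros $N_r$ of $D=\sum_{j\le r}c_jy_j^2$ in $\mathbb{F}_q^r$, since $\rho_r=N_r/q^r$.

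Next I will count $N_r$ with additive characters. Let $\psi(t)=e^{2\pi i t/q}$. Then
\[
N_r=\frac1q\sum_{t\in\mathbb{F}_q}\prod_{j=1}^r\sum_{y\in\mathbb{F}_q}\psi(tc_jy^2)=q^{r-1}+\frac1q\sum_{t\neq0}\prod_{j=1}^r G(tc_j),
\]
where $G(a)=\sum_y\psi(ay^2)=\eta(a)G(1)$, with $\eta$ the quadratic character and $|G(1)|=\sqrt q$. This gives
\[
N_r=q^{r-1}+\frac{G(1)^r\eta(c_1\cdots c_r)}{q}\sum_{t\neq0}\eta(t)^r .
\]

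The two parities now separate. For $r$ odd, $\sum_{t\neq0}\eta(t)=0$, so $N_r=q^{r-1}$ exactly and $\rho_r=1/q$. For $r$ even, $\eta(t)^r=1$, so the sum equals $q-1$, and $|G(1)^r|=q^{r/2}$. Hence $N_r\le q^{r-1}+(q-1)q^{r/2-1}$, and dividing by $q^r$ gives $\rho_r\le 1/q+(q-1)q^{-r/2-1}$, which is the claimed bound. In fact $G(1)^2=\eta(-1)q$, so $G(1)^r$ is real and the even case is an equality up to sign.

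None of these steps is a serious obstacle. The only facts I need to state are the standard ones: diagonalizability of quadratic forms in odd characteristic, and the Gauss sum identities $G(a)=\eta(a)G(1)$ and $|G(1)|^2=q$. I will cite these rather than reprove them. The rest of the proof is bookkeeping of the factor $q^{n-r}$ contributed by the free variables.
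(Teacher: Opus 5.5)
Your proof is correct and follows the paper's route: diagonalize, factor out $q^{n-r}$ for the free variables, and count the zeros of the nondegenerate diagonal form. The one difference is that the paper cites Lidl--Niederreiter (Thms.~6.26--6.27) for that count, while you derive it from Gauss sums, recovering the exact even-rank value $q^{r-1}+(q-1)q^{r/2-1}\eta\bigl((-1)^{r/2}c_1\cdots c_r\bigr)$; note only that $\psi(t)=e^{2\pi i t/q}$ is an additive character only for prime $q$ (the paper's setting), and for odd prime powers you would use $\psi(t)=e^{2\pi i\,\mathrm{Tr}(t)/p}$.
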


\begin{proof}
Since $\operatorname{char}\mathbb{F}_q\neq2$, the form $Q$ can be diagonalized: there is an invertible linear change of variables (that preserves the number of zeros) after which $Q$ is equivalent to
\[
c_1x_1^2+\cdots+c_r{x_r}^2,\qquad c_1,\dots,c_r\in\mathbb{F}_q^{*},
\]
where $r=\operatorname{rank}Q$ is the number of nonzero coefficients. The remaining $n-r$ variables do not appear and are therefore free. Writing $Z(Q)$ for the number of zeros of $Q$ in $\mathbb{F}_q^{n}$ and $Z_r$ for the number of zeros of $c_1x_1^2+\cdots+c_r{x_r}^2$ in $\mathbb{F}_q^{r}$, the free variables contribute a factor $q^{n-r}$:
\[
Z(Q)=q^{\,n-r}\,Z_r,\qquad\text{so}\qquad \rho_r=\frac{Z(Q)}{q^{n}}=\frac{Z_r}{q^{r}}.
\]

The form $c_1x_1^2+\cdots+c_r{x_r}^2$ is nondegenerate in $\mathbb F_q[x_1,\dots,x_r]$, so by~\cite[Thms.~6.26--6.27]{Lidl_Niederreiter_1996}, with discriminant $\Delta=c_1\cdots c_r$ and quadratic character $\eta$ ($|\eta|\le1$),
\[
Z_r=q^{\,r-1}\ (r\text{ odd}),\qquad
Z_r=q^{\,r-1}+(q-1)\,q^{\,r/2-1}\,\eta\left({(-1)}^{r/2}\Delta\right)\ (r\text{ even}).
\]
For even $r$, since $|\eta|\le1$, this gives $Z_r\le q^{\,r-1}+(q-1)\,q^{\,r/2-1}$. Dividing by $q^{r}$,
\[
\rho_r=\frac{1}{q}\ (r\text{ odd}),\qquad
\rho_r\le\frac{1}{q}+(q-1)\,q^{-r/2-1}\ (r\text{ even}).
\]
\end{proof}

Let $M(n,r)$ denote the number of symmetric $n\times n$ matrices of rank $r$ over $\mathbb F_q$, given by~\cite[Thm.~2]{MacWilliams1969}:
\begin{equation}\label{eq:sym_matrices}
  M(n,r)=\prod_{i=1}^{\lfloor r/2\rfloor}\frac{q^{2i}}{q^{2i}-1}\prod_{i=0}^{r-1}\bigl(q^{\,n-i}-1\bigr).
\end{equation}

\begin{lemma}[Full rank of $\mathbf C$]\label{lemma:fullrank}
Let $q$ be odd, $N=\binom{n+1}{2}$, and let $\mathbf C\in\mathbb F_q^{N\times N}$ have as rows the coefficient vectors of the homogeneous degree-$2$ polynomials defined from $N$ uniform samples $\mathbf a_1,\dots,\mathbf a_N\in\mathbb F_q^{n}$. Then
\[
\Pr[\operatorname{rank}\mathbf C=N]\ \ge\ 1-\frac{1}{q-1}\sum_{r=1}^{n} M(n,r)\,\rho_r^{\,N}.
\]
\end{lemma}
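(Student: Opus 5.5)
We argue by a union bound over the possible kernel vectors of $\mathbf C$, grouped projectively and by rank. The row of $\mathbf C$ attached to sample $\mathbf a_i$ is the coefficient vector of the homogeneous degree-$2$ part of $(\langle\mathbf a_i,\mathbf x\rangle - b_i)\cdot(\cdots)$. For $d=2$ this top part is, up to a fixed nonzero scalar, the square $\langle \mathbf a_i,\mathbf x\rangle^2$, so the entries are the monomials $a_{ij}a_{ik}$ (with $j\le k$), possibly multiplied by fixed nonzero constants. Hence for any $\boldsymbol\phi\in\mathbb F_q^N$ the $i$-th entry of $\mathbf C\boldsymbol\phi$ equals $Q_{\boldsymbol\phi}(\mathbf a_i)$. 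Here $Q_{\boldsymbol\phi}$ is the quadratic form whose monomial coefficients are the entries of $\boldsymbol\phi$, after absorbing the fixed scalings; this is a linear bijection $\boldsymbol\phi\mapsto Q_{\boldsymbol\phi}$. Since $q$ is odd, quadratic forms in $n$ variables correspond bijectively to symmetric $n\times n$ matrices, and the rank of $Q_{\boldsymbol\phi}$ is the rank of that matrix.

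If $\operatorname{rank}\mathbf C<N$, there is a nonzero $\boldsymbol\phi$ with $\mathbf C\boldsymbol\phi=\mathbf 0$. Scaling $\boldsymbol\phi$ by any $\lambda\in\mathbb F_q^*$ preserves this, so we may range over one representative per projective class. The event $\{\operatorname{rank}\mathbf C<N\}$ is contained in the union, over nonzero $\boldsymbol\phi$ modulo scalars, of the events $\{Q_{\boldsymbol\phi}(\mathbf a_i)=0 \text{ for all } i\}$. Since the $\mathbf a_i$ are i.i.d.\ uniform, each such event has probability exactly $\rho_{\boldsymbol\phi}^{\,N}$, where $\rho_{\boldsymbol\phi}$ is the vanishing fraction of $Q_{\boldsymbol\phi}$. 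The union bound then gives
\[
\Pr[\operatorname{rank}\mathbf C<N]\le \frac{1}{q-1}\sum_{\boldsymbol\phi\neq \mathbf 0}\rho_{\boldsymbol\phi}^{\,N}.
\]

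Next we group the sum by rank. The number of nonzero $\boldsymbol\phi$ whose form has rank $r$ is $M(n,r)$ from \eqref{eq:sym_matrices}, and $r$ ranges from $1$ to $n$. By the proof of \Cref{lemma:vanish}, $\rho_{\boldsymbol\phi}$ depends only on the rank: the diagonalization there shows it equals $Z_r/q^r$. If the forms are bounded rather than given exact values, replace $\rho_{\boldsymbol\phi}$ by the rank-$r$ upper bound $\rho_r$ of \Cref{lemma:vanish}; this preserves the inequality. Summing yields $\frac{1}{q-1}\sum_{r=1}^n M(n,r)\rho_r^N$, and taking the complement gives the claim.

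The main obstacle is the first step: checking that $\boldsymbol\phi\mapsto Q_{\boldsymbol\phi}$ is a genuine bijection onto all quadratic forms. Depending on how the paper's monomial basis is written, the cross terms carry a factor $2$ relative to the squares. Oddness of $q$ makes $2$ invertible, so this rescaling is harmless and the count of forms of rank $r$ is exactly $M(n,r)$.
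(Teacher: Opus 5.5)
Your proposal is correct and is essentially the paper's proof: a nonzero kernel vector of the square matrix $\mathbf C$ yields a quadratic form vanishing at every sample, independence gives probability $\rho_{\boldsymbol\phi}^{N}$, the union bound over projective classes supplies the factor $\frac{1}{q-1}$, and grouping by rank with $M(n,r)$ together with \Cref{lemma:vanish} concludes. One claim is imprecise: $\rho_{\boldsymbol\phi}$ is not determined by the rank alone when $r$ is even, since it depends on $\eta\bigl({(-1)}^{r/2}\Delta\bigr)$; your fallback to the rank-$r$ upper bound is therefore the step actually needed, and it is exactly what the paper does, while your remark that the factor $2$ on cross terms is harmless for odd $q$ usefully makes explicit what the paper leaves implicit.
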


\begin{proof}
Assume that $\operatorname{rank}\mathbf{C}<N$. Then the rows of $\mathbf{C}$ are dependent $\iff$ $\mathbf{C}\boldsymbol\phi=\mathbf{0}$ for some $\mathbf{0}\neq\boldsymbol\phi\in\mathbb{F}_q^N$. 
Thanks to the structure of $\mathbf{C}$, the vector $\boldsymbol\phi$ is the coefficient vector of a quadratic form $Q_{\boldsymbol\phi}$ in $n$ variables, and $\mathbf C\boldsymbol\phi=\mathbf 0$ means precisely that $Q_{\boldsymbol\phi}(\mathbf a_i)=0$ for every $i\in\{1,\dots,N\}$. 

For a fixed $\boldsymbol\phi$, write $\rho_{\boldsymbol\phi}:=\Pr_{\mathbf a}[Q_{\boldsymbol\phi}(\mathbf a)=0]$. Since the samples are uniform, then
\[
\Pr\bigl[Q_{\boldsymbol\phi}(\mathbf a_i)=0,\ \forall i\in\{1,\dots,N\}\bigr]=\prod_{i=1}^{N}\Pr[Q_{\boldsymbol\phi}(\mathbf a_i)=0]=\rho_{\boldsymbol\phi}^{\,N}.
\]

A nonzero $\boldsymbol\phi$ and its $q-1$ scalar multiples $c\boldsymbol\phi$ define the same ``vanishing event'' ($Q_{c\boldsymbol\phi}=c\,Q_{\boldsymbol\phi}$ has the same zeros), so the distinct events correspond to the $\tfrac{q^N-1}{q-1}$ one-dimensional directions. The union bound over these distinct events gives
\[
\Pr[\operatorname{rank}\mathbf C<N]\ \le\ \frac{1}{q-1}\sum_{\boldsymbol\phi\neq0}\rho_{\boldsymbol\phi}^{\,N}.
\]

We now count by rank. A quadratic form of rank $r$ corresponds to a symmetric $n\times n$ matrix of rank $r$, so the number of
rank-$r$ forms equals $M(n,r)$ from~\eqref{eq:sym_matrices}.

By \Cref{lemma:vanish} each rank-$r$ form vanishes on a fraction at most $\rho_r$ of $\mathbb F_q^{n}$ (with equality for odd $r$).
Grouping the sum by rank,
\[
\sum_{\boldsymbol\phi\neq0}\rho_{\boldsymbol\phi}^{\,N} =\sum_{r=1}^{n}\ \sum_{\operatorname{rank}Q_{\boldsymbol\phi}=r}\rho_{\boldsymbol\phi}^{\,N} \ \le\ \sum_{r=1}^{n} M(n,r)\,\rho_r^{\,N},
\]
hence $\Pr[\operatorname{rank}\mathbf C<N]\le\frac{1}{q-1}\sum_{r=1}^{n} M(n,r)\,\rho_r^{\,N}$. Subtracting from $1$ gives the claim.
\end{proof}

\begin{corollary}[Success probability of \Cref{alg:alg} for $d=2$]\label{cor:easy}
Let $q$ be odd and $N=\binom{n+1}{2}$. If $n\ge 13$, then
\[
\Pr[\operatorname{rank}\mathbf C=N]\ \ge\ 1-\frac{2}{q-1}.
\]
\end{corollary}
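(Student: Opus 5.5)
Starting from \Cref{lemma:fullrank}, the task is to show that
\[
S:=\sum_{r=1}^{n} M(n,r)\,\rho_r^{\,N}\le 2 .
\]
We first put everything in terms of powers of $q$. Since each factor $q^{2i}/(q^{2i}-1)\le q^2/(q^2-1)$, the product is bounded by $\prod_{i\ge1}(1-q^{-2i})^{-1}$, which is at most a constant (below $1.2$ for $q\ge3$). Together with $\prod_{i=0}^{r-1}(q^{n-i}-1)\le q^{\sum_{i<r}(n-i)}$, this gives
\[
M(n,r)\le C\,q^{rn-r(r-1)/2}.
\]
For the vanishing fraction we use \Cref{lemma:vanish}. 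Odd $r$ gives $\rho_r=q^{-1}$ exactly. Even $r$ gives $\rho_r\le q^{-1}\bigl(1+(q-1)q^{-r/2}\bigr)\le q^{-1}(1+q^{1-r/2})$. For $r\ge4$ this is $q^{-1}(1+q^{-1})$ or better. The case $r=2$ is the bad one: there $\rho_2\le (2q-1)/q^2<2/q$.

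We then bound the summands $T_r:=M(n,r)\rho_r^N$ rank by rank.

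\textbf{Odd $r$.} Here $T_r\le C\,q^{rn-r(r-1)/2-N}$. The exponent $rn-r(r-1)/2-N$ is increasing in $r$ for $r\le n$, and it equals $0$ at $r=n$, since $n^2-n(n-1)/2=N$. So this exponent is $\le0$ everywhere, and equals $0$ only for full-rank forms. For $r\le n-1$ it is at most $-1$, and it decreases roughly quadratically as $n-r$ grows.

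\textbf{Even $r$.} An extra factor $(1+q^{1-r/2})^N$ appears. For $r$ close to $n$ we have $N q^{1-r/2}\ll1$ once $n$ is moderately large, so the factor is $e^{o(1)}$. For small even $r$ it can be huge: for $r=2$ it is up to $2^N$. In that range, however, $q^{rn-r(r-1)/2-N}\approx q^{-N+O(n)}$ with $N\approx n^2/2$. Since $2^N q^{-N+2n}\le (2/3)^N 3^{2n}$, this is tiny when $n\ge13$. The regime of small $r$ is the only place where the hypothesis $n\ge13$ is used.

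\textbf{Assembling.} The term $r=n$ contributes at most $C\cdot(1+o(1))$. All other terms sum to a geometric-type tail of order $q^{-1}$ or smaller. The result is $S\le 2$, hence $\Pr[\operatorname{rank}\mathbf C=N]\ge 1-2/(q-1)$.

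\textbf{Main obstacle.} The hard part is the constant bookkeeping, so that the total really stays below $2$ uniformly in $q\ge3$ and $n\ge13$. The dominant $r=n$ term alone already contributes about $C\approx1.15$ at $q=3$. When $n$ is even, it also carries the factor $(1+q^{1-n/2})^N$, and $N q^{1-n/2}$ must be checked to be small for $n\ge13$; for instance, $n=14$ gives $105\cdot 3^{-6}\approx0.14$.

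I would organize the argument in three pieces:
\begin{itemize}
\item bound $\ln T_r$ explicitly by $(\text{exponent})\ln q+N\ln(1+q^{1-r/2})+\ln C$;
\item split the ranks into $r\le n/2$, where the crude bound $2^N q^{-N+rn}$ suffices, and $r>n/2$, where $\ln(1+x)\le x$ suffices;
\item verify numerically that the worst case $q=3$, $n=13$ (or $n=14$) leaves total slack below $2$.
\end{itemize}
Larger $q$ and larger $n$ only improve every term monotonically.
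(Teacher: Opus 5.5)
Your termwise strategy (bound $M(n,r)\le C\,q^{rn-\binom r2}$ and each $T_r$ separately, then check the worst case $q=3$) can be made to work, but the split you propose for carrying it out fails. The factor $2^N$ arises only at $r=2$. For every even $r\ge4$ with $r\le n/2$, the bound $2^Nq^{-N+rn}$ is astronomically large. At $n=13$, $q=3$, $r=4$ it equals $2^{91}3^{-39}\approx 6\cdot10^{8}$, and even restoring the $-\binom r2$ in the exponent leaves $2^{91}3^{-45}\approx 8\cdot10^{5}$. For these ranks you must keep the actual factor $(1+q^{1-r/2})^N\le e^{Nq^{1-r/2}}\le e^{N/q}$ together with the full exponent $-\binom{n-r+1}{2}$; at $r=4$ this gives $C(4/3)^{91}3^{-45}\approx10^{-10}$. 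Conversely, $\ln(1+x)\le x$ cannot be used at $r=2$ with your relaxed $x=q^{1-r/2}=1$: it replaces $2^N$ by $e^N$ and gives $e^{91}3^{-66}\approx10^{8}$. So the split has to be $r=2$ versus all even $r\ge4$. At $r=2$ use $\rho_2^N\le(2/q)^N$, giving $T_2\lesssim C\,2^{91}3^{-66}\approx10^{-4}$; for even $r\ge4$ use the exponential bound. Relatedly, your claim that $n\ge13$ is used only for small $r$ is wrong for your own bookkeeping: the binding terms are $r=n,n-1,n-2$. At $n=12$, $q=3$, your bounds give $T_{12}\le Ce^{78/243}\approx1.57$, $T_{11}\le C/3\approx0.38$ and $T_{10}\le (C/27)e^{78/81}\approx0.11$. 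These sum to about $2.07>2$, while the $r=2$ term is already $\approx2\cdot10^{-3}$.

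With that repair the argument closes. At $q=3$ one has $C\approx1.141$. The termwise bounds then sum to about $1.74$ for $n=13$, mainly from $T_{13}\le C$ and $T_{12}\le (C/3)e^{91/243}\approx0.55$. For $n=14$ they sum to about $1.76$, mainly from $T_{14}\le Ce^{105/729}\approx1.32$ and $T_{13}\le C/3$. To pass to all $n$, do not index by $r$, since a new top term of size $\approx C$ appears whenever $n$ grows. Instead compare at fixed $k=n-r$ under $n\mapsto n+2$. The base $Cq^{-\binom{k+1}{2}}$ is unchanged, $Nq^{1-r/2}$ shrinks by the factor $\frac{(n+2)(n+3)}{q\,n(n+1)}<1$, and the newly created low ranks are negligible; monotonicity in $q$ is immediate. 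This is a genuinely different route from the paper's, which never spends the constant $C$ on odd ranks. Using $\sum_{r\ge1}M(n,r)=q^N-1$, the baseline $\rho_r=1/q$ contributes exactly $(q^N-1)/q^N<1$, and only the even-rank excess is estimated, via $(1+x)^N-1\le Nx\,e^{Nx}$ with $x=(q-1)q^{-r/2}$. This yields the closed form $\frac{q^2-1}{q^2-2}Nq^{2-n/2}$, which is monotone in $n$ and $q$, so only the base case $n=13$ must be checked, with no parity split. Your route keeps everything elementary and termwise, at the price of two parity base cases, a monotonicity argument at fixed $n-r$, and explicit numerics.
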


\begin{proof}
By \Cref{lemma:fullrank} it suffices to prove $S:=\sum_{r=1}^{n}M(n,r)\,\rho_r^{N}<2$.

For even $r$, we can write 
\begin{gather*}
    \rho_r \le \frac{1}{q}+(q-1)\,q^{-r/2-1}
    = \frac{1}{q}\left(1+(q-1)\,q^{-r/2}\right)\\
    \implies\quad \rho_r^N \le \frac{1}{q^N}{\left(1+(q-1)\,q^{-r/2}\right)}^N.
\end{gather*}
Splitting $S$ into odd and even ranks, and recalling \Cref{lemma:vanish}, we have that
\begin{align*}
  S &= \sum_{r\text{ odd}}M(n,r)\,\rho_r^{N} + \sum_{r\text{ even}}M(n,r)\,\rho_r^{N} \\
  &\le \frac{1}{q^{N}}\sum_{r=1}^{n}M(n,r) + \frac{1}{q^{N}}\sum_{r\ \mathrm{even}}M(n,r)\Bigl[{\left(1+(q-1)\,q^{-r/2}\right)}^N-1\Bigr].
\end{align*}

The number of symmetric $n\times n$ matrices over $\mathbb{F}_q$ is $q^{\binom{n+1}{2}}=q^{N}$, hence $\sum_{r=1}^{n}M(n,r)=q^{N}-1$ and the first term equals $\frac{q^{N}-1}{q^{N}}<1$. It remains to show $E<1$, where $E$ denotes the second term.

Start by bounding $M(n,r)$. For the first product term, the Weierstrass product inequality $\left(\prod_{i=1}^{n}(1-a_i)\ge 1-\sum_{i=1}^{n} a_i,\, 0\le a_i\le 1\right)$ and the geometric series $\left(\sum_{i=1}^{n}a^{i}=\frac{a(1-a^n)}{1-a}\right)$ give
\[
\prod_{i=1}^{\lfloor r/2\rfloor}\frac{q^{2i}}{q^{2i}-1}
=\prod_{i=1}^{\lfloor r/2\rfloor}\frac{1}{1-q^{-2i}}
\le \frac{1}{1-\sum_{i=1}^{\lfloor r/2\rfloor}q^{-2i}}
= \frac{1}{1-\frac{1-q^{-r}}{q^2-1}}
<\frac{q^2-1}{q^2-2}.
\]
For the second product, the finite arithmetic series $\left(\sum_{i=0}^{r-1}(n-i)=rn-\binom r2\right)$ gives,
\[
\prod_{i=0}^{r-1}\bigl(q^{n-i}-1\bigr)<\prod_{i=0}^{r-1}q^{n-i}=q^{\,rn-\binom r2}.
\]
Using the identity $rn-\binom r2-N=-\binom{n-r+1}{2}$, we have that $\frac{1}{q^{N}}q^{\,rn-\binom r2}=q^{-\binom{n-r+1}{2}}$, and
\[
E<\frac{q^2-1}{q^2-2}\sum_{r\ \mathrm{even}}q^{-\binom{n-r+1}{2}}
\Bigl[{\bigl(1+(q-1)\,q^{-r/2}\bigr)}^{N}-1\Bigr].
\]

To bound the bracketed term, we use the algebraic identity ${(1+x)}^N-1=x\sum_{j=0}^{N-1}{(1+x)}^j$. Since $x_r:= (q-1)q^{-r/2} > 0$, we can establish an upper bound by taking $N$ times the largest term in the sum:
\[
{(1+x_r)}^{N}-1 \ \le\ x_r \cdot N{(1+x_r)}^{N-1} \ \le\ Nx_r\,{(1+x_r)}^{N}.
\]
For the remaining power, the inequality $1+x\le e^x$ gives ${(1+x_r)}^N\le e^{Nx_r}$, and the identity $q^{m}=e^{m\ln q}$ lets us write $e^{Nx_r}=q^{Nx_r/\ln q}$. We bound this by $q^{y}$ for a suitable exponent $y$. Since the term $q^y$ will be multiplied by $q^{-\binom{n-r+1}{2}}$, this yields a final contribution $q^{\,y-\binom{n-r+1}{2}}$. To make the exponent linear in $r$, we use $\binom{n-r+1}{2}=\binom{n-r}{2}+(n-r)$ and take $y=\binom{n-r}{2}$, leaving the linear residual $-(n-r)$. This choice vanishes at the largest even rank, where $\binom{n-r}{2}=0$ would force $q^{Nx_r/\ln q}\le q^{0}=1$ (impossible, since $Nx_r>0$). We therefore take $y=\binom{n-r}{2}+1$.
It remains to verify that this choice satisfies $\tfrac{Nx_r}{\ln q}\le\binom{n-r}{2}+1$, which rewrites as $\binom{n+1}{2}\le\frac{q^{r/2}\ln q}{q-1}\bigl(\binom{n-r}{2}+1\bigr)$, whose right-hand side increases in $q$ and is smallest over even $r$ at $r=2$. Hence the constraint is tightest at $q=3$ and $r=2$, where it reads $\frac{2}{3\ln 3}\,n(n+1)\le(n-2)(n-3)+2$, valid for $n\ge13$.
Therefore
\[
{(1+x_r)}^N-1\ \le\ Nx_r\,q^{\binom{n-r}{2}+1},
\]
and, using $\binom{n-r+1}{2}-\binom{n-r}{2}=n-r$,
\[
q^{-\binom{n-r+1}{2}}\bigl[{(1+x_r)}^{N}-1\bigr]\ \le\ Nx_r\,q^{\,1-(n-r)}\ =\ N(q-1)\,q^{-n/2}\,q^{\,1-(n-r)/2}.
\]
Summing over the even ranks $2\le r\le n$,
\[
E\ <\ \frac{q^2-1}{q^2-2}\,N(q-1)\,q^{-n/2}\sum_{r\ \mathrm{even}}q^{\,1-(n-r)/2},
\]
the geometric sum gives
\[
\sum_{r\ \mathrm{even}}q^{\,1-(n-r)/2}
=q^{\,1-n/2}\sum_{r\ \mathrm{even}}q^{\,r/2}
=q^{\,1-n/2}\,q\,\frac{q^{\lfloor n/2\rfloor}-1}{q-1}
<\frac{q^{2}}{q-1},
\]
where the last step uses $q^{\lfloor n/2\rfloor}-1<q^{\,n/2}$ (valid for both parities of $n$).
Hence
\[
E\ <\ \frac{q^2-1}{q^2-2}\,N(q-1)\,q^{-n/2}\cdot\frac{q^{2}}{q-1}
= \frac{q^2-1}{q^2-2}\, N\,q^{\,2-n/2}.
\]
For a fixed $n$ this bound decreases in $q$ as soon as the exponent $2-\tfrac n2$ is negative, i.e.\ for $n\ge 5$. Together with $\frac{q^2-1}{q^2-2}$ decreasing in $q$, a product of positive decreasing functions is decreasing, so the expression decreases in $q$. The worst case among odd $q\ge3$ is therefore $q=3$. Setting $q=3$ leaves the function of $n$, for $n\ge5$:
\[
\frac{q^2-1}{q^2-2}\, N\,q^{\,2-n/2}\bigg|_{q=3}
=\frac{8}{7}\,3^{\,2-n/2}\,\binom{n+1}{2}=:F(n).
\]
Consecutive values satisfy
\[
\frac{F(n+1)}{F(n)}=\frac{n+2}{n}\cdot\frac{1}{\sqrt3}<1\qquad \text{for}\qquad n\ge3,
\]
so $F$ is decreasing. Since $F(12)\approx1.10>1>F(13)\approx0.74$, it first drops below
$1$ at $n=13$ and stays below for all $n\ge13$.
Hence the bound is $<1$ for all $n\ge13$ and odd $q\ge3$, so $E<1$ and
$S\le\frac{q^{N}-1}{q^{N}}+E<2$, proving the claim.
\end{proof}

\begin{example}
The bound of \Cref{cor:easy} does not depend on the dimension $n$ (for $n\ge13$), and the success probability $p_{\mathrm{succ}}$ approaches $1$ as $q$ grows. Even for small moduli \Cref{alg:alg} already succeeds with high probability:
\begin{table}[ht]
\centering
\setlength{\tabcolsep}{10pt}
\renewcommand{\arraystretch}{0.8}
\begin{tabular}{rrrrcc}
\toprule
$n$ & $d$ & $q$ & $N$ & Experimental & Bound (Cor.~\ref{cor:easy})  \\
\midrule
13 & 2 & 17    & 91  & $0.9442$ & $0.8750$ \\
13 & 2 & 31    & 91  & $0.9641$ & $0.9333$ \\
13 & 2 & 97    & 91  & $0.9871$ & $0.9792$ \\
13 & 2 & 251   & 91  & $0.9961$ & $0.9920$ \\
13 & 2 & 1021  & 91  & $0.9989$ & $0.9980$ \\
13 & 2 & 3329  & 91  & $0.9998$ & $0.9994$ \\
\midrule
20 & 2 & 17    & 210 & $0.9395$ & $0.8750$ \\
20 & 2 & 31    & 210 & $0.9691$ & $0.9333$ \\
20 & 2 & 97    & 210 & $0.9894$ & $0.9792$ \\
20 & 2 & 251   & 210 & $0.9956$ & $0.9920$ \\
20 & 2 & 1021  & 210 & $0.9993$ & $0.9980$ \\
20 & 2 & 3329  & 210 & $0.9996$ & $0.9994$ \\
\midrule
40 & 2 & 17    & 820 & $0.9427$ & $0.8750$ \\
40 & 2 & 31    & 820 & $0.9673$ & $0.9333$ \\
40 & 2 & 97    & 820 & $0.9900$ & $0.9792$ \\
40 & 2 & 251   & 820 & $0.9963$ & $0.9920$ \\
40 & 2 & 1021  & 820 & $0.9994$ & $0.9980$ \\
40 & 2 & 3329  & 820 & $0.9997$ & $0.9994$ \\
\bottomrule
\end{tabular}
\caption{Degree $d=2$, over $10^{4}$ trials per parameter set: observed success probability against the lower bound $1-2/(q-1)$ of \Cref{cor:easy}. Here $N=\binom{n+1}{2}$ is the number of samples drawn in each trial.}\label{tab:experiments-d2}
\end{table}

\end{example}

\section{Success Probability for LWE ($d>2$)}\label{sec:probabilitiesD}
\Cref{alg:alg} performs $d$ iterations, and succeeds only if all of them do. We analyse the first one separately from the subsequent ones, since the initial system exhibits a precise structure inherited from the definition of LWE which is lost afterwards. The two estimates are then combined into the overall success probability in \Cref{cor:easy2}.

\subsection{First iteration}
At this stage \Cref{alg:alg} has to diagonalize the degree-$d$ part of the system, and this succeeds exactly when the corresponding coefficient matrix has full rank, that is, when its rows are linearly independent. We therefore bound the probability of a successful diagonalization by studying the probability that these rows (seen as images of the samples under the degree-$d$ Veronese map) are linearly independent.
Recall that the degree-$d$ Veronese map sends a point to the vector of all degree-$d$ monomials evaluated at it; it is a standard construction in algebraic geometry~\cite{MR1182558}.

Let $a_1, \dots, a_N \in \F_q^n$, with $N = \binom{n+d-1}{d}$, be the uniformly random LWE sample vectors over $\F_q$, and let $2< d \le q-1$. Let
\[
    \nu_d : \F_q^n \longrightarrow \F_q^N, \qquad
    \mathbf x \longmapsto {\left(\mathbf x^{\alpha}\right)}_{|\alpha| = d},
\]
be the degree-$d$ Veronese map, the $N$ coordinates being indexed by the degree-$d$ monomials in some fixed order.

If $q > Nd+2$, i.e.\ large fields, the probability that the $N$ vectors $\nu_d(a_1),\allowbreak \dots,\allowbreak \nu_d(a_N)$ are linearly independent was already given in~\cite[Lem.~5]{EC:NMSU25} as  \[
    \Pr\left[v_1,\dots,v_N\text{ lin.\ ind.}\right]\;\ge\;
    1-\frac{Nd}{q}.
\]
However, for many practical LWE instances, the condition on the field size $q$ is not met~\cite{NISTPQC:CRYSTALS-KYBER22,FRODOKEM}, and therefore the above bound is not applicable. In this section we provide a new bound holding for any $q \ge 3d$.

We want to compute the probability that the vectors $v_\ell := \nu_d(a_\ell)$ are linearly independent, for $\ell=1,\dots,N$. We compute it iteratively: assuming $v_1, \dots, v_\ell$ are already independent, we bound the probability that $v_{\ell+1}$ is \emph{dependent} on them, i.e.\ that $v_{\ell+1} \in V_\ell:= \langle v_1, \dots, v_\ell \rangle$, $\dim V_\ell = \ell$. 
The independence probability then follows from the complementary events.
The quantity we want to bound is
\[
p_\ell := \Pr_{a_{\ell+1}}\!\bigl[\, \nu_d(a_{\ell+1}) \in V_\ell \,\bigr]
= \frac{\left|\{\, b \in \F_q^n : \nu_d(b) \in V_\ell \,\}\right|}{q^n}
= \frac{\left|\nu_d^{-1}(V_\ell)\right|}{q^n},
\]
which is a function of the configuration $a_1,\dots,a_\ell$, hence a random variable.

Since $v_1, \dots, v_\ell$ form a basis of $V_\ell$, a point $b \in \nu_d^{-1}(V_\ell)$ is characterised by the existence of a (unique) coefficient vector $\lambda = (\lambda_1, \dots, \lambda_\ell) \in \F_q^\ell$ with
\begin{equation}\label{linear_dependence}
    \nu_d(b) \;=\; \sum_{j=1}^{\ell} \lambda_j\, v_j.
\end{equation}

In what follows $\mathbb{P}^{n-1}$ denotes the projective space of dimension $n-1$, that is, the set of lines through the origin of $\F_q^{n}$. We write $\mathbb{P}^{n-1}_{\overline{\F}_q}$ when the same construction is taken over the algebraic closure.

\paragraph*{Proof idea.}
We show that
\[
    \Pr\left[v_1,\dots,v_N\text{ lin.\ ind.}\right]\;\ge\;
    1-\frac{d}{q}-3{\left(\frac{d}{q}\right)}^{2},
\]
proceeding one sample at a time: at each step we bound the probability $p_\ell$ that a fresh
uniform $b\in\F_q^n$ has $\nu_d(b)$ in the span $V_\ell$ of the previous $\ell$ images.

The starting observation is that this condition admits a geometric reformulation. Writing the dependence $\nu_d(b)=\sum_j\lambda_j v_j$ coordinate-wise and eliminating the scalars $\lambda_j$ leaves a system of $N-\ell$ homogeneous polynomials of degree $d$, namely those vanishing at $a_1,\dots,a_\ell$; and $\nu_d(b)\in V_\ell$ precisely when $b$ is a common zero of all of them. Thus, $p_\ell$ is the density in $\F_q^n$ of the common zero locus of a linear system of degree-$d$ polynomials, and the problem becomes that of counting $\F_q$-points on that locus. The key point is that this count depends only on the degree and the dimension of the locus. The B\'ezout inequality bounds the degree unconditionally, while the dimension is as specified in Assumption~\ref{ass:generic}. We recall that B\'ezout inequality says that if $V_1,\dots,V_r$ are subvarieties of $\mathbb{P}^{n-1}$ and $Z_1,\dots,Z_t$ are the irreducible components of $\cap_i V_i$, then $\sum_j \deg(Z_j) \le \prod_i \deg(V_i)$~\cite[Ex.~8.4.6]{Fulton1984}; in particular, $r$ hypersurfaces of degree $d$ cut out a locus of degree at most $d^{\,r}$.

We therefore bound $p_\ell$ in terms of the number $r_\ell$ of equations actually exploited in \Cref{prop:per-step}. Chaining these bounds over the $N$ steps yields the estimate of \Cref{thm:main}. Since $r_\ell$ decreases as $\ell$ grows, the last step dominates the whole sum, and there a single polynomial remains and no assumption is needed.

\paragraph*{Proof of success probability for $d>2$.}
We start from the trivial solutions of~\eqref{linear_dependence}, which give a lower bound on $p_\ell$ and serve as a consistency check on the upper bounds that follow. We stress that what we obtain is a lower bound on the probability, not an exact value: as $q$ grows it approaches the values observed experimentally.

\paragraph*{Lower bound.}
The $\ell$ original points defining $V_\ell$ solve~\eqref{linear_dependence}: taking $b = a_j$ gives $\nu_d(a_j) = v_j$, i.e.\ $\lambda = e_j$. Moreover $\nu_d$ is homogeneous of degree $d$, so $\nu_d(\zeta b) = \zeta^d \nu_d(b)$ for every $\zeta \in \F_q$.
Since $V_\ell$ is a subspace, every scalar multiple $\zeta a_j$ is again a solution (with $\lambda = \zeta^d e_j$).
Furthermore, the $a_j$ are nonzero and pairwise non-parallel: if $a_k = \zeta a_j$ with $k \ne j$, or $a_j = 0$, then $v_k = \zeta^d v_j$ or $v_j = 0$, contradicting $\dim V_\ell = \ell$. Hence each line defined by $a_j$ meets the others only at the origin, and
\begin{equation}\label{eq:lower_bound}
    \left|\nu_d^{-1}(V_\ell)\right| \ge 1 + \ell(q-1)
    \qquad\Longrightarrow\qquad
    p_\ell \ge \frac{1 + \ell(q-1)}{q^{n}} .
\end{equation}

\begin{remark}\label{rem:exact-small-i}
Any $k \le d+1$ pairwise non-parallel nonzero vectors $b_1,\dots,b_k$ have linearly independent Veronese images: they define $k$ distinct Veronesean points of degree $d$, and any $d+1$ of these are independent over an arbitrary field by~\cite[Thm.~1.1]{kantor2011}. 
\end{remark}

\paragraph*{On the number of solutions of~\eqref{linear_dependence}.}
Since $\nu_d(\mathbf{x})$ is the vector of all degree-$d$ monomials $m_1(\mathbf{x}), \dots, m_N(\mathbf{x})$ in $\mathbf{x}$,~\Cref{linear_dependence} can be rewritten as the system of $N$ equations $m_k(\mathbf{x}) \;=\; \sum_{j=1}^{\ell} \lambda_j\, m_k(a_j), \ k = 1, \dots, N$, that is given by
\[
\begin{aligned}
    \sum_{j=1}^{\ell} \lambda_j\, {(a_j)}_1^{d} &=  x_1^{d}\\
    \sum_{j=1}^{\ell} \lambda_j\, {(a_j)}_1^{d-1}{(a_j)}_2 &=  x_1^{d-1}x_2 \\
    &\ \vdots \\
    \sum_{j=1}^{\ell} \lambda_j\, {(a_j)}_n^{d} &=  x_n^{d}
\end{aligned}
\]
which is a system of $N$ equations and $n+\ell$ unknowns, of degree $d$ in $\mathbf{x}$ and linear in $\lambda$. For a fixed value of $\mathbf{x}$, since the $v_j$ are independent by hypothesis, the system is overdetermined and has \emph{at most one} solution $\lambda$.

Since counting the number of such systems having exactly one solution is equivalent to computing the exact value of $\left|\nu_d^{-1}(V_\ell)\right|$, our aim is now to give an upper bound on the number of solutions.

\paragraph*{Through B\'ezout Bound.}
The polynomial system just defined can be represented by its coefficient matrix $M \in \F_q^{N \times (\ell+N)}$. We partition the matrix as $M = [M_1 \mid M_2]$, where $M_1 \in \F_q^{N \times \ell}$ is the matrix of the constants associated with the scalar unknowns $\lambda_1, \dots, \lambda_\ell$, and $M_2 \in \F_q^{N \times N}$ is the matrix associated with the degree-$d$ monomials $\nu_d(x_1, \dots, x_n)$, that is, $M_2 = I_N$.
 
By applying Gaussian elimination (diagonalization) to the rows of $M$ to put the first $\ell$ columns into reduced row echelon form, we obtain a new matrix $M' = [M'_1 \mid M'_2]$. Because the $\ell$ previously sampled vectors are linearly independent, $M_1$ has rank $\ell$, so that, after a suitable permutation of the rows, $M'_1$ contains the identity matrix in its first $\ell$ rows and zeros in its bottom $N-\ell$ rows.

Consequently, the last $N-\ell$ rows of $M'$ represent a new polynomial system where the $\lambda_j$ variables have been completely eliminated. Because the monomials $\nu_d(x)$ are strictly of degree $d$, this yields a system of $N-\ell$ homogeneous equations of degree $d$ in the $n$ variables $x_1, \dots, x_n$. Its solutions are exactly the points of $\nu_d^{-1}(V_\ell)$, and the polynomials occurring in it span the space
\[
    W_\ell \;=\; \left\{\, f \text{ of degree } d \;:\; f(a_1) = \cdots = f(a_\ell) = 0 \,\right\},
\]
of dimension $N-\ell$. We write $\overline W_\ell$ for the corresponding space of polynomials with coefficients in $\overline\F_q$, of the same dimension, and denote by
\[
    B_\ell \;:=\; Z\left(\overline W_\ell\right) \;\subseteq\; \mathbb{P}^{n-1}_{\overline{\F}_q}
\]
the associated projective base locus. 

Since the elements of $W_\ell$ are homogeneous, the solution set $\nu_d^{-1}(V_\ell) \subseteq \F_q^n$ is a cone: it contains the origin, and with any nonzero $b$ it contains the whole punctured line $\F_q^{*}b$, of $q-1$ points. Its nonzero points are therefore partitioned into punctured lines, one for each point of $B_\ell(\F_q)$, whence
\[
    \left|\nu_d^{-1}(V_\ell)\right| \;=\; (q-1)\,\left|B_\ell(\F_q)\right| + 1
\]
and, dividing by $q^n$,
\begin{equation}\label{eq:cone}
    p_\ell \;=\; \frac{(q-1)\,\left|B_\ell(\F_q)\right| + 1}{q^n} .
\end{equation}
Since $B_\ell$ is cut out by $N-\ell$ forms in $\mathbb{P}^{n-1}$, each of which drops the dimension by one, its expected dimension is $(n-1)-(N-\ell)$; and since $B_\ell$ always contains the $\ell$ sampled points (see~\eqref{eq:lower_bound}) it is never empty, whence the expected dimension $\max\big(0,\, n-1-(N-\ell)\big)$. We formalize the expected behaviour as follows.

\begin{assumption}\label{ass:generic}
For every $1 \le \ell \le N-2$ there exist
\[
    r_\ell \;:=\; \min\left(n-1,\ N-\ell\right)
\]
forms in $\overline W_\ell$ whose common projective zero locus in $\mathbb{P}^{n-1}_{\overline{\F}_q}$ has the expected codimension $r_\ell$ (equivalently, forming a regular sequence -- see \Cref{sec:dreg}).
The two bounds defining $r_\ell$ have different origins: $n-1$ is the largest codimension available in $\mathbb{P}^{n-1}$, beyond which the zero locus is already $0$-dimensional and further polynomials bring nothing; $N-\ell$ is $\dim W_\ell$, i.e.\ the number of equations at our disposal. 
\end{assumption}

\paragraph*{Discussion on Assumption~\ref{ass:generic}.}
Assumption~\ref{ass:generic} is used to count the points of $B_\ell(\F_q)$, and hence to bound $\Pr\left[v_1,\dots,v_N\text{ lin.\ ind.}\right]$. The reason regularity is needed is simple. We want each added equation to cut the solution set dimension down by one, so that after
$r_\ell$ equations the set is as small as expected and B\'ezout's theorem bounds the number of points. This is precisely what a regular sequence guarantees: the Krull dimension drops by exactly one at each step, resulting in $n -1 - r_\ell$ after adding $r_\ell$ equations~\cite[Thm.~2.1.2]{Bruns_Herzog_1998}. If regularity fails, the equations may be partly redundant, the solution set can be larger than expected, and B\'ezout bound cannot be applied. We emphasize that we do not assume all of $\overline W_\ell$ to be regular: we assume only that $r_\ell = \min(n-1, N-\ell)$ of its homogenous polynomials form a regular sequence, which is a weaker requirement, since $\overline W_\ell$ has dimension $N-\ell$ and, except when $N-\ell < n-1$, we are free to choose which polynomials to use.

Note that~\cite{EC:NMSU25} also makes use of regularity, but there are important differences between our assumption and that of~\cite{EC:NMSU25}.
The first difference stands in the application target:~\cite{EC:NMSU25} needs the semi-regularity condition on the whole LWE polynomial system in order to bound the time complexity of the algorithm. \textit{We, however, apply the regularity assumption (not on the LWE polynomial system) to tackle the problem of linear independence of Veronese's embeddings, which in turn is needed to bound the success probability of our algorithm}.
We stress that no Hilbert series or degree of regularity~\cite{BardetFaugereSalvyYang2005,BardetFaugereSalvy2004} appears in our analysis.

It is also worth noting that Assumption~\ref{ass:generic} helps us to improve the probability bound compared to~\cite{EC:NMSU25}. The invertibility of the matrix representing the Veronese's embeddings can be proved without any assumption, leading to a weak probability bound as in~\cite[Lem.~5]{EC:NMSU25} (using  the Schwartz-Zippel lemma~\cite{DBLP:journals/jacm/Schwartz80,DBLP:conf/eurosam/Zippel79}).
This bound of~\cite{EC:NMSU25} is vacuous for practical parameter choice. Our assumption, supported by experiments, gives us a stronger probability bound.

Assumption~\ref{ass:generic} does not hold with probability $1$, as it can fail on degenerate configurations, where the simplest is collinearity, e.g. if a line in $\mathbb{P}^{n-1}$ contains $d+1$ of the samples.
However, our experiments indicate that it holds with high probability. Hence, an interesting open question is to formally prove that Assumption~\ref{ass:generic} holds with high probability. 

\begin{proposition}\label{prop:per-step}
Let $2< d\le q-1$ and set $r_\ell := \min(n-1,\,N-\ell)$. Under
Assumption~\ref{ass:generic}, for $1 \le \ell \le N-2$,
\[
    p_\ell \;\le\; {\left(\frac{d}{q}\right)}^{\!r_\ell} .
\]
For $\ell = N-1$, where $r_\ell = 1$, the bound $p_{N-1} \le d/q$ holds unconditionally; so does $p_\ell \le d/q$ for every $1 \le \ell \le N-1$.
\end{proposition}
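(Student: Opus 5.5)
The plan is to start from the cone identity~\eqref{eq:cone}, $p_\ell = \bigl((q-1)|B_\ell(\F_q)| + 1\bigr)/q^n$, which reduces everything to bounding the number of $\F_q$-rational points of the base locus $B_\ell$. Under Assumption~\ref{ass:generic} we choose forms $f_1,\dots,f_{r_\ell}\in\overline W_\ell$ forming a regular sequence. Their common zero locus $Y\supseteq B_\ell$ then has pure codimension $r_\ell$ in $\mathbb{P}^{n-1}_{\overline\F_q}$, i.e.\ dimension $n-1-r_\ell$. By the B\'ezout inequality recalled above, the sum of the degrees of the irreducible components of $Y$ is at most $d^{r_\ell}$.

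We then need a point count for a projective variety of pure dimension $k$ and degree $D$. The standard bound (Lang--Weil type, or the elementary one by induction via hyperplane sections) is $|Y(\F_q)|\le D\,|\mathbb{P}^k(\F_q)| = D\,(q^{k+1}-1)/(q-1)$. With $k=n-1-r_\ell$ this gives
\[
(q-1)|B_\ell(\F_q)| \le d^{r_\ell}\,(q^{\,n-r_\ell}-1).
\]
Adding $1$ and dividing by $q^n$ yields
\[
p_\ell \le \frac{d^{r_\ell}q^{\,n-r_\ell} - d^{r_\ell} + 1}{q^n} \le \left(\frac{d}{q}\right)^{r_\ell},
\]
since $d^{r_\ell}\ge1$. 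One caveat concerns the case $r_\ell=n-1$, where $Y$ is $0$-dimensional. There the count is simply $|Y(\F_q)|\le d^{n-1}$, so
\[
p_\ell\le \frac{(q-1)d^{n-1}+1}{q^n}\le \frac{d^{n-1}}{q^{n-1}},
\]
using $(q-1)d^{n-1}+1\le q\,d^{n-1}$. This agrees with the general formula.

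For the unconditional statements, a single nonzero form $f\in W_\ell$ exists whenever $\ell\le N-1$, because $\dim W_\ell=N-\ell\ge1$. Its zero set in $\F_q^n$ has at most $d\,q^{n-1}$ points by Schwartz--Zippel, since $f$ is a nonzero polynomial of degree $d<q$. Hence $p_\ell\le |Z(f)(\F_q)|/q^n\le d/q$ for every $1\le\ell\le N-1$, and in particular for $\ell=N-1$, where $r_\ell=1$. No regularity is needed because a single nonzero polynomial is trivially a regular sequence, i.e.\ a hypersurface. The hypothesis $d\le q-1$ guarantees that $f$ is not identically zero as a function, which is needed for Schwartz--Zippel.

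The main obstacle I expect is the point-count step for $Y$ in positive dimension. I need a bound $|Y(\F_q)|\le \deg(Y)\,|\mathbb{P}^{\dim Y}(\F_q)|$ valid for reducible, possibly non-reduced, equidimensional varieties, with degree measured over $\overline\F_q$. I would first try to prove it directly by a generic linear projection to $\mathbb{P}^{k}$, which is finite of degree at most $D$ on each component. The difficulty is that such a projection must be defined over $\F_q$, which can fail for small $q$. If that fails, I would cite the known elementary bound of Lachaud--Rolland / Couvreur, $|Y(\F_q)|\le \deg Y\cdot \pi_k$ with $\pi_k=|\mathbb{P}^k(\F_q)|$, which holds for equidimensional projective varieties over $\F_q$. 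I would also note that only the $\F_q$-rational points of $B_\ell$, which lie in $Y$, are counted, so passing to the larger set $Y$ is harmless.
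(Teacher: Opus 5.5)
Your proposal is correct and is essentially the paper's proof: the cone identity \eqref{eq:cone}, the B\'ezout bound $d^{r_\ell}$ on the degree of the zero locus of the selected forms, and the rational-point bound $\deg\cdot|\mathbb{P}^k(\F_q)|$. Your zero-dimensional ``caveat'' is the paper's Case~1. When $r_\ell=N-\ell$, your regular sequence is necessarily a basis of $\overline W_\ell$, so $Y=B_\ell$ is defined over $\F_q$, exactly as in the paper's Case~2, and this is what makes the rational-point bound applicable.

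The only difference is the unconditional part. There you use Schwartz--Zippel on one nonzero $f\in W_\ell$ instead of counting points on the hypersurface $Z(f)$, and both give $d/q$. Schwartz--Zippel in fact only needs $f\neq 0$ as a polynomial, so $d\le q-1$ merely makes the bound non-vacuous.
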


\begin{proof}
We bound the solutions of a \emph{selected subsystem}; considering the remaining equations would only further reduce the variety of the ideal, so the resulting bound applies to the full system. Throughout we use~\eqref{eq:cone}. 

We distinguish three cases satisfying $r_\ell = \min(n-1,\,N-\ell)$. In Case~1 there are more equations than the ambient dimension allows to exploit, $r_\ell = n-1$, and the selected subsystem has finitely many solutions. In Cases~2 and~3 all the available equations are used, $r_\ell = N-\ell$, and the zero locus is positive-dimensional, so that its $\F_q$-rational points must be counted; Case~3 is the extreme situation $\ell = N-1$ of a single equation, where Assumption~\ref{ass:generic} plays no role.

\textbf{Case 1:} $1 \le \ell \le N-n+1$. Here at least $n-1$ equations are available, so $r_\ell = n-1$. By Assumption~\ref{ass:generic} we select $n-1$ polynomials whose common projective zero locus $X \subseteq \mathbb{P}^{n-1}$ has codimension $n-1$, i.e.\ dimension zero. By the B\'ezout inequality (see, e.g.,~\cite[\S2.1]{Cesaratto_2013} or~\cite{Fulton1984}), which requires no complete intersection hypothesis, $X$ consists of at most $d^{\,n-1}$ points. Since each projective solution corresponds to a line through the origin in affine space (yielding $q-1$ non-zero points), and adding the origin itself, the number of affine solutions is bounded by $d^{\,n-1}(q-1)+1 \le d^{\,n-1}q$, whence $p_\ell \le {(d/q)}^{n-1}$.
 
\textbf{Case 2:} $N-n+1 < \ell \le N-2$. Let $j = N-\ell$, so $2 \le j \le n-2$ and $r_\ell = j$. We select all $j$ polynomials, i.e.\ a basis of $\overline W_\ell$, so that $X = B_\ell$; by Assumption~\ref{ass:generic} it has codimension $j$, hence dimension $n-1-j > 0$, and by the B\'ezout inequality the degrees of its irreducible components sum to at most $d^{\,j}$. Since $X$ is not finite, it is the number of its $\F_q$-\emph{rational} points that we must bound: a projective variety of dimension $t$ and degree $\delta$ has at most $\delta\,(q^{t}+\cdots+q+1)$ rational points~\cite[\S2.2]{Cesaratto_2013}. Applying this to each irreducible component $Z_t$ of $X$ and using $\dim Z_t \le n-1-j$,
\[
    \left|X(\F_q)\right| \;\le\; \sum_t \deg Z_t \cdot \frac{q^{\,n-j}-1}{q-1}
    \;\le\; d^{\,j}\,\frac{q^{\,n-j}-1}{q-1},
\]
so that, passing to the affine cone, the number of affine solutions is at most $d^{\,j}(q^{\,n-j}-1)+1 \le d^{\,j}q^{\,n-j}$ and $p_\ell \le {(d/q)}^{j}$.

\textbf{Case 3:} $\ell = N-1$, i.e.\ $j = 1$. We are left with exactly one equation in $n$ variables, and here Assumption~\ref{ass:generic} is vacuous: $\overline W_{N-1}$ is spanned by a single nonzero form $f$ of degree $d$, whose zero locus is a hypersurface, of dimension exactly $n-2$ and degree at most $d$ unconditionally. The bound of Case~2 with $j = 1$ then applies in the same manner and gives $p_{N-1} \le d/q$. The same argument applied to any nonzero $f \in W_\ell$ gives $p_\ell \le d/q$ for every $\ell \le N-1$, with no assumption.
\end{proof}

\paragraph*{Independence probability.}
By the chain rule of conditional probabilities:
\[
    \Pr[v_1,\dots,v_N\text{ lin.\ ind.}]
    \;=\;\prod_{\ell=0}^{N-1}\Pr\left[v_{\ell+1}\notin V_\ell \mid v_1,\dots,v_\ell \text{ lin.\ ind.}\right],
\]
with the $\ell=0$ factor $\Pr[v_1\neq 0]=1-q^{-n}$ (here $V_0=\{0\}$). Each factor is an average of $1-p_\ell$ over the admissible configurations $a_1,\dots,a_\ell$, and not a function of $\dim V_\ell$ alone; it is precisely because the bounds of Proposition~\ref{prop:per-step} are uniform over all such configurations that they may be substituted inside the product, each factor being at least $1-{(d/q)}^{r_\ell}$. Now $r_\ell=n-1$ for at most $N$ indices, while for the remaining ones $r_\ell=N-\ell=:j$ runs over $j=n-2,\dots,1$. Since $\prod_\ell(1-t_\ell)\ge 1-\sum_\ell t_\ell$, and isolating the single term $j=1$,
\begin{equation}\label{eq:overall}
    \Pr[v_1,\dots,v_N\text{ lin.\ ind.}]\;\ge\;
    1-{\frac{d}{q}}
    -\sum_{j=2}^{n-2}{\left(\frac{d}{q}\right)}^{\!j}
    -N{\left(\frac{d}{q}\right)}^{\!n-1}
    -\frac{1}{q^{n}} .
\end{equation}

The third term involves $N$, which grows rapidly with $n$. The following bound is what allows it to be absorbed into ${(d/q)}^{2}$ in the proof of Theorem~\ref{thm:main}.

\begin{lemma}\label{lem:N-vs-3}
Let $d\ge3$ and $n\ge d^2+1$. Then $N=\binom{n+d-1}{d}\le 3^{\,n-3}$.
\end{lemma}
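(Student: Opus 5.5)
The plan is to bound $N=\binom{n+d-1}{d}$ crudely by a power of $n$, and then compare that power of $n$ with $3^{n-3}$.

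\textbf{Step 1 (polynomial bound on $N$).} Write
\[
N=\prod_{k=1}^{d}\frac{n-1+k}{k}.
\]
Each factor with $k\ge 2$ is at most $(n+1)/2$, and the factor for $k=1$ equals $n$. Hence $N\le n\,\bigl((n+1)/2\bigr)^{d-1}$. A simpler choice is $N\le (n+d-1)^d/d!$, and since $d\le\sqrt{n-1}$ this gives $N\le (2n)^d/d!$. Either form is fine; I will use whichever makes Step 2 cleanest.

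\textbf{Step 2 (comparison with $3^{n-3}$).} Take logarithms, so it suffices to show
\[
\ln N\le (n-3)\ln 3.
\]
Using $d\le\sqrt{n-1}<\sqrt n$, we get $\ln N\le d\ln(2n)\le \sqrt n\,\ln(2n)$. This grows much more slowly than $(n-3)\ln 3$. I would verify the inequality as follows:
\begin{enumerate}
\item Set $g(n)=(n-3)\ln 3-\sqrt n\ln(2n)$.
\item Check $g'(n)>0$ for $n\ge 10$: indeed $g'(n)=\ln 3-\frac{\ln(2n)+2}{2\sqrt n}$, and for $n\ge 10$ the subtracted term is below $1<\ln 3$.
\item Check the base case $n=10$, which is the smallest $n$ allowed when $d=3$.
\end{enumerate}

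\textbf{Main obstacle.} The difficulty is the small cases near the threshold $n=d^2+1$. At $d=3$, $n=10$ the crude bound gives $\sqrt{10}\ln 20\approx 9.5$, while $7\ln 3\approx 7.7$, so the crude estimate fails there. The true value is $N=\binom{12}{3}=220\le 3^7=2187$, so the lemma holds but the proof must be sharper.

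My fix is to keep $d$ and $n$ separate rather than substituting $\sqrt n$. First, observe that for fixed $n$ the quantity $\binom{n+d-1}{d}$ is increasing in $d$. The constraint $n\ge d^2+1$ therefore means the worst case is the largest admissible $d=\lfloor\sqrt{n-1}\rfloor$. Then use the sharper bound
\[
N\le \Bigl(\frac{e(n+d-1)}{d}\Bigr)^{d},
\]
so that
\[
\ln N\le d\bigl(1+\ln(1+(n-1)/d)\bigr)\le d\bigl(1+\ln(d+1)\bigr)+d\ln d
\]
(roughly, since $(n-1)/d\le$ about $n/d$, with $n\approx d^2$). This is $O(d\ln d)$, while $n-3\ge d^2-2$.

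The remaining checks are:
\begin{itemize}
\item For $d\ge 4$, the inequality $d(1+2\ln(d+1))\le (d^2-2)\ln 3$ is verified directly: at $d=4$ the left side is about $16.9$ and the right side about $15.4$. That is borderline, so I would tighten the constant or also check $d=4,5$ directly.
\item For larger $n$ at fixed $d$, the ratio $N(n+1)/N(n)=(n+d)/n\le 1+d/n<3$, so it suffices to verify the base case $n=d^2+1$.
\item The base cases $d=3,4,5$ are checked numerically by direct computation: $\binom{12}{3}\le 3^7$, $\binom{20}{4}\le 3^{14}$, and $\binom{30}{5}\le 3^{23}$.
\item For $d\ge 6$, the asymptotic inequality above handles the rest comfortably.
\end{itemize}
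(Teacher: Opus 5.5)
Your final argument is correct, and its skeleton is the paper's. The ratio $\binom{n+d}{d}/\binom{n+d-1}{d}=(n+d)/n<3$ reduces everything to the base case $n=d^2+1$, i.e.\ to $\binom{d^2+d}{d}\le 3^{d^2-2}$; this is exactly the paper's inductive step. The two proofs differ only in how the base case is handled.

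\begin{itemize}
\item The paper bounds $\binom{m}{d}\le m^d/d!$ with $m=d(d+1)$. It then proves $d(d+1)\le 3^d/2$ from the first four terms of $(1+2)^d$. This gives $\binom{d^2+d}{d}\le 3^{d^2}/(2^d\,d!)\le 3^{d^2-2}$ uniformly for all $d\ge 3$, with no numerics. This estimate is also sharper than yours, since $d!\ge (d/e)^d$ gives $m^d/d!\le (em/d)^d$.
\item You instead use $(em/d)^d$ and a case split: direct computation at $d=3,4,5$, and asymptotics for $d\ge 6$. This works but is more laborious. As written, the range $d\ge 6$ is asserted (``handles the rest comfortably'') rather than checked.
\end{itemize}

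Two simplifications are available.

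First, the crude route of Steps~1--2 and the monotonicity of $N$ in $d$ for fixed $n$ play no role. It is the reduction in $n$ at fixed $d$ that makes the three computations at $n=d^2+1$ sufficient.

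Second, at $n=d^2+1$ your estimate is exactly $\ln N\le d\bigl(1+\ln(d+1)\bigr)$. The extra $d\ln d$ only enters through the ``roughly'' step. With the sharp form:
\begin{itemize}
\item At $d=3$ it already gives $7.16\le 7\ln 3\approx 7.69$.
\item From $d$ to $d+1$, the right-hand side $(d^2-2)\ln 3$ increases by $(2d+1)\ln 3$. The left-hand side increases by $1+\ln(d+2)+d\ln\frac{d+2}{d+1}<2+\ln(d+2)$.
\item So the inequality holds for every $d\ge 3$, and no case split is needed.
\end{itemize}
If you keep your weaker form $d\bigl(1+2\ln(d+1)\bigr)$, you should at least record the check at $d=6$ ($\approx 29.4\le 37.4$). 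Add the analogous increment comparison: the left side grows by less than $3+2\ln(d+2)$ per step, which is smaller than $(2d+1)\ln 3$.
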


\begin{proof}
We first record that $d(d+1)\le 3^{d}/2$ for $d\ge3$. Indeed, keeping the first four terms of $3^d={(1+2)}^d$,
\[
    3^{d}\;\ge\;1+2d+4\binom d2+8\binom d3\;=\;2d^{2}+1+\tfrac43 d(d-1)(d-2),
\]
and $2d^2+1+\tfrac43d(d-1)(d-2)\ge 2d(d+1)$ reduces to $\tfrac43d(d-1)(d-2)\ge 2d-1$, which holds for $d\ge3$ since the left-hand side is $8$ at $d=3$ and grows cubically.
 
We now prove the statement by induction on $n$. For the base case $n=d^2+1$ we must check $\binom{d^2+d}{d}\le3^{\,d^2-2}$. From $\binom md\le m^d/d!$ with $m=d(d+1)$ and the inequality just proved,
\[
    \binom{d^2+d}{d}\;\le\;\frac{{\bigl(d(d+1)\bigr)}^{d}}{d!}\;\le\;\frac{3^{\,d^2}}{2^{d}\,d!}\;\le\;3^{\,d^2-2},
\]
 where the last inequality holds since $2^{d}d!\ge 48\ge 9$ for $d\ge3$. For the inductive step, $\binom{n+d}{d}=\frac{n+d}{n}\binom{n+d-1}{d}$ and $\frac{n+d}{n}\le3$, so $\binom{n+d}{d}\le3\cdot3^{\,n-3}=3^{\,(n+1)-3}$.
\end{proof}

\begin{theorem}\label{thm:main}
Let $d>2$, $n\ge d^2+1$ and $q\ge 3d$. Under Assumption~\ref{ass:generic},
\[
    \Pr[v_1,\dots,v_N\text{ lin.\ ind.}]\;\ge\;
    1-\frac{d}{q}-3{\left(\frac{d}{q}\right)}^{2}\;=\;1-O\!\left(\frac dq\right),
\]
the leading term $d/q$ coming from the step $\ell=N-1$, where the assumption is not used.
\end{theorem}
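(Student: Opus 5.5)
We start from inequality~\eqref{eq:overall}, which already encodes the chain rule together with the per-step bounds of \Cref{prop:per-step} under Assumption~\ref{ass:generic}. Write $x := d/q$; since $q \ge 3d$ we have $0 < x \le 1/3$. The goal is to show that the three tail terms together are at most $3x^2$:
\[
T := \sum_{j=2}^{n-2} x^{j} + N x^{n-1} + q^{-n}.
\]
The leading term $x = d/q$ in~\eqref{eq:overall} comes from $j=1$, i.e.\ from the step $\ell = N-1$. \Cref{prop:per-step} proves that step without the assumption, which justifies the final sentence of the statement.

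First, the geometric sum: $\sum_{j=2}^{n-2} x^j \le x^2/(1-x) \le \tfrac32 x^2$, using $x \le 1/3$.

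Second, the term $N x^{n-1}$. Here I invoke \Cref{lem:N-vs-3}, which applies because $d \ge 3$ and $n \ge d^2+1$. It gives $N \le 3^{n-3}$, so
\[
N x^{n-1} \le 3^{n-3} x^{n-1} = x^2 (3x)^{n-3} \le x^2 \cdot 3^{-(n-3)}\cdot 3^{n-3}\cdot (3x)^{0}\ldots
\]
More carefully, $3x \le 1$, so $(3x)^{n-3} \le 1$, and hence $N x^{n-1} \le x^2$.

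This bound alone gives only $\tfrac32 x^2 + x^2 = \tfrac52 x^2$, which leaves room of $\tfrac12 x^2$ for the last term.

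Third, the term $q^{-n}$. Since $d \ge 3$ we have $1/q \le x/3$, so $q^{-n} \le (x/3)^n \le x^2/3^n$. With $n \ge 10$ this is negligible, certainly below $\tfrac12 x^2$.

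Summing the three pieces gives $T \le 3x^2$, and therefore
\[
\Pr[v_1,\dots,v_N \text{ lin.\ ind.}] \ge 1 - x - 3x^2 = 1 - O(d/q).
\]

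The main point to check is the bookkeeping behind~\eqref{eq:overall}. Specifically, we need that the number of indices with $r_\ell = n-1$ is at most $N$, and that the $\ell=0$ factor contributes only the $q^{-n}$ term; both are stated in the derivation preceding the theorem. Beyond that, the only real obstacle is absorbing $N x^{n-1}$, and \Cref{lem:N-vs-3} is exactly tailored to it through the condition $3x \le 1$, i.e.\ $q \ge 3d$.
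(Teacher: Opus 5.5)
Your proof is correct and follows the paper's argument term by term. You bound the geometric sum by $\tfrac32 (d/q)^2$, use Lemma~\ref{lem:N-vs-3} with $(3x)^{n-3}\le 1$ to get $N(d/q)^{n-1}\le (d/q)^2$, and bound $q^{-n}$ by $\tfrac12 (d/q)^2$; for that last step you use $1/q\le x/3$ where the paper uses $(d/q)^2\ge 9q^{-2}\ge 2q^{-n}$. The garbled half-inequality before ``More carefully'' is superseded by the correct line after it and should simply be deleted.
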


\begin{proof}
We estimate the three middle terms of~\Cref{eq:overall}. From $q\ge3d$ we get $d/q\le\frac13$, hence $1-d/q\ge\frac23$ and
\[
    \sum_{j=2}^{n-2}{\left(\frac dq\right)}^{\!j}\;\le\;\frac{{(d/q)}^{2}}{1-d/q}
    \;\le\;\frac32{\left(\frac dq\right)}^{2}.
\]
By Lemma~\ref{lem:N-vs-3} and $q/d\ge3$ we have $N\le3^{\,n-3}\le{(q/d)}^{\,n-3}$, so
\[
    N{\left(\frac dq\right)}^{n-1}\;\le\;{\left(\frac qd\right)}^{n-3}{\left(\frac dq\right)}^{n-1}
    \;=\;{\left(\frac dq\right)}^{2}.
\]
Finally $q^{-n}\le\frac12{(d/q)}^{2}$, since $d\ge3$ and $n\ge3$ give ${(d/q)}^{2}\ge 9q^{-2}\ge 2q^{-n}$. Hence,
\begin{align*}
    \Pr[v_1,\dots,v_N\text{ lin.\ ind.}]\;\ge\;&
    1-\frac{d}{q}
    -\sum_{j=2}^{n-2}{\left(\frac{d}{q}\right)}^{\!j}
    -N{\left(\frac{d}{q}\right)}^{\!n-1}
    -\frac{1}{q^{n}}\\
    \ge\,& 1-\frac{d}{q}
    -\frac{3}{2}{\left(\frac{d}{q}\right)}^{2}
    -{\left(\frac{d}{q}\right)}^{2}
    -\frac{1}{2}{\left(\frac{d}{q}\right)}^{2}\\
    =\;& 1-\frac{d}{q}-3{\left(\frac{d}{q}\right)}^{2}.
\end{align*}
\end{proof}

\paragraph*{Experimental results.} 
At cryptographic parameters the bound is close to $1$. Taking those of Kyber1024~\cite{NISTPQC:CRYSTALS-KYBER22}, whose module of rank $4$ over a ring of degree $256$ flattens to an LWE instance of dimension $n = 1024$ over $\F_q$ with $q = 3329$, and $d = 3$, Theorem~\ref{thm:main} gives a failure probability of at most $9.03\cdot10^{-4}$, i.e.\ a success probability of at least $0.99910$. These dimensions are far too large to test directly, so we validate the bound on the largest parameters that remain feasible.

We sampled $a_1,\dots,a_N$ uniformly from $\F_q^n$ and computed the rank of the $N\times N$ matrix with rows $\nu_d(a_1),\dots,\nu_d(a_N)$. Tables~\ref{tab:experiments-d3} and~\ref{tab:experiments-d4} report the observed frequency of full rank against the bound of Theorem~\ref{thm:main}, for $d=3$ over $10^{4}$ trials per parameter set and for $d=4$ over $10^{3}$, the latter being limited by the cost of a single trial at $N=4845$. All parameter sets satisfy $n\ge d^2+1$ and $q\ge 3d$; the first condition puts $d\ge5$ out of reach, since it would already require $N=\binom{30}{5}=142\,506$. All computations were performed using Magma V2.28-23~\cite{MR1484478} on a machine equipped with Intel$^\text{\textregistered}$ Core$^\text{\texttrademark}$ i9-14900KF CPU 32-cores and 126 GB of RAM.

\begin{table}[ht]
\centering
\setlength{\tabcolsep}{10pt}
\renewcommand{\arraystretch}{0.8}
\begin{tabular}{rrrrcc}
\toprule
$n$ & $d$ & $q$ & $N$ & Experimental & Bound (Thm.~\ref{thm:main}) \\
\midrule
10 & 3 & 17    & 220  & $0.9389$ & $0.7301$ \\
10 & 3 & 31    & 220  & $0.9648$ & $0.8751$ \\
10 & 3 & 97    & 220  & $0.9905$ & $0.9662$ \\
10 & 3 & 251   & 220  & $0.9960$ & $0.9876$ \\
10 & 3 & 1021  & 220  & $0.9995$ & $0.9970$ \\
10 & 3 & 3329  & 220  & $0.9997$ & $0.9991$ \\
\midrule
12 & 3 & 17    & 364  & $0.9424$ & $0.7301$ \\
12 & 3 & 31    & 364  & $0.9659$ & $0.8751$ \\
12 & 3 & 97    & 364  & $0.9908$ & $0.9662$ \\
12 & 3 & 251   & 364  & $0.9953$ & $0.9876$ \\
12 & 3 & 1021  & 364  & $0.9991$ & $0.9970$ \\
12 & 3 & 3329  & 364  & $0.9998$ & $0.9991$ \\
\midrule
16 & 3 & 17    & 816  & $0.9351$ & $0.7301$ \\
16 & 3 & 31    & 816  & $0.9669$ & $0.8751$ \\
16 & 3 & 97    & 816  & $0.9901$ & $0.9662$ \\
16 & 3 & 251   & 816  & $0.9964$ & $0.9876$ \\
16 & 3 & 1021  & 816  & $0.9992$ & $0.9970$ \\
16 & 3 & 3329  & 816  & $0.9997$ & $0.9991$ \\
\midrule
20 & 3 & 17    & 1540 & $0.9351$ & $0.7301$ \\
20 & 3 & 31    & 1540 & $0.9677$ & $0.8751$ \\
20 & 3 & 97    & 1540 & $0.9892$ & $0.9662$ \\
20 & 3 & 251   & 1540 & $0.9959$ & $0.9876$ \\
20 & 3 & 1021  & 1540 & $0.9992$ & $0.9970$ \\
20 & 3 & 3329  & 1540 & $0.9993$ & $0.9991$ \\
\bottomrule
\end{tabular}
\caption{Degree $d=3$, over $10^{4}$ trials per parameter set: observed frequency with which $\nu_d(a_1),\dots,\nu_d(a_N)$ are linearly independent, against the lower bound $1-d/q - 3{(d/q)}^{2}$ of Theorem~\ref{thm:main} (which holds under Assumption~\ref{ass:generic}). Here $N=\binom{n+d-1}{d}$ is the number of samples drawn in each trial. All parameter sets satisfy $n\ge d^2+1$ and $q\ge 3d$.}\label{tab:experiments-d3}
\end{table}

\begin{table}[ht]
\centering
\setlength{\tabcolsep}{10pt}
\renewcommand{\arraystretch}{0.8}
\begin{tabular}{rrrrcc}
\toprule
$n$ & $d$ & $q$ & $N$ & Experimental & Bound (Thm.~\ref{thm:main}) \\
\midrule
17 & 4 & 17    & 4845 & $0.9430$ & $0.5985$ \\
17 & 4 & 31    & 4845 & $0.9670$ & $0.8210$ \\
17 & 4 & 97    & 4845 & $0.9940$ & $0.9536$ \\
17 & 4 & 251   & 4845 & $0.9940$ & $0.9833$ \\
17 & 4 & 1021  & 4845 & $0.9980$ & $0.9960$ \\
17 & 4 & 3329  & 4845 & $1.0000$ & $0.9988$ \\
\bottomrule
\end{tabular}
\caption{Degree $d=4$ at the smallest admissible dimension $n=d^2+1=17$, over $10^{3}$ trials per parameter set. Columns as in Table~\ref{tab:experiments-d3}.}\label{tab:experiments-d4}
\end{table}

\subsection{Subsequent iterations}
After the first iteration, the polynomials undergo \texttt{Sgen} and \texttt{Sred} steps, which mix the coefficients of the newly generated S-polynomials, so that the Veronese structure of the coefficients is lost. 

To estimate the probability that the coefficient matrix of the homogeneous part of degree $i$ is diagonalizable, we therefore model it as a matrix $M_i$ with i.i.d.\ uniform rows in $\mathbb{F}_q^{m_i}$, where $m_i=\binom{n+i}{i}$. Although this is a heuristic approach to bound the success probability, our final (lower) bound (in \Cref{cor:easy2}) based on this modeling is consistent with the experimental success probability. In fact, the emperical success probability is
always greater (see \Cref{tab:d3psucc}) than our theoretical bound. We stress that this is a technique for bounding the probability, the correctness of \Cref{alg:alg} does not depend on it. A more precise (stochastic) modelling of the subsequent iterations is left as an interesting open problem.

\begin{proposition}
Defining $k_i=\binom{n+i-1}{i}$ and $m_i=\binom{n+i}{i}$, so that $k_i\le m_i$, the probability that $k_i$ uniform vectors of $\mathbb{F}_q^{m_i}$ are linearly independent is the classical
\[
\mathbb{P}_i =\prod_{j=0}^{k_i-1}\left(1-q^{\,j-m_i}\right).
\]
\end{proposition}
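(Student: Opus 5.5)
The plan is the standard sequential-sampling argument: draw the $k_i$ uniform vectors $w_1,\dots,w_{k_i}\in\mathbb{F}_q^{m_i}$ one at a time and use the chain rule of conditional probabilities, exactly as in the independence computation for the Veronese images. Write $U_j:=\langle w_1,\dots,w_j\rangle$, with $U_0=\{0\}$. Then
\[
\Pr[w_1,\dots,w_{k_i}\text{ lin.\ ind.}]=\prod_{j=0}^{k_i-1}\Pr\bigl[w_{j+1}\notin U_j \mid w_1,\dots,w_j\text{ lin.\ ind.}\bigr].
\]

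For the key step, condition on any fixed linearly independent configuration $w_1,\dots,w_j$. In that case $U_j$ is a subspace of dimension exactly $j$, so $|U_j|=q^{j}$. The vector $w_{j+1}$ is uniform on $\mathbb{F}_q^{m_i}$ and independent of the earlier vectors. Hence
\[
\Pr[w_{j+1}\in U_j]=q^{j}/q^{m_i}=q^{\,j-m_i}.
\]
This value does not depend on the particular configuration, only on $j$. Averaging over admissible configurations is therefore trivial, and the $(j+1)$-th factor equals exactly $1-q^{\,j-m_i}$. Multiplying over $j=0,\dots,k_i-1$ gives the stated product.

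It remains to check that $k_i\le m_i$, so that every factor is positive and the product is meaningful. This follows from $m_i=\frac{n+i}{n}k_i$, the same identity used in \Cref{thm:iterationCost}.

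There is essentially no obstacle here. The only point needing care is the exact uniformity of the conditional probability across configurations, which is what makes this an equality rather than a bound; this contrasts with \Cref{prop:per-step}, where only uniform upper bounds were available. As an optional remark, the product can be lower-bounded via $\prod(1-t_j)\ge 1-\sum t_j$ by
\[
1-\frac{q^{\,k_i-m_i}}{q-1},
\]
which will be convenient when this estimate is combined with \Cref{thm:main} over all iterations.
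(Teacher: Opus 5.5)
Your proposal is correct and follows essentially the same argument as the paper. The paper also adds the vectors one at a time, notes that the span of $j$ independent vectors has $q^{j}$ elements so the conditional probability of avoiding it is exactly $1-q^{\,j-m_i}$, and multiplies over $j=0,\dots,k_i-1$; your check of $k_i\le m_i$ and your geometric-series lower bound $1-\frac{q^{\,k_i-m_i}}{q-1}$ match what the paper uses immediately afterwards.
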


\begin{proof}
Let us consider one row at a time. Row $j+1$ preserves independence iff it avoids the span of the previous $j$ rows, a subspace of size $q^{\,j}$. This happens for $q^{m_i}-q^{\,j}$ of the $q^{m_i}$ vectors, i.e.\ with conditional probability $1-q^{\,j-m_i}$. Multiplying over $j=0,\dots,k_i-1$ gives the result.
\end{proof}

Using the identity $m_i-k_i=\binom{n+i-1}{i-1}$, the inequality $\prod(1-a_j)\geq 1-\sum a_j$, and summing the geometric series,
\[
\mathbb{P}_i \geq 1-\sum_{j=0}^{k_i-1}q^{\,j-m_i}
  > 1-\frac{1}{q-1}\,q^{-(m_i-k_i)}
  = 1-\frac{1}{q-1}\,q^{-\binom{n+i-1}{i-1}}.
\]

\subsection{Overall success probability}
Let $\mathbb{P}_i$ denote the probability of having full-rank at the $i$-th iteration, for $1\le i\le d$, and $F_i$ the event that iteration $i$ fails to have full rank, so $\mathbb{P}[F_i]=1-\mathbb{P}_i$. Using the union bound on the failure events gives
\[
p_{\text{succ}} = 1-\mathbb{P}\left[\bigcup_i F_i\right] \geq 1-\sum_{i=1}^{d}\bigl(1-\mathbb{P}_i\bigr) =\ \mathbb{P}_d-\sum_{i=1}^{d-1}\bigl(1-\mathbb{P}_i\bigr).
\]

\begin{proposition}[Success probability of \Cref{alg:alg} for $d>2$]\label{cor:easy2}
Let $d>2$, $n\ge d^2+1$, then under Assumption~\ref{ass:generic},
\begin{enumerate}
    \item $p_{\text{succ}} \;\ge\; 1-\dfrac{d}{q}-4{\left(\dfrac{d}{q}\right)}^{2}
    \;\ge\; 1-\dfrac{3d}{q}, \quad\text{ when } q\ge 3d$;
    \item $p_{\text{succ}} \;\ge\; 1-\dfrac{2d}{q}, \quad\text{ when } q\ge 4d$.
\end{enumerate}
\end{proposition}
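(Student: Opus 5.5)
We will start from the union-bound inequality just established,
\[
p_{\text{succ}}\;\ge\;\mathbb{P}_d-\sum_{i=1}^{d-1}\bigl(1-\mathbb{P}_i\bigr).
\]
For the first iteration ($i=d$) we use Theorem~\ref{thm:main}, which under Assumption~\ref{ass:generic} and $q\ge 3d$ gives $\mathbb{P}_d\ge 1-\frac dq-3{(d/q)}^2$. For the subsequent iterations $1\le i\le d-1$ we use the uniform-row model and the estimate derived right after the proposition on $k_i$ uniform vectors:
\[
1-\mathbb{P}_i<\frac{1}{q-1}\,q^{-\binom{n+i-1}{i-1}}.
\]
The task is therefore to show that $\sum_{i=1}^{d-1}\frac{1}{q-1}q^{-\binom{n+i-1}{i-1}}\le {(d/q)}^2$. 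This bound then combines with the first-iteration bound to give $1-\frac dq-4{(d/q)}^2$.

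To bound the tail sum, we note that $\binom{n+i-1}{i-1}\ge\binom{n}{0}=1$ at $i=1$, and that the exponent is at least $n$ for $i\ge 2$. The $i=1$ term is therefore at most $\frac{1}{q(q-1)}$, and this is the dominant contribution. The terms with $i\ge2$ together contribute at most $\frac{d}{q-1}q^{-n}$, which is tiny since $n\ge d^2+1\ge 10$. We then compare this with ${(d/q)}^2=d^2/q^2$. Because $d\ge3$ and $q\ge 3d\ge 9$, we have $\frac{1}{q(q-1)}\le\frac{9/8}{q^2}$, while $d^2/q^2\ge 9/q^2$. So the whole tail is comfortably below ${(d/q)}^2$, and a factor of about $d^2-2$ is left to spare, which easily absorbs the $q^{-n}$ terms. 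This yields the first inequality in item 1.

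For the cruder forms, set $x=d/q$. When $q\ge 3d$ we have $x\le\frac13$, so $4x^2\le\frac43 x\le 2x$, giving $1-x-4x^2\ge 1-3x=1-\frac{3d}{q}$. When $q\ge 4d$ we have $x\le\frac14$, so $4x^2\le x$, giving $1-x-4x^2\ge 1-2x=1-\frac{2d}{q}$. This proves item 2, since $q\ge 4d$ implies $q\ge 3d$ and so the first bound still applies.

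We expect the only delicate point to be bookkeeping rather than analysis. We must make sure that the union-bound expression matches the indexing: iteration $d$ is the Veronese step, and iterations $1,\dots,d-1$ are modeled as uniform. We must also check that the $i=1$ term, with exponent $\binom{n}{0}=1$, is indeed covered by the $(d/q)^2$ slack. If we wanted extra margin, we would simply bound the whole tail by $\frac{d}{q(q-1)}$ and check that $\frac{d}{q(q-1)}\le \frac{d^2}{q^2}$, which is equivalent to $q\le d(q-1)$ and holds for $d\ge 2$.
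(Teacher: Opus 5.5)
Your proposal is correct and follows the same route as the paper: the union bound, Theorem~\ref{thm:main} for the Veronese step, the uniform-row estimate for the remaining iterations absorbed into one extra $(d/q)^2$, and elementary algebra in $x=d/q$ for the cruder forms. Your bookkeeping is in fact more careful than the paper's proof. That proof rewrites the bound as $\mathbb{P}_1-\sum_{i=2}^{d}(1-\mathbb{P}_i)$ and uses only exponents $\binom{n+i-1}{i-1}\ge n+1$, so it drops the degree-$1$ term $\frac{1}{q(q-1)}$ (exponent $\binom{n}{0}=1$), which you correctly identify as the dominant one. Your comparison of the tail with $(d/q)^2$, for instance via $\frac{d}{q(q-1)}\le (d/q)^2$, is what actually justifies the constant $4$.
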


\begin{proof}
For $i\ge2$ one has $\binom{n+i-1}{i-1}\ge n+1$, and $\frac{d-1}{q-1}\le\frac dq$ since $d\le q$. Hence, by Theorem~\ref{thm:main} for the first term and by the estimate above for the others,
\begin{align*}
    p_{\text{succ}}
    & \;\ge\; \mathbb{P}_1 - \sum_{i=2}^{d}\bigl(1-\mathbb{P}_i\bigr)
      \;\ge\; 1-\frac{d}{q}-3{\left(\frac dq\right)}^{2}
      - \sum_{i=2}^{d}\frac{1}{q-1}\,q^{-\binom{n+i-1}{i-1}} \\[2pt]
    & \;\ge\; 1-\frac{d}{q}-3{\left(\frac dq\right)}^{2}
      - \frac{d-1}{q-1}\cdot\frac{1}{q^{\,n+1}}
      \;\ge\; 1-\frac{d}{q}-4{\left(\frac dq\right)}^{2},
\end{align*}
the last step since $\frac{d-1}{q-1}q^{-(n+1)}\le\frac dq\,q^{-(n+1)}\le{(\frac dq)}^{2}$. Finally $q\ge3d$ gives $4{(\frac dq)}^{2}\le\frac{4}{3}\cdot\frac dq\le\frac{2d}{q}$.

The second bound follows from the fact that $d/q +  4{(d/q)}^{2} \leq 2d/q$ $\iff$ $q\geq 4d$. 
\end{proof}

\paragraph{Experimental verification.}
The bound provided under Assumption~\ref{ass:generic} is also consistent with the experimental results, obtained on the same machine as the previous experiments. The observed success probability is taken over $10^4$ independent trials, and~\Cref{tab:d3psucc} shows that the bound of Prop.~\ref{cor:easy2} is asymptotically high.

\begin{table}[H]
\centering
\setlength{\tabcolsep}{10pt}
\renewcommand{\arraystretch}{0.8}
\begin{tabular}{rrrr|cc|cc}
\toprule
$n$ & $d$ & $q$ & $N$ & \makecell{Exp.\\(1st it.)} & \makecell{Bound\\(Thm.~\ref{thm:main})} & \makecell{Exp.\\($p_{\text{succ}}$)} & \makecell{Bound\\(Prop.~\ref{cor:easy2})}\\
\midrule
10 & 3 & 17    & 220  & $0.9389$ & $0.7301$ & $0.8314$ & $0.6471$ \\
10 & 3 & 31    & 220  & $0.9648$ & $0.8751$ & $0.9090$ & $0.8065$ \\
10 & 3 & 97    & 220  & $0.9905$ & $0.9662$ & $0.9682$ & $0.9381$ \\
10 & 3 & 251   & 220  & $0.9960$ & $0.9876$ & $0.9886$ & $0.9761$ \\
10 & 3 & 1021  & 220  & $0.9995$ & $0.9970$ & $0.9972$ & $0.9941$ \\
10 & 3 & 3329  & 220  & $0.9997$ & $0.9991$ & $0.9986$ & $0.9982$ \\
\midrule
12 & 3 & 17    & 364  & $0.9424$ & $0.7301$ & $0.8291$ & $0.6471$ \\
12 & 3 & 31    & 364  & $0.9659$ & $0.8751$ & $0.9044$ & $0.8065$ \\
12 & 3 & 97    & 364  & $0.9908$ & $0.9662$ & $0.9688$ & $0.9381$ \\
12 & 3 & 251   & 364  & $0.9953$ & $0.9876$ & $0.9873$ & $0.9761$ \\
12 & 3 & 1021  & 364  & $0.9991$ & $0.9970$ & $0.9969$ & $0.9941$ \\
12 & 3 & 3329  & 364  & $0.9998$ & $0.9991$ & $0.9992$ & $0.9982$ \\
\midrule
16 & 3 & 17    & 816  & $0.9351$ & $0.7301$ & $0.8332$ & $0.6471$ \\
16 & 3 & 31    & 816  & $0.9669$ & $0.8751$ & $0.9047$ & $0.8065$ \\
16 & 3 & 97    & 816  & $0.9901$ & $0.9662$ & $0.9677$ & $0.9381$ \\
16 & 3 & 251   & 816  & $0.9964$ & $0.9876$ & $0.9875$ & $0.9761$ \\
16 & 3 & 1021  & 816  & $0.9992$ & $0.9970$ & $0.9975$ & $0.9941$ \\
16 & 3 & 3329  & 816  & $0.9997$ & $0.9991$ & $0.9988$ & $0.9982$ \\
\midrule
20 & 3 & 17    & 1540 & $0.9351$ & $0.7301$ & $0.8317$ & $0.6471$ \\
20 & 3 & 31    & 1540 & $0.9677$ & $0.8751$ & $0.9040$ & $0.8065$ \\
20 & 3 & 97    & 1540 & $0.9892$ & $0.9662$ & $0.9681$ & $0.9381$ \\
20 & 3 & 251   & 1540 & $0.9959$ & $0.9876$ & $0.9886$ & $0.9761$ \\
20 & 3 & 1021  & 1540 & $0.9992$ & $0.9970$ & $0.9973$ & $0.9941$ \\
20 & 3 & 3329  & 1540 & $0.9993$ & $0.9991$ & $0.9991$ & $0.9982$ \\
\bottomrule
\end{tabular}
\caption{Degree $d=3$, over $10^{4}$ trials per parameter set. Exp.\ (1st it.) is the observed frequency with which the coefficient matrix of the first iteration is diagonalizable, against the bound $1-d/q-3{(d/q)}^{2}$ of Theorem~\ref{thm:main}. Exp.\ ($p_{\text{succ}}$) is the observed success probability of \Cref{alg:alg}, against the bound $1-2d/q$ of Prop.~\ref{cor:easy2}. Both hold under Assumption~\ref{ass:generic}. Here $N=\binom{n+d-1}{d}$ is the number of samples drawn in each trial, and all parameter sets satisfy $n\ge d^2+1$ and $q\ge 4d$.}\label{tab:d3psucc}
\end{table}

\section{Conclusions}\label{sec:conclusions}
Our proposed algebraic algorithm is probabilistic and shows polynomial improvement in time complexity over existing algebraic algorithms for LWE. Moreover, our algorithm also applies to the Regular Syndrome Decoding (RSD) problem: under the algebraic modelling of Briaud and {\O}ygarden~\cite{DBLP:conf/eurocrypt/BriaudO23}, RSD reduces to an overdetermined system of low-degree polynomials sharing an algebraic structure similar to the one arising from LWE (with bounded errors), a connection already exploited in~\cite{EC:NMSU25}.

Although Assumption~\ref{ass:generic} was necessary to bound the success probability for the case $d > 2$, our experimental results indicate that this assumption holds with high probability. This observation leads to two interesting open questions: whether the probability of this assumption can be precisely estimated or bounded. Alternatively, can the success probability be bounded without such assumption? 

Furthermore, our technique for generating independent S-polynomials leads to a notable observation: during the S-polynomial generation process, substantially more candidate pairs are available than are actually utilized. Thus, another open question remains: whether the sample size can be reduced while maintaining the complexity improvement over existing algorithms.

\bibliographystyle{splncs04}
\bibliography{main}

\appendix
\section{Next combination in Gray code order}\label{sec:successor}

\begin{algorithm}[t!]
\caption{Next combination in Gray order}\label{alg:successor}
\begin{algorithmic}[1]
\Statex{\hspace*{-\algorithmicindent} \textbf{Input:} $a=(a_1,\dots,a_{n})$: $n$-tuple, $n>1$, total sum $d<n$}
\Statex{\hspace*{-\algorithmicindent} \textbf{Output:} $b=(b_1,\dots,b_{n})$ next $n$-tuple combination in Gray order, or $\varnothing$}
\Statex{\hspace*{-\algorithmicindent} \textbf{Uses:} 
\texttt{call} (simulate recursive call), 
\texttt{after} (action after simulated recursion), 
\texttt{next}/\texttt{prev} (direction to scan the current component; 
$\overline{\texttt{next}}$ denotes \texttt{prev} and vice versa)}

\Function{ComputeTail}{$m$, $s$, parity}
  \State{\IfThen{$m \le 0$}{\Return{$\varnothing$}}}
  \State{\IfThenElse{parity = 0}{\Return{$(\mathbf{0}^{m-1}) || (s)$}}{\Return{$(s) || (\mathbf{0}^{m-1})$}}}
\EndFunction{}
\Function{Successor}{$a$, $n$, $d$}
\State{\IfThen{$a = \varnothing$}{\Return{$(\mathbf{0}^{n-1}, d)$}}}\Comment{First combination in Gray code order}

\State{$P_k=\sum_{j<k} a_j$, for $1\le k\le n$}\Comment{Compute prefix sums}
\State{\texttt{stack} $\gets$ $(\texttt{call},\texttt{next},0,-1)$}\Comment{Initialize stack}
\State{\texttt{last\_ret} $\gets \varnothing$}

\While{\texttt{stack} not empty}

    \State{$(type,mode,k,i)$ $\gets$ \texttt{stack}.pop()}

    \If{$type=\texttt{call}$}
        \If{$k=n-1$}
            \State{\texttt{last\_ret} $\gets \varnothing$}
        \Else{}
            \State{$i \gets a_k$}
            \State{\texttt{stack}.push$((\texttt{after},mode,k,i))$}
            \State{$mode$ $\gets$ \GetsIfThenElse{$mode$}{$i \bmod 2 = 0$}{$\overline{mode}$}}
            \State{\texttt{stack}.push$((\texttt{call},mode,k+1, -1))$ }\Comment{Simulate recursive call at level $k+1$}
        \EndIf{}

    \Else{}
        \State{\texttt{rem} $\gets d-P_k$}

        \If{\texttt{last\_ret} $\neq \varnothing$}
            \State{\texttt{last\_ret} $\gets (i)\ ||\ \texttt{last\_ret}$}
        \Else{}
            \State{\IfThenElse{$mode=\texttt{next}$}{$i \gets i+1$}{$i \gets i-1$}}
            \If{$0 \le i \le \texttt{rem}$}
                \State{\texttt{tail} $\gets$ \Call{ComputeTail}{$n-k-1$, $\texttt{rem}-i$, $\gets i \bmod 2$}}
                \State{\texttt{last\_ret} $\gets (i)\ ||\ \texttt{tail}$}
            \EndIf{}
        \EndIf{}
    \EndIf{}

\EndWhile{}

\State{\Return{\texttt{last\_ret}}}
\EndFunction{}
\end{algorithmic}
\end{algorithm}

Our construction follows the general principles underlying Gray code generation for multiset combinations as developed in~\cite{DBLP:journals/ejc/RuskeyS96,DBLP:journals/cj/Takaoka99,DBLP:journals/corr/Takaoka15a}, which we adapt to the requirements of our setting. 
In particular, \Cref{alg:successor} defines the \textsc{Successor} procedure, which, given an input tuple, returns the next tuple in the Gray code ordering (equivalently, the next vertex along the corresponding Hamiltonian path, as visually exemplified in \Cref{fig:graycode4vars}). 
This allows admissible pairs to be generated directly by traversing the Gray code ordering, while preserving the efficiency of the underlying construction. 
It has been proved that the successor of a combination can be computed in constant time, and our adaptation retains this property~\cite{DBLP:journals/cj/Takaoka99,DBLP:journals/corr/Takaoka15a}.
Since the procedure is a direct modification of the constructions presented in the cited works, we do not address its correctness here, as it follows from the same arguments.

For clarity, \Cref{alg:successor} uses the stack labels \texttt{call} and \texttt{after} to simulate recursion, and \texttt{next}/\texttt{prev} to indicate the direction of the component scan.
The procedure internally calls \textsc{ComputeTail} to generate the tail of the tuple (i.e., the zero-padded entries) according to the remaining sum and parity.  
For further details, we refer the reader to Algorithm 2 and Section 7 in~\cite{DBLP:journals/corr/Takaoka15a}.

\begin{figure}[h!]
\centering
\begin{tabular}{cc}
  \resizebox{0.95\textwidth}{!}{\begin{tikzpicture}[
        monomial/.style={rectangle, rounded corners=3pt, draw=black, fill=white, align=center, inner sep=2pt, minimum width=2.2cm, minimum height=1.1cm, font=\fontsize{10}{12}\selectfont, thick},
        pairarrow/.style={-{Stealth[length=3.5mm, width=2.5mm]}, line width=2.5pt, black},
        transition/.style={-{Stealth[length=3mm, width=2mm]}, thick, black},
        standard/.style={draw=lipicsLightGray, line width=0.8pt}
    ]
    \def\Rx{11.5} 
    \def\Ry{10} 

    \node[monomial] (N1)  at (90:\Rx\space and \Ry)   {$x_1^3$ \\ $(3,0,0,0)$};
    \node[monomial] (N2)  at (72:\Rx\space and \Ry)   {$x_1^2 x_2$ \\ $(2,1,0,0)$};
    \node[monomial] (N3)  at (54:\Rx\space and \Ry)   {$x_1 x_2^2$ \\ $(1,2,0,0)$};
    \node[monomial] (N4)  at (36:\Rx\space and \Ry)   {$x_2^3$ \\ $(0,3,0,0)$};
    \node[monomial] (N5)  at (18:\Rx\space and \Ry)   {$x_2^2 x_3$ \\ $(0,2,1,0)$};
    \node[monomial] (N6)  at (0:\Rx\space and \Ry)    {$x_1 x_2 x_3$ \\ $(1,1,1,0)$};
    \node[monomial] (N7)  at (-18:\Rx\space and \Ry)  {$x_1^2 x_3$ \\ $(2,0,1,0)$};
    \node[monomial] (N8)  at (-36:\Rx\space and \Ry)  {$x_1 x_3^2$ \\ $(1,0,2,0)$};
    \node[monomial] (N9)  at (-54:\Rx\space and \Ry)  {$x_2 x_3^2$ \\ $(0,1,2,0)$};
    \node[monomial] (N10) at (-72:\Rx\space and \Ry)  {$x_3^3$ \\ $(0,0,3,0)$};
    \node[monomial] (N11) at (-90:\Rx\space and \Ry)  {$x_3^2 x_4$ \\ $(0,0,2,1)$};
    \node[monomial] (N12) at (-108:\Rx\space and \Ry) {$x_2 x_3 x_4$ \\ $(0,1,1,1)$};
    \node[monomial] (N13) at (-126:\Rx\space and \Ry) {$x_2^2 x_4$ \\ $(0,2,0,1)$};
    \node[monomial] (N14) at (-144:\Rx\space and \Ry) {$x_1 x_2 x_4$ \\ $(1,1,0,1)$};
    \node[monomial] (N15) at (-162:\Rx\space and \Ry) {$x_1^2 x_4$ \\ $(2,0,0,1)$};
    \node[monomial] (N16) at (-180:\Rx\space and \Ry) {$x_1 x_3 x_4$ \\ $(1,0,1,1)$};
    \node[monomial] (N17) at (-198:\Rx\space and \Ry) {$x_3 x_4^2$ \\ $(0,0,1,2)$};
    \node[monomial] (N18) at (-216:\Rx\space and \Ry) {$x_2 x_4^2$ \\ $(0,1,0,2)$};
    \node[monomial] (N19) at (-234:\Rx\space and \Ry) {$x_1 x_4^2$ \\ $(1,0,0,2)$};
    \node[monomial] (N20) at (-252:\Rx\space and \Ry) {$x_4^3$ \\ $(0,0,0,3)$};

    \begin{scope}[on background layer]
        \foreach \i/\j in {1/7,1/15,2/6,2/7,2/14,2/15,3/5,3/6,3/13,3/14,4/13,5/9,5/12,5/13,6/8,6/9,6/12,6/14,6/16,7/15,7/16,8/11,8/16,9/11,9/12,11/17,12/14,12/16,12/17,12/18,13/18,14/16,14/18,14/19,15/19,16/18,16/19,17/20,18/20} {
            \draw[standard] (N\i) -- (N\j);
        }
    \end{scope}

    \foreach \src/\dst/\num in {1/2/1,3/4/2,5/6/3,7/8/4,9/10/5,11/12/6,13/14/7,15/16/8,17/18/9,19/20/10} {
        \draw[pairarrow] (N\src) to[bend left=12] node[midway, fill=white, draw=darkgray, thick, circle, inner sep=1.5pt, font=\scriptsize, text=darkgray] {\num} (N\dst);
    }

    \foreach \src/\dst in {2/3,4/5,6/7,8/9,10/11,12/13,14/15,16/17,18/19} {
        \draw[transition] (N\src) to[bend left=12] (N\dst);
    }
\end{tikzpicture}} &
\end{tabular}
\caption{Graph with monomials for $n=4$ variables and degree $d=3$. Edges represent all available connections between monomials whose greatest common divisor has degree $d-1=3$. The directed path traces the Hamiltonian path derived via the Gray code, with numbered arrows highlighting the specific admissible pairs extracted.}\label{fig:graycode4vars}
\end{figure}
\section{Solving LWE with Binary Secrets}\label{sec:binarysecret}
In the \bslwe\ setting, the secret $\mathbf{s} = (s_1,\dots,s_n)$ satisfies $s_i \in \{0,1\}$, which implies the field equations $x_i^2 - x_i = 0$ for all $1 \le i \le n$. The error distribution remains unchanged.

To solve this system, we append the set $\{x_i^2 - x_i = 0 : 1 \le i \le n\}$ to the initial polynomial system and reduce all equations modulo the ideal generated by these relations. Consequently, each polynomial will consist strictly of square-free monomials of the form $\mathbf{x}^\alpha = x_1^{\alpha_1}\cdots x_n^{\alpha_n}$, where $\alpha_i \in \{0,1\}$. 
For polynomials of degree at most $d$, the total number of such monomials is $\sum_{i=0}^d \binom{n}{i}$, where the number of monomials of degree exactly $i$ is exactly $\binom{n}{i}$.

The core algorithm remains unchanged. The complexity analysis carries over naturally by substituting the general monomial count with the square-free equivalent. Because both $\binom{n+i-1}{i}$ and $\binom{n}{i}$ share the same dominant term $\mathcal{O}(n^i/i!)$ asymptotically, the bounds remain functionally identical to the general case.
The minimum number of samples required to initialize the matrix is $\binom{n}{d}$. Under this modification, the complexities of the main procedures evaluate to:
\begin{align*}
    \mathscr{C}_{\texttt{Sgen}} &= \mathcal{O}\left((d-1)\cdot \frac{n^{2(d-1)}}{{((d-1)!)}^2} \right), \\
    \mathscr{C}_{\texttt{Sred}} &= \mathcal{O}\left((d-1)\cdot \frac{n^{1 + \omega(d-1)}}{{((d-1)!)}^\omega} \right), \\
    \mathscr{C}_{\texttt{Diag}} &= \mathcal{O}\left(d\cdot \frac{n^{d\omega}}{{(d!)}^\omega} \right).
\end{align*}
Combining these bounds as in \Cref{thm:result}, the overall complexity of the algorithm in the \bslwe\ setting is $\mathcal{O}\left( d\cdot{n^{d\omega}}/{{(d!)}^\omega}  \right)$ field operations.

Furthermore, to ensure that the initial system contains a sufficient number of distinct polynomials to form exactly $\binom{n}{d-1}$ disjoint pairs, the dimensional condition $n \ge 3d - 1$ must be satisfied.

\begin{proposition}
An initial system of $\binom{n}{d}$ distinct square-free polynomials of degree $d$ contains $\binom{n}{d-1}$ disjoint pairs if and only if $n \ge 3d - 1$.
\end{proposition}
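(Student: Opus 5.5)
The plan is to follow the two-part structure of the proof of \Cref{lemma:spoly_number}, replacing the multiset combinations by ordinary $d$-subsets of $\{1,\dots,n\}$. Square-free monomials of degree $d$ correspond to $d$-subsets. An admissible pair, in the square-free setting, consists of two such monomials whose gcd has degree $d-1$, that is, two subsets $A,B$ with $|A\cap B|=d-1$. The relevant adjacency graph is therefore the Johnson graph $J(n,d)$. The ``only if'' direction is a pure counting argument. The ``if'' direction comes from an explicit Hamiltonian path whose consecutive vertices are paired off.

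\textbf{Necessity.} Forming $\binom{n}{d-1}$ pairwise disjoint pairs requires at least $2\binom{n}{d-1}$ distinct polynomials, so we need $\binom{n}{d}\ge 2\binom{n}{d-1}$. Using $\binom{n}{d}=\frac{n-d+1}{d}\binom{n}{d-1}$, this becomes $\frac{n-d+1}{d}\ge 2$, i.e.\ $n-d+1\ge 2d$, i.e.\ $n\ge 3d-1$. Every step is an equivalence, since $\binom{n}{d-1}>0$ for $d\le n$. This gives the ``only if'' part, and it also shows that $n\ge 3d-1$ is precisely the point at which $\lfloor\binom{n}{d}/2\rfloor\ge\binom{n}{d-1}$ holds.

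\textbf{Sufficiency.} Assume $n\ge 3d-1$. I would invoke the classical revolving-door Gray code for $d$-combinations of an $n$-set (Nijenhuis--Wilf; equivalently, the square-free special case of the multiset Gray code of~\cite{DBLP:journals/ejc/RuskeyS96} with all multiplicities at most one). It lists all $\binom{n}{d}$ subsets so that consecutive subsets differ by swapping exactly one element, which is a Hamiltonian path $\ell_1,\ell_2,\dots$ in $J(n,d)$. Consecutive subsets share $d-1$ elements, so each pair $(\ell_{2k-1},\ell_{2k})$ is admissible. These pairs are pairwise disjoint by construction, and there are $\lfloor\binom{n}{d}/2\rfloor$ of them. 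By the computation above this is at least $\binom{n}{d-1}$, so we may keep exactly $\binom{n}{d-1}$ of them.

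The main obstacle is justifying the Hamiltonian path in the square-free setting, since the multiset code used in \Cref{lemma:spoly_number} also visits monomials with squares. I would handle this by citing the revolving-door code directly, or by the standard recursion: list the $d$-subsets of $[n-1]$, then append the reversed list of $(d-1)$-subsets of $[n-1]$ with $n$ adjoined. The two sublists join at an adjacent pair, and both recursive pieces are paths by induction on $n$. Degenerate cases ($d=1$, or $n=d$ where $J(n,d)$ is a single vertex) are excluded or trivial once $n\ge 3d-1$ with $d\ge1$.
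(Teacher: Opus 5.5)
Your proof is correct, and its necessity half is exactly the paper's entire proof. The paper only imposes $\binom{n}{d}\ge 2\binom{n}{d-1}$ and expands the binomials to get $n\ge 3d-1$.

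The paper never argues the converse. Its one-line proof tacitly treats ``enough polynomials'' as sufficient, which is automatic only if ``pairs'' means arbitrary pairs. If, as \texttt{Sgen} requires, the pairs must be admissible (leading monomials with a gcd of degree $d-1$), then sufficiency needs a construction. The multiset Gray code used in \Cref{lemma:spoly_number} does not supply one, because it passes through monomials with squares. Your revolving-door Gray code, a Hamiltonian path in the Johnson graph $J(n,d)$ with consecutive vertices paired off, fills exactly this omission. So your version establishes the ``if and only if'' under the meaningful reading, at the cost of one extra classical citation, whereas the paper's version is shorter but literally proves only one direction for admissible pairs.

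One point to tighten if you use the recursion instead of citing the code. The induction must carry the endpoints: the list for $(n,d)$ starts at $\{1,\dots,d\}$ and ends at $\{1,\dots,d-1,n\}$. ``Both pieces are paths'' alone does not guarantee adjacency at the junction. With the endpoints tracked, the last subset of the first piece is $\{1,\dots,d-1,n-1\}$. The first subset of the reversed, $n$-augmented second piece is $\{1,\dots,d-2,n-1,n\}$. These share $d-1$ elements, which is the adjacency you need.
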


\begin{proof}
The statement follows directly by imposing the condition that the total number of initial polynomials must be at least twice the number of required pairs, namely $\binom{n}{d} \ge 2\binom{n}{d-1}$. Expanding the binomial coefficients yields the required bound on $n$.
\end{proof}
\section{Auxiliary definitions}\label{sec:dreg}
 
\subsection{Regular sequence}

\begin{definition}[Regular sequence]\label{def:regular}
Let $\mathbb{K}$ be a field and let $R = \mathbb{K}[x_1,\dots,x_n]$. Let $f_1,\dots,f_m \in R$ be homogeneous polynomials with $\deg f_i = d_i \geq 1$, and let $I = \langle f_1,\dots,f_m \rangle$. The sequence $(f_1,\dots,f_m)$ is said to be \emph{regular} if
\begin{enumerate}
  \item[(i)] $I \neq R$; and
  \item[(ii)] for every $i \in \{1,\dots,m\}$, the polynomial $f_i$ is a non-zero-divisor
        in the quotient ring $R / \langle f_1,\dots,f_{i-1} \rangle$, that is,
        \[
          \forall\, g \in R : \quad
          g\,f_i \in \langle f_1,\dots,f_{i-1} \rangle
          \;\Longrightarrow\;
          g \in \langle f_1,\dots,f_{i-1} \rangle ,
        \]
        with the convention $\langle f_1,\dots,f_{i-1}\rangle = \{0\}$ for $i = 1$.
\end{enumerate}
\end{definition}

\subsection{Degree of regularity}
The degree of regularity was introduced by Bardet, Faug{\`e}re, and Salvy in~\cite{bardet2004etude,faugere2004complexity}.

\begin{definition}[Degree of Regularity]\label{def:dreg}
    Let $\mathbb{K}$ be a field and $F = \{f_1, \dots, f_m\}\allowbreak \subseteq \mathbb{K}[x_1, \dots, x_n]$ ($m > n$) be a system of homogenous polynomials of degree $d > 0$. The degree of regularity of a zero dimensional ideal $I = \langle F \rangle$ is defined as
    \[
    d_{\text{reg}}(I) = \min \left\{ d \ge 0 : \dim_{\mathbb{K}}\left( \{f \in I, \deg(f) = d \} \right) = \binom{n+d-1}{d} \right\}.
    \]
\end{definition}
\end{document}